\newif\ifcomments
\newcommand{\ie}{\textit{i.e.}\xspace}
\newcommand{\mktag}[1]{\mathsf{\color{teal}#1}}
\newcommand{\mkkeyword}[1]{\mathsf{\color{blue}#1}}
\newcommand{\mkfunction}[1]{\mathsf{#1}}
\newcommand{\parens}[1]{(#1)}
\newcommand{\braces}[1]{\{#1\}}
\newcommand{\bracks}[1]{[#1]}
\newcommand{\angles}[1]{\langle#1\rangle}
\newcommand{\set}[1]{\braces{#1}}
\newcommand{\rulename}[1]{\textnormal{\textsc{\small[#1]}}}
\newcommand{\defrule}[2][]{\hypertarget{rule:\ifblank{#1}{#2}{#1}}{\rulename{#2}}}
\newcommand{\refrule}[2][]{\hyperlink{rule:\ifblank{#1}{#2}{#1}}{\rulename{#2}}}
\newcommand{\Nat}{\mathbb{N}}
\newcommand{\muMALL}{$\mu\textsf{\upshape MALL}^\infty$\xspace}
\newcommand{\piLIN}{\texorpdfstring{$\pi\textsf{LIN}$}{piLIN}\xspace}
\newcommand{\CP}{$\textsf{CP}$\xspace}
\newcommand{\muCP}{$\mu\textsf{CP}$\xspace}
\newcommand{\piDILL}{$\pi\textsf{DILL}$\xspace}
\newcommand{\picalculus}{\texorpdfstring{$\pi$-calculus}{pi-calculus}\xspace}
\newcommand\eoe{\hfill\ensuremath{\blacksquare}}
\newcommand{\marginnote}[2]{%
  \ifcomments%
    $^{\color{magenta}\mathclap\star}$%
    \marginpar[
        \flushright\tiny\sf\textbf{#1}: #2
    ]{
        \flushleft\tiny\sf\textbf{#1}: #2
    }%
  \fi%
}
\newcommand{\LP}[1]{\marginnote{LP}{\color{blue}#1}}
\newcommand{\REVISION}[1]{#1}
\newcommand\Run{\rho}
\newcommand{\InTag}{\mktag{in}}
\newcommand{\LeftTag}{\InTag_1}
\newcommand{\RightTag}{\InTag_2}
\newcommand{\AddTag}{\mktag{add}}
\newcommand{\PayTag}{\mktag{pay}}
\newcommand{\PlayTag}{\mktag{play}}
\newcommand{\QuitTag}{\mktag{quit}}
\newcommand{\WinTag}{\mktag{win}}
\newcommand{\LoseTag}{\mktag{lose}}
\newcommand{\x}{x}
\newcommand{\y}{y}
\newcommand{\z}{z}
\newcommand{\out}[1]{\overline{#1}}
\newcommand{\inp}[1]{#1}
\newcommand{\parop}{\mathbin\|}
\newcommand{\Call}[2]{#1\ifblank{#2}{}{\angles{#2}}}
\newcommand{\Link}[2]{#1\leftrightarrow#2}
\newcommand{\Close}[1]{\out{#1}\parens{}}
\newcommand{\Wait}[1]{\inp{#1}\parens{}}
\newcommand{\Fail}[1]{\mkkeyword{case}\,\inp{#1}\braces{}}
\newcommand{\Select}[3][]{#2\,\out{#3}\ifblank{#1}{}{\parens{#1}}}
\newcommand{\Left}[2][]{\Select[#1]{\InTag_1}{#2}}
\newcommand{\Right}[2][]{\Select[#1]{\InTag_2}{#2}}
\newcommand{\CaseX}[4][]{\mkkeyword{case}\,\inp{#2}\ifblank{#1}{}{\parens{#1}}\braces{#3,#4}}
\newcommand{\Case}[4][]{\CaseX[#1]{#2}{#3}{#4}}
\newcommand{\Fork}[5][]{\out{#2}\parens{#3\ifblank{#1}{}{,#1}}\parens{#4\parop#5}}
\newcommand{\FreeFork}[3]{\out{#1}\angles{#2,#3}}
\renewcommand{\Join}[3][]{\inp{#2}\parens{#3\ifblank{#1}{}{,#1}}}
\newcommand{\Cut}[3]{\parens{#1}\parens{#2\parop#3}}
\newcommand{\Choice}[3][]{#2 \oplus\ifblank{#1}{}{_{#1}} #3}
\newcommand{\Rec}[2][]{\mkkeyword{rec}\,\out{#2}\ifblank{#1}{}{\parens{#1}}}
\newcommand{\Corec}[2][]{\mkkeyword{corec}\,\inp{#2}\ifblank{#1}{}{\parens{#1}}}
\newcommand{\AddressSet}{\mathcal{A}}
\newcommand{\address}{\addressA}
\newcommand{\addressA}{\alpha}
\newcommand{\addressB}{\beta}
\newcommand{\Formulas}{\Phi}
\newcommand{\Formula}{\FormulaF}
\newcommand{\FormulaF}{\varphi}
\newcommand{\FormulaG}{\psi}
\newcommand{\X}{X}
\newcommand{\Y}{Y}
\newcommand{\mkformula}[1]{#1}
\newcommand{\Bot}{\mkformula\bot}
\newcommand{\Top}{\mkformula\top}
\newcommand{\One}{\mkformula{\mathbf{1}}}
\newcommand{\Zero}{\mkformula{\mathbf{0}}}
\newcommand{\choice}{\mathbin{\mkformula\oplus}}
\newcommand{\branch}{\mathbin{\mkformula\binampersand}}
\newcommand{\tfork}{\mathbin{\mkformula\otimes}}
\newcommand{\tjoin}{\mathbin{\mkformula\bindnasrepma}}
\newcommand{\tmu}{\mkformula\mu}
\newcommand{\tnu}{\mkformula\nu}
\newcommand{\RankSet}{\Nat^{\infty}}
\newcommand{\rank}{\rankR}
\newcommand{\rankR}{r}
\newcommand{\rankS}{s}
\newcommand{\Context}{\ContextC}
\newcommand{\ContextC}{\Upgamma}
\newcommand{\ContextD}{\Updelta}
\newcommand{\qtp}[3][]{#3 \vdash\ifblank{#1}{}{_{\color{red}XXX#1}} #2}
\newcommand{\wtp}[3][]{#3 \Vdash\ifblank{#1}{}{_{\color{red}XXX#1}} #2}
\newcommand{\LinkRule}{ax}
\newcommand{\CutRule}{cut}
\newcommand{\FailRule}{$\Top$}
\newcommand{\CloseRule}{$\One$}
\newcommand{\WaitRule}{$\Bot$}
\newcommand{\ForkRule}{$\tfork$}
\newcommand{\JoinRule}{$\tjoin$}
\newcommand{\SelectRule}{$\choice$}
\newcommand{\CaseRule}{$\branch$}
\newcommand{\RecRule}{$\tmu$}
\newcommand{\CorecRule}{$\tnu$}
\newcommand{\ChoiceRule}{choice}
\newcommand{\pcong}{\preccurlyeq}
\newcommand{\red}{\rightarrow}
\newcommand{\wred}{\Rightarrow}
\newcommand{\eqdef}{\stackrel{\smash{\textsf{\upshape\tiny def}}}=}
\newcommand{\prefix}{\sqsubseteq}
\newcommand{\subf}{\preceq}
\newcommand{\tred}{\leadsto}
\newcommand{\fn}[1]{\mkfunction{fn}\parens{#1}}
\newcommand{\bn}[1]{\mkfunction{bn}\parens{#1}}
\newcommand{\subst}[2]{\braces{#1/#2}}
\newcommand{\dual}[1]{#1^{\bot}}
\newcommand{\dom}[1]{\mkfunction{dom}\parens{#1}}
\newcommand{\InfOften}[1]{\mkfunction{inf}\parens{#1}}
\newcommand{\minf}{\mkfunction{min}\,}
\newcommand{\strip}[1]{\overline{#1}}
\newcommand{\depth}[1]{\mathsf{depth}\parens{#1}}
\newcommand{\rankof}[1]{|#1|}
\newcommand{\fc}[2]{\mathsf{fc}(#1,#2)}
\theoremstyle{remark}
\newtheorem{notation}[theorem]{Notation}
\title{An Infinitary Proof Theory of Linear Logic Ensuring Fair Termination in the Linear \texorpdfstring{$\pi$}{pi}-Calculus}
\titlerunning{Fair Termination in the Linear \texorpdfstring{$\pi$}{pi}-Calculus}
\author{Luca Ciccone}{Universit\`a di Torino,
  Italy}{luca.ciccone@unito.it}{https://orcid.org/0000-0001-9515-5280}{}
\author{Luca Padovani}{Universit\`a di Torino,
  Italy}{luca.padovani@unito.it}{https://orcid.org/0000-0001-9097-1297}{}
\authorrunning{L. Ciccone and L. Padovani}
\keywords{Linear \picalculus, Linear Logic, Fixed Points, Fair Termination}
\begin{document}
\maketitle

\begin{abstract}
    Fair termination is the property of programs that may diverge ``in
    principle'' but that terminate ``in practice'', \ie under suitable fairness
    assumptions concerning the resolution of non-deterministic choices. We study
    a conservative extension of \muMALL, the infinitary proof system of the
    multiplicative additive fragment of linear logic with least and greatest
    fixed points, such that cut elimination corresponds to fair termination.
    Proof terms are processes of \piLIN, a variant of the linear $\pi$-calculus
    with (co)recursive types into which binary and (some) multiparty sessions
    can be encoded. As a result we obtain a behavioral type system for \piLIN
    (and indirectly for session calculi through their encoding into \piLIN) that
    ensures fair termination: although well-typed processes may engage in
    arbitrarily long interactions, they are \emph{fairly} guaranteed to
    eventually perform all pending actions.
\end{abstract}

\section{Introduction}
\label{sec:introduction}

The \emph{linear $\pi$-calculus}~\cite{KobayashiPierceTurner99} is a typed
refinement of Milner's $\pi$-calculus in which linear channels can be used only
once, for a one-shot communication. As it turns out, the linear $\pi$-calculus
is the fundamental model underlying a broad family of communicating processes.
In particular, all \emph{binary
sessions}~\cite{Honda93,HondaVasconcelosKubo98,HuttelEtAl16} and some
\emph{multiparty sessions}~\cite{HondaYoshidaCarbone16} can be encoded into the
linear
$\pi$-calculus~\cite{Kobayashi02b,CairesPerez16,DardhaGiachinoSangiorgi17,CicconePadovani20}.
Sessions are private communication channels linking two or more processes and
whose usage is disciplined by a \emph{session type}, a type representing a
structured communication protocol. In all session type systems, session
endpoints are \emph{linearized channels} that can be used repeatedly but in a
sequential manner. The key insight for encoding sessions into the linear
$\pi$-calculus is to represent linearized channels as chains of linear channels:
when some payload is transmitted over a linear channel, it can be paired with
another linear channel on which the subsequent interaction takes place.

In this work we propose a type system for \piLIN, a linear $\pi$-calculus with
(co)recursive types, such that well-typed processes are \emph{fairly
terminating}. Fair termination~\cite{GrumbergFrancezKatz84,Francez86} is a
liveness property stronger than \emph{lock
freedom}~\cite{Kobayashi02,Padovani14} -- \ie the property that every pending
action can be eventually performed -- but weaker than \emph{termination}. In
particular, a fairly terminating program may diverge, but all of its infinite
executions are considered ``unfair'' -- read impossible or unrealistic -- and so
they can be ignored insofar termination is concerned. A simple example of fairly
terminating program is that modeling a repeated interaction between a buyer and
a seller in which the buyer may either pay the seller and terminate or may add
an item to the shopping cart and then repeat the same behavior. In principle,
there is an execution of the program in which the buyer keeps adding items to
the shopping cart and never pays. In practice, this behavior is considered
unfair and the program terminates under the fairness assumption that the buyer
eventually pays the seller.

Our type system is a conservative extension of
\muMALL~\cite{BaeldeDoumaneSaurin16,Doumane17,BaeldeEtAl22}, the infinitary
proof system for the multiplicative additive fragment of linear logic with least
and greatest fixed points. In fact, the modifications we make to \muMALL are
remarkably small: we add one (standard) rule to deal with
\emph{non-deterministic choices}, those performed autonomously by a process, and
we relax the validity condition on \muMALL proofs so that it only considers the
``fair behaviors'' of the program it represents.
The fact that there is such a close correspondence between the typing rules of
\piLIN and the inference rules of \muMALL is not entirely surprising. After all,
there have been plenty of works investigating the relationship between
$\pi$-calculus terms and linear logic proofs, from those of
Abramsky~\cite{Abramsky94}, Bellin and Scott~\cite{BellinScott94} to those on
the interpretation of linear logic formulas as session
types~\cite{DeYoungCairesPfenningToninho12,Wadler14,CairesPfenningToninho16,LindleyMorris16,RochaCaires21,QianKavvosBirkedal21}.
%
%
Nonetheless, we think that the connection between \piLIN and \muMALL stands out
for two reasons.
First, \piLIN is conceptually simpler and more general than the session-based
calculi that can be encoded in it. In particular, all the session calculi based
on linear logic rely on \REVISION{an asymmetric interpretation of the
multiplicative connectives $\tfork$ and $\tjoin$ so that} $\FormulaF \tfork
\FormulaG$ (respectively, $\FormulaF \tjoin \FormulaG$) is the type of a session
endpoint used for sending (respectively, receiving) a message of type
$\FormulaF$ and then used according to $\FormulaG$. In our setting, the
connectives $\tfork$ and $\tjoin$ retain their symmetry since we interpret
$\FormulaF \tfork \FormulaG$ and $\FormulaF \tjoin \FormulaG$ formulas as the
output/input of pairs, in the same spirit of the original encoding of linear
logic proofs proposed by Bellin and Scott~\cite{BellinScott94}. This
interpretation gives \piLIN the ability of modeling \emph{bifurcating protocols}
of which binary sessions are just a special case.
The second reason why \piLIN and \muMALL get along has to do with the cut
elimination result for \muMALL. In finitary proof systems for linear logic, cut
elimination may proceed by removing \emph{topmost cuts}. In \muMALL there is no
such notion as a topmost cut since \muMALL proofs may be infinite. As a
consequence, the cut elimination result for \muMALL is proved by eliminating
\emph{bottom-most cuts}~\cite{BaeldeEtAl22}. This strategy fits perfectly with
the reduction semantics of {\piLIN} -- and that of any other conventional
process calculus, for that matter -- whereby reduction rules act only on the
exposed (\ie unguarded) part of processes but not behind prefixes. As a result,
the reduction semantics of \piLIN is completely ordinary, unlike other
logically-inspired process calculi that incorporate commuting conversions
\cite{Wadler14,LindleyMorris16} or other unnatural prefix
swapping~\cite{BellinScott94}.

In previous work~\cite{CicconePadovani22} we have proposed a type system
ensuring the fair termination of binary sessions. The present work achieves the
same objective using a more basic process calculus and exploiting its strong
logical connection with \muMALL. In fact, the soundness proof of our type system
piggybacks on the cut elimination property of \muMALL.
%
%
Other session typed calculi based on linear logic with fixed points have been
studied by Lindley and Morris~\cite{LindleyMorris16} and by Derakhshan and
Pfenning~\cite{DerakhshanPfenning19,Derakhshan21}.
The type systems described in these works respectively guarantee termination and
strong progress, whereas our type system guarantees fair termination which is
somewhat in between these properties. Overall, our type system seems to hit a
sweet spot: on the one hand, it is deeply rooted in linear logic and yet it can
deal with common communication patterns (like the buyer/seller interaction
described above) that admit potentially infinite executions and therefore are
out of scope of other logic-inspired type systems; on the other hand, it
guarantees lock freedom~\cite{Kobayashi02,Padovani14}, strong
progress~\cite[Theorem 12.3]{DerakhshanPfenning19} and also termination, under a
suitable fairness assumption.

The paper continues as follows.
\cref{sec:language} presents \piLIN and the fair termination property ensured by
our type system. \cref{sec:types} describes \piLIN types, which are suitably
embellished \muMALL formulas. \cref{sec:type-system} describes the inference
rules of \muMALL rephrased as typing rules for \piLIN. \cref{sec:soundness}
identifies the valid typing derivations and states the properties of well-typed
processes. \cref{sec:related} discusses related work in more detail and
\cref{sec:conclusion} presents ideas for future developments. Supplementary
material, additional examples and proofs can be found in the appendix.


\newcommand\Buyer{\textit{Buyer}}
\newcommand\Seller{\textit{Seller}}

\section{Syntax and Semantics of \piLIN}
\label{sec:language}

In this section we define syntax and reduction semantics of \piLIN, a variant of
the linear $\pi$-calculus~\cite{KobayashiPierceTurner99} in which all channels
are meant to be used for \emph{exactly} one communication. The calculus supports
(co)recursive data types built using units, pairs and disjoint sums. These data
types are known to be the essential ingredients for the encoding of sessions in
the linear
$\pi$-calculus~\cite{Kobayashi02b,DardhaGiachinoSangiorgi17,ScalasDardhaHuYoshida17}.

\begin{table}
    \caption{\label{tab:syntax} Syntax of \piLIN.}
    \begin{math}
        \displaystyle
        \begin{array}[t]{@{}r@{~}c@{~}ll@{}}
            P, Q
            & ::= & \Link\x\y & \text{link}
            \\
            & | & \Fail\x & \text{empty input}
            \\
            & | & \Wait\x.P & \text{unit input}
            \\
            & | & \Join[y]\x\z.P & \text{pair input}
            \\
            & | & \Case[y]\x{P}{Q} & \text{sum input}
            \\
            & | & \Corec[y]\x.P & \text{corecursion}
        \end{array}
        ~
        \begin{array}[t]{@{}r@{~}c@{~}lll@{}}
            & | & \Cut\x{P}{Q} & \text{composition}
            \\
            & | & \Choice{P}{Q} & \text{choice}
            \\
            & | & \Close\x & \text{unit output}
            \\
            & | & \Fork[y]\x\z{P}{Q} & \text{pair output}
            \\
            & | & \Select[y]{\InTag_i}\x.P & \text{sum output} & i\in\set{1,2}
            \\
            & | & \Rec[y]\x.P & \text{recursion}
        \end{array}
    \end{math}
\end{table}

We assume given an infinite set of \emph{channels} ranged over by $x$, $y$ and
$z$. \piLIN processes \REVISION{are coinductively generated by the productions
of the grammar} shown in \cref{tab:syntax} and their informal meaning is given
below.
A \emph{link} $\Link\x\y$ acts as a \emph{linear
forwarder}~\cite{GardnerLaneveWischik07} that forwards a single message either
from $x$ to $y$ or from $y$ to $x$. The uncertainty in the direction of the
message is resolved once the term is typed and the polarity of the types of $x$
and $y$ is fixed (\cref{sec:type-system}).
The term $\Fail\x$ represents a process that receives an empty message from $x$
and then fails. \REVISION{This form is only useful in the metatheory: the} type
system guarantees that well-typed processes never fail, since it is not possible
to send empty messages.
The term $\Close\x$ models a process that sends the unit on $x$, effectively
indicating that the interaction is terminated, whereas $\Wait\x.P$ models a
process that receives the unit from $x$ and then continues as $P$.
The term $\Fork[z]\x\y{P}{Q}$ models a process that creates two new channels $y$
and $z$, sends them in a pair on channel $x$ and then forks into two parallel
processes $P$ and $Q$. Dually, $\Join[z]\x\y.P$ models a process that receives a
pair containing two channels $y$ and $z$ from channel $x$ and then continues as
$P$.
The term $\Select[y]{\InTag_i}\x.P$ models a process that creates a new channel
$y$ and sends $\InTag_i\parens\y$ (that is, the $i$-th injection of $y$ in a
disjoint sum) on~$x$. Dually, $\Case[y]\x{P_1}{P_2}$ receives a disjoint sum
from channel $x$ and continues as either $P_1$ or $P_2$ depending on the tag
$\InTag_i$ it has been built with. For clarity, in some examples we will use
more descriptive labels such as $\AddTag$ and $\PayTag$ instead of $\LeftTag$
and $\RightTag$.
The terms $\Rec[y]\x.P$ and $\Corec[y]\x.P$ model processes that respectively
send and receive a new channel $y$ and then continue as $P$. They do not
contribute operationally to the interaction being modeled, but they indicate the
points in a program where (co)recursive types are unfolded.
A term $\Cut\x{P}{Q}$ denotes the parallel composition of two processes $P$ and
$Q$ that interact through the fresh channel $x$.
Finally, the term $\Choice{P}{Q}$ models a non-deterministic choice between two
behaviors $P$ and $Q$.

\piLIN binders are easily recognizable because they enclose channel names in
round parentheses. Note that all outputs are in fact \emph{bound outputs}. The
output of free channels can be modeled by combining bound outputs with
links~\cite{LindleyMorris16}. \REVISION{For example, the output
$\FreeFork\x\y\z$ of a pair of free channels $y$ and $z$ can be modeled as the
term $\Fork[z']\x{y'}{\Link\y{y'}}{\Link\z{z'}}$.}
We identify processes modulo renaming of bound names, we write $\fn{P}$ for the
set of channel names occurring free in $P$ and we write $\subst\y\x$ for the
capture-avoiding substitution of $y$ for the free occurrences of $x$.
\REVISION{We impose a well-formedness condition on processes so that, in every
sub-term of the form $\Fork[z]\x\y{P}{Q}$, we have $y \not\in\fn{Q}$ and
$z\not\in\fn{P}$.}

We omit any concrete syntax for representing infinite processes. Instead, we
work directly with infinite trees obtained by corecursively unfolding
contractive equations of the form $A(x_1,\dots,x_n) = P$. For each such
equation, we assume that $\fn{P} \subseteq \set{x_1,\dots,x_n}$ and we write
$\Call{A}{y_1,\dots,y_n}$ for its unfolding $P\subst{y_i}{x_i}_{1\leq i\leq n}$.
\cref{sec:infinite-processes} describes the changes for supporting a more
conventional (but slightly heavier) handling of infinite processes.

\begin{notation}
    \label{not:sessions}
    To reduce clutter due to the systematic use of bound outputs, by convention
    we omit the continuation called $y$ in \cref{tab:syntax} when its name is
    chosen to coincide with that of the channel $x$ on which $y$ is
    sent/received.
    For example, with this notation we have $\Join\x\z.P = \Join[x]\x\z.P$ and
    $\Select{\InTag_i}\x.P = \Select[x]{\InTag_i}\x.P$ and $\Case\x{P}{Q} =
    \Case[x]\x{P}{Q}$.
    \eoe
\end{notation}

A welcome side effect of adopting \cref{not:sessions} is that it gives the
illusion of working with a session calculus in which the same channel $x$ may be
used repeatedly for multiple input/output operations, while in fact $x$ is a
linear channel used for exchanging a single message along with a fresh
continuation that turns out to have the same name. In fact, if one takes this
notation as native syntax for a session calculus, its linear $\pi$-calculus
encoding~\cite{DardhaGiachinoSangiorgi17} turns out to be precisely the \piLIN
term it denotes.
Besides, the idea of rebinding the same name over and over is widespread in
session-based functional languages \cite{GayVasconcelos10,Padovani17} as it
provides a simple way of ``updating the type'' of a session endpoint after each
use.

\begin{example}
    \label{ex:buyer-seller}
    Below we model the interaction informally described in
    \cref{sec:introduction} between buyer and seller using  the syntactic sugar
    defined in \cref{not:sessions}:
    \[
        \Cut\x{\Call\Buyer\x}{\Call\Seller{x,y}}
        \text{\quad where\quad}
        \begin{array}{@{}r@{~}l@{}}
            \Buyer(x) & =
            \Rec\x.\parens{
                \Choice{
                    \Select\AddTag\x.\Call\Buyer\x
                }{
                    \Select\PayTag\x.\Close\x
                }
            }
            \\
            \Seller(x,y) & =
            \Corec\x.
            \Case\x{
                \Call\Seller{x,y}
            }{
                \Wait\x.\Close\y
            }
        \end{array}
    \]

    At each round of the interaction, the buyer decides whether to $\AddTag$ an
    item to the shopping cart and repeat the same behavior (left branch of the
    choice) or to $\PayTag$ the seller and terminate (right branch of the
    choice). The seller reacts dually and signals its termination by sending a
    unit on the channel $y$. \REVISION{As we will see in \cref{sec:type-system},
    $\Rec\x$ and $\Corec\x$ identify the points within processes where
    (co)recursive types are unfolded.}

    If we were to define $\Buyer$ using distinct bound names we would write an
    equation like
    \[
        \Buyer(x) =
        \Rec[y]\x.
        \parens{
            \Choice{
                \Select[z]\AddTag\y.\Call\Buyer\z
            }{
                \Select[z]\PayTag\y.\Close\z
            }
        }
    \]
    and similarly for $\Seller$. 
    \eoe
\end{example}

\begin{table}
    \caption{\label{tab:semantics}Structural pre-congruence and reduction semantics of \piLIN.}
    \begin{math}
        \displaystyle
        \begin{array}{@{}lr@{~}c@{~}ll@{}}
            \defrule{s-link} &
            \Link\x\y & \pcong & \Link\y\x
            \\
            \defrule{s-comm} &
            \Cut\x{P}{Q} & \pcong & \Cut\x{Q}{P}
            \\
            \defrule{s-assoc} &
            \Cut\x{P}{\Cut\y{Q}{R}} & \pcong & \Cut\y{\Cut\x{P}{Q}}{R} &
            \text{if $x\in\fn{Q}$}
            \\
            & & & & \text{and $y\not\in\fn{P}$}
            \\
            \\
            \defrule{r-link} &
            \Cut\x{\Link\x\y}{P} & \red & P\subst\y\x
            \\
            \defrule{r-unit} &
            \Cut\x{\Close\x}{\Wait\x.P} & \red & P
            \\
            \defrule{r-pair} &
            \Cut\x{\Fork[y]\x\z{P_1}{P_2}}{\Join[y]\x\z.Q} & \red & \Cut\z{P_1}{\Cut\y{P_2}{Q}}
            \\
            \defrule{r-sum} &
            \Cut\x{\Select[y]{\InTag_i}\x.P}{\Case[y]\x{P_1}{P_2}} & \red & \Cut\y{P}{P_i} & i\in\set{1,2}
            \\
            \defrule{r-rec} &
            \Cut\x{\Rec[y]\x.P}{\Corec[y]\x.Q} & \red & \Cut\y{P}{Q}
            \\
            \defrule{r-choice} &
            \Choice{P_1}{P_2} & \red & P_i & i\in\set{1,2}
            \\
            \defrule{r-cut} &
            \Cut\x{P}{R} & \red & \Cut\x{Q}{R} & \text{if $P \red Q$}
            \\
            \defrule{r-struct} &
            P & \red & Q & \text{if $P \pcong R \red Q$}
        \end{array}
    \end{math}
\end{table}

The operational semantics of the calculus is given in terms of the
\emph{structural precongruence} relation $\pcong$ and the \emph{reduction
relation} $\red$ defined in \cref{tab:semantics}.
As usual, structural precongruence relates processes that are syntactically
different but semantically equivalent. In particular, \refrule{s-link} states
that linking $x$ with $y$ is the same as linking $y$ with $x$, whereas
\refrule{s-comm} and \refrule{s-assoc} state the expected  commutativity and
associativity laws for parallel composition. Concerning the latter, the side
condition $x\in\fn{Q}$ makes sure that $Q$ (the process brought closer to $P$
when the relation is read from left to right) is indeed connected with $P$ by
means of the channel $x$.
Note that \refrule{s-assoc} only states the right-to-left associativity of
parallel composition and that the left-to-right associativity law
$\Cut\x{\Cut\y{P}{Q}}{R} \pcong \Cut\y{P}{\Cut\x{Q}{R}}$ is derivable when
$x\in\fn{Q}$.
The reduction relation is mostly unremarkable. Links are reduced with
\refrule{r-link} by effectively merging the linked channels. All the reductions
that involve the interaction between processes except \refrule{r-unit} create
new continuations channels that connect the reducts. The rule \refrule{r-choice}
models the non-deterministic choice between two behaviors. Finally,
\refrule{r-cut} and \refrule{r-struct} close reductions by cuts and structural
precongruence.
In the following we write $\wred$ for the reflexive, transitive closure of
$\red$ and we say that $P$ is \emph{stuck} if there is no $Q$ such that $P \red
Q$.

We conclude this section by formalizing fair termination. To this aim, we
introduce the notion of \emph{run} as a maximal execution of a process.
Hereafter $\omega$ stands for the lowest transfinite ordinal number and $o$
ranges over the elements of $\omega + 1$.

\begin{definition}[run]
    \label{def:run}
    A \emph{run} of $P$ is a sequence $(P_i)_{i\in o}$ where $o\in\omega+1$ and
    $P_0 = P$ and $P_i \red P_{i+1}$ for all $i+1\in o$ and either the sequence
    is infinite or it ends with a stuck process.
\end{definition}

Hereafter we use $\Run$ to range over runs.
Using runs we can define a range of termination properties for processes. In
particular, $P$ is \emph{terminating} if all of its runs are finite and $P$ is
\emph{weakly terminating} if it has a finite run. Fair termination is a
termination property somewhat in between termination and weak termination in
which only the ``fair'' runs of a process are taken into account insofar its
termination is concerned. There exist several notions of fair run corresponding
to different fairness
assumptions~\cite{Francez86,Kwiatkowska89,GlabbeekHofner19}. The notion of fair
run used here is an instance of the \emph{fair reachability of predicates} by
Queille and Sifakis~\cite{QueilleSifakis83}.

\begin{definition}[fair termination]
    \label{def:fair-termination}
    A run is \emph{fair} if it contains finitely many weakly terminating
    processes. We say that $P$ is \emph{fairly terminating} if every fair run of
    $P$ is finite.
\end{definition}

To better understand our notion of fair run, it may be useful to think of the
cases in which a run is \emph{not} fair. An \emph{unfair} run is necessarily
infinite and describes the execution of a process that always has the chance to
terminate but systematically avoids doing so. Think of the system modeled in
\cref{ex:buyer-seller}: the (only) run in which the buyer adds items to the
shopping cart forever and never pays the seller is unfair; any other run of the
system is fair and finite. So, the system in \cref{ex:buyer-seller} is not
terminating (it admits an infinite execution) but it is fairly terminating (all
the fair executions are finite).

The following result characterizes fair termination without using fair runs.
Most importantly, it provides us with the key proof principle for the soundness
of our type system.

\begin{theorem}[name=proof principle for fair termination,label=thm:fair-termination,restate=thmfairtermination]
    $P$ is fairly terminating if and only if $P \wred Q$ implies that $Q$ is
    weakly terminating.
\end{theorem}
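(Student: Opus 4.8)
The plan is to prove both directions of the biconditional, working with the definitions of fair run (Definition~\ref{def:fair-termination}) and weak termination.

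For the forward direction, I would argue by contraposition. Suppose there exists $Q$ with $P \wred Q$ such that $Q$ is \emph{not} weakly terminating, meaning every run of $Q$ is infinite. I want to build an infinite fair run of $P$, which contradicts fair termination. Since $P \wred Q$, there is a finite reduction sequence $P = P_0 \red \cdots \red P_n = Q$. Now I extend this by any run of $Q$; because $Q$ is not weakly terminating, this extension is necessarily infinite, and moreover \emph{none} of the processes reachable from $Q$ is weakly terminating (if some $Q'$ reachable from $Q$ were weakly terminating, then $Q$ would be too, by prepending the reductions $Q \wred Q'$). Hence the combined run has only finitely many weakly terminating processes --- at most the first $n{+}1$ of them --- so it is fair by Definition~\ref{def:fair-termination}. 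This fair run is infinite, so $P$ is not fairly terminating, establishing the contrapositive.

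For the reverse direction, assume that $P \wred Q$ implies $Q$ is weakly terminating, and let $\Run = (P_i)_{i \in o}$ be any fair run of $P$; I must show $\Run$ is finite. Suppose instead that $\Run$ is infinite. By fairness, $\Run$ contains only finitely many weakly terminating processes, so there is some index $k$ beyond which \emph{no} $P_i$ (for $i \geq k$) is weakly terminating. But $P \wred P_k$ holds since the run reaches $P_k$ in finitely many steps, and by hypothesis every process reachable from $P$ is weakly terminating; in particular $P_k$ is weakly terminating. This contradicts the choice of $k$. Therefore $\Run$ cannot be infinite, so $P$ is fairly terminating.

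I expect the main subtlety to lie not in either implication individually --- both are short once the definitions are unfolded --- but in handling the bookkeeping around weak termination correctly: specifically, the observation that weak termination propagates backward along reductions (if $Q \red Q'$ and $Q'$ is weakly terminating then so is $Q$, and more generally any process from which a weakly terminating process is reachable is itself weakly terminating). This monotonicity is what guarantees in the forward direction that the tail of the run past $Q$ contains \emph{no} weakly terminating process, and it is implicitly what makes the reverse direction's contradiction go through. I would state this closure property explicitly as a small observation before the main argument, since it is the single fact that both directions hinge upon.
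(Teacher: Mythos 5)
Your proof is correct and takes essentially the same approach as the paper: the paper packages your forward-direction construction into a standalone feasibility (machine-closure) lemma --- extending the finite reduction prefix with a run of $Q$, which is fair precisely because weak termination propagates backward along reductions, the closure property you rightly isolate --- and then applies it directly rather than by contraposition. Your reverse direction coincides with the paper's argument verbatim.
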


\cref{thm:fair-termination} says that any reasonable type system that ensures
weak process termination also ensures fair process termination. By
``reasonable'' we mean a type system for which type preservation (also know as
\emph{subject reduction}) holds, which is usually the case. Indeed, if we
consider a well-typed process $P$ such that $P \wred Q$, we can deduce that $Q$
is also well typed. Now, the soundness of the type system guarantees that $Q$ is
weakly terminating. By applying \cref{thm:fair-termination} from right to left,
we conclude that every well-typed $P$ is fairly terminating.

\section{Formulas and Types}
\label{sec:types}

The types of \piLIN are built using the multiplicative additive fragment of
linear logic enriched with least and greatest fixed points. In this section we
specify the syntax of types along with all the auxiliary notions that are needed
to present the type system and prove its soundness.

The syntax of pre-formulas relies on an infinite set of \emph{propositional
variables} ranged over by $X$ and $Y$ and is defined by the grammar below:
\[
    \textbf{Pre-formula}
    \quad
    \FormulaF, \FormulaG
    ::=
    \Zero \mid
    \Top \mid
    \One \mid
    \Bot \mid
    \FormulaF \choice \FormulaG \mid
    \FormulaF \branch \FormulaG \mid
    \FormulaF \tfork \FormulaG \mid
    \FormulaF \tjoin \FormulaG \mid
    \tmu\X.\Formula \mid
    \tnu\X.\Formula \mid
    X
\]

As usual, $\tmu$ and $\tnu$ are the binders of propositional variables and the
notions of free and bound variables are defined accordingly. We assume that the
body of fixed points extends as much as possible to the right of a pre-formula,
so $\tmu\X.X \choice \One$ means $\tmu\X.(X \choice \One)$ and not $(\tmu\X.X)
\choice \One$. We write $\subst\Formula\X$ for the capture-avoiding substitution
of all free occurrences of $X$ with $\Formula$. We write $\dual\Formula$ for the
\emph{dual} of $\Formula$, which is the involution defined by the equations
\[
    \begin{array}{*{6}{@{}r@{~}c@{~}l@{\quad}}@{}}
        \dual\Zero & = & \Top &
        \dual{(\FormulaF \choice \FormulaG)} & = & \dual\FormulaF \branch \dual\FormulaG &
        \dual{(\tmu\X.\Formula)} & = & \tnu\X.\dual\Formula &
        \\
        \dual\One & = & \Bot &
        \dual{(\FormulaF \tfork \FormulaG)} & = & \dual\FormulaF \tjoin \dual\FormulaG &
        \dual\X & = & X
    \end{array}
\]

A \emph{formula} is a closed pre-formula. In the context of \piLIN, formulas
describe how linear channels are used. Positive formulas (those built with the
constants $\Zero$ and $\One$, the connectives $\choice$ and $\tfork$ and the
least fixed point) indicate output operations whereas negative formulas (the
remaining forms) indicate input operations. \REVISION{The formulas $\FormulaF
\choice \FormulaG$ and $\FormulaF \branch \FormulaG$ describe a linear channel
used for sending/receiving a tagged channel of type $\FormulaF$ or $\FormulaG$.
The tag (either $\LeftTag$ or $\RightTag$) distinguishes between the two
possibilities. The formulas} $\FormulaF \tfork \FormulaG$ and $\FormulaF \tjoin
\FormulaG$ describe a linear channel used for sending/receiving a pair of
channels of type $\FormulaF$ and $\FormulaG$; $\tmu\X.\Formula$ and
$\tnu\X.\Formula$ describe a linear channel used for sending/receiving a channel
of type $\Formula\subst{\tmu\X.\Formula}\X$ or
$\Formula\subst{\tnu\X.\Formula}\X$ respectively. The constants $\One$ and
$\Bot$ describe a linear channel used for sending/receiving the unit. Finally,
the constants $\Zero$ and $\Top$ respectively describe channels on which nothing
can be sent and from which nothing can be received.

\begin{example}
    \label{ex:formulas-buyer-seller}
    Looking at the structure of $\Buyer$ and $\Seller$ in
    \cref{ex:buyer-seller}, we can make an educated guess on the type of the
    channel $x$ they use. Concerning $x$, we see that it is used according to
    $\FormulaF \eqdef \tmu\X.X \choice \One$ in $\Buyer$ and according to
    $\FormulaG \eqdef \tnu\X.X \branch \Bot$ in $\Seller$. Note that $\FormulaF =
    \dual\FormulaG$, suggesting that $\Buyer$ and $\Seller$ may interact
    correctly when connected.
    \eoe
\end{example}

We write $\subf$ for the \emph{subformula ordering}, that is the least partial
order such that $\FormulaF \subf \FormulaG$ if $\FormulaF$ is a subformula of
$\FormulaG$. For example, consider $\FormulaF \eqdef \tmu\X.\tnu\Y.X \choice Y$
and $\FormulaG \eqdef \tnu\Y.\FormulaF \choice Y$. Then we have $\FormulaF \subf
\FormulaG$ and $\FormulaG \not\subf \FormulaF$. When $\Formulas$ is a set of
formulas, we write $\minf\Formulas$ for its $\subf$-minimum formula if it
is defined.
Occasionally we let $\star$ stand for an arbitrary binary connective $\choice$,
$\tfork$, $\branch$, or $\tjoin$ and $\sigma$ stand for an arbitrary fixed point
operator $\tmu$ or $\tnu$.

When two \piLIN processes interact on some channel $x$, they may exchange other
channels on which their interaction continues. We can think of these subsequent
interactions stemming from a shared channel $x$ as being part of the same
conversation (the rich literature on \emph{sessions}~\cite{Honda93,HuttelEtAl16}
builds on this idea~\cite{DardhaGiachinoSangiorgi17}). The soundness proof of
the type system is heavily based on the proof of the cut elimination property of
\muMALL, which relies on the ability to uniquely identify the types of the
channels that belong to the same conversation and to trace conversations within
typing derivations. Following the literature on
\muMALL~\cite{BaeldeDoumaneSaurin16,Doumane17,BaeldeEtAl22}, we annotate
formulas with addresses.
We assume an infinite set $\AddressSet$ of \emph{atomic addresses},
$\dual\AddressSet$ being the set of their duals such that $\AddressSet \cap
\dual\AddressSet = \emptyset$ and $\dual{\dual\AddressSet{}} = \AddressSet$. We
use $a$ and $b$ to range over elements of $\AddressSet \cup \dual\AddressSet$.
An \emph{address} is a string $aw$ where $w \in \set{i,l,r}^*$. The dual of an
address is defined as $\dual{(aw)} = \dual{a}w$.
We use $\addressA$ and $\addressB$ to range over addresses, we write $\prefix$
for the prefix relation on addresses and we say that $\addressA$ and $\addressB$
are \emph{disjoint} if $\addressA \not\prefix \addressB$ and $\addressB
\not\prefix \addressA$.

A \emph{type} is a formula $\Formula$ paired with an address $\address$ written
$\Formula_\address$. We use $S$ and $T$ to range over types and we extend to
types several operations defined on formulas: we use logical connectives to
compose types so that $\FormulaF_{\address l} \mathbin\star \FormulaG_{\address
r} \eqdef (\FormulaF \mathbin\star \FormulaG)_\address$ and
$\sigma\X.\Formula_{\address i} \eqdef (\sigma\X.\Formula)_\address$; the dual
of a type is obtained by dualizing both its formula and its address, that is
$\dual{(\Formula_\address)} \eqdef \dual\Formula_{\dual\address}$; type
substitution preserves the address in the type within which the substitution
occurs, but forgets the address of the type being substituted, that is
$\FormulaF_\addressA\subst{\FormulaG_\addressB}\X \eqdef
\FormulaF\subst\FormulaG\X_\addressA$.

We often omit the address of constants (which represent terminated
conversations) and we write $\strip{S}$ for the formula obtained by forgetting
the address of $S$. Finally, we write $\tred$ for the least reflexive relation
on types such that $S_1 \star S_2 \tred S_i$ and $\sigma\X.S \tred
S\subst{\sigma\X.S}\X$.

\begin{example}
    \label{ex:types-buyer-seller}
    Consider once again the formula $\FormulaF \eqdef \tmu\X.X \choice \One$ that
    describes the behavior of $\Buyer$ (\cref{ex:formulas-buyer-seller}) and let
    $a$ be an arbitrary atomic address. We have
    \[
        \Formula_a
        \tred (\Formula \choice \One)_{ai}
        \tred \Formula_{ail}
        \tred (\Formula \choice \One)_{aili}
        \tred \One_{ailir}
    \]
    where the fact that the types in this sequence all share a common non-empty
    prefix `$a$' indicates that they belong to the same conversation.
    Note how the symbols $i$, $l$ and $r$ composing an address indicate the step
    taken in the syntax tree of types for making a move in this sequence: $i$
    means ``inside'', when a fixed point operator is unfolded, whereas $l$ and
    $r$ mean ``left'' and ``right'', when the corresponding branch of a
    connective is selected.
    \eoe
\end{example}

\section{Type System}
\label{sec:type-system}

We now present the typing rules for \piLIN.
As usual we introduce typing contexts to track the type of the names occurring
free in a process. A \emph{typing context} is a finite map from names to types
written $x_1 : S_1, \dots, x_n : S_n$. We use $\ContextC$ and $\ContextD$ to
range over contexts, we write $\dom\Context$ for the domain of $\Context$ and
$\ContextC,\ContextD$ for the union of $\ContextC$ and $\ContextD$ when
$\dom\ContextC \cap \dom\ContextD = \emptyset$.
Typing judgments have the form $\qtp\Context{P}$. We say that $P$ is \emph{quasi
typed} in $\Context$ if the judgment $\qtp\Context{P}$ is coinductively
derivable using the rules shown in \cref{tab:typing-rules} and described below.
For the time being we say ``quasi typed'' and not ``well typed'' because some
infinite derivations using the rules in \cref{tab:typing-rules} are invalid.
Well-typed processes are quasi-typed processes whose typing derivation satisfies
some additional validity conditions that we detail in \cref{sec:soundness}.

\begin{table}
    \caption{\label{tab:typing-rules}Typing rules for \piLIN.}
    \begin{mathpar}
        \inferrule{\mathstrut}{
            \qtp{x : \Formula_\addressA, y : \dual\Formula_\addressB}{\Link\x\y}
        }
        ~\defrule\LinkRule
        \and
        \inferrule{
            \qtp{\ContextC, x : S}{P}
            \\
            \qtp{\ContextD, x : \dual{S}}{Q}
        }{
            \qtp{\ContextC, \ContextD}{\Cut\x{P}{Q}}
        }
        ~\defrule\CutRule
        \and
        \inferrule{\mathstrut}{
            \qtp{\Context, x : \Top}{\Fail\x}
        }
        ~\defrule[fail]\FailRule
        \and
        \inferrule{
            \qtp\Context{P}
        }{
            \qtp{\Context, x : \Bot}{\Wait\x.P}
        }
        ~\defrule[wait]\WaitRule
        \and
        \inferrule{\mathstrut}{
            \qtp{x : \One}{\Close\x}
        }
        ~\defrule[close]\CloseRule
        \and
        \inferrule{
            \qtp{\Context, y : S, z : T}{P}
        }{
            \qtp{\Context, x : S \tjoin T}{\Join[z]\x\y.P}
        }
        ~\defrule[join]\JoinRule
        \and
        \inferrule{
            \qtp{\ContextC, y : S}{P}
            \\
            \qtp{\ContextD, z : T}{Q}
        }{
            \qtp{\ContextC, \ContextD, x : S \tfork T}{\Fork[z]\x\y{P}{Q}}
        }
        ~\defrule[fork]\ForkRule
        \and
        \inferrule{
            \qtp{\Context, y : S}{P}
            \\
            \qtp{\Context, y : T}{Q}
        }{
            \qtp{\Context, x : S \branch T}{\Case[y]\x{P}{Q}}
        }
        ~\defrule[case]\CaseRule
        \and
        \inferrule{
            \qtp{\Context, y : S_i}{P}
        }{
            \qtp{\Context, x : S_1 \choice S_2}{\Select[y]{\InTag_i}\x.P}
        }
        ~\defrule[select]\SelectRule
        \and
        \inferrule{
            \qtp{\Context, y : S\subst{\tnu\X.S}X}{P}
        }{
            \qtp{\Context, x : \tnu\X.S}{\Corec[y]\x.P}
        }
        ~\defrule[corec]\CorecRule
        \and
        \inferrule{
            \qtp{\Context, y : S\subst{\tmu\X.S}X}{P}
        }{
            \qtp{\Context, x : \tmu\X.S}{\Rec[y]\x.P}
        }
        ~\defrule[rec]\RecRule
        \and
        \inferrule{
            \qtp\Context{P}
            \\
            \qtp\Context{Q}
        }{
            \qtp\Context{\Choice{P}{Q}}
        }
        ~\defrule\ChoiceRule
    \end{mathpar}
\end{table}

Rule \refrule{\LinkRule} states that a link $\Link\x\y$ is quasi typed provided
that $x$ and $y$ have dual types, but not necessarily dual addresses.
Rule \refrule{\CutRule} states that a process composition $\Cut\x{P}{Q}$ is
quasi typed provided that $P$ and $Q$ use the linear channel $x$ in
complementary ways, one according to some type $S$ and the other according to
the dual type $\dual{S}$. Note that the context $\ContextC,\ContextD$ in the
conclusion of the rule is defined provided that $\ContextC$ and $\ContextD$ have
disjoint domains. This condition entails that $P$ and $Q$ do not share any
channel other than $x$ ensuring that the interation between $P$ and $Q$ may
proceed without deadlocks.
%
%
Rule \refrule[fail]{\FailRule} deals with a process that receives an empty
message from channel $x$. Since this cannot happen, we allow the process to be
quasi typed in any context.
%
%
Rules \refrule[close]{\CloseRule} and \refrule[wait]{\WaitRule} concern the
exchange of units. The former rule states that $\Close\x$ is quasi typed in a
context that contains a single association for the $x$ channel with type $\One$,
whereas the latter rule removes $x$ from the context (hence from the set of
usable channels), requiring the continuation process to be quasi typed in the
remaining context.
%
%
Rules \refrule[fork]{\ForkRule} and \refrule[join]{\JoinRule} concern the
exchange of pairs. The former rule requires the two forked processes $P$ and $Q$
to be quasi typed in the respective contexts enriched with associations for the
continuation channels $y$ and $z$ being created. The latter rule requires the
continuation process to be quasi typed in a context enriched with the channels
extracted from the received pair.
%
%
Rules \refrule[select]{\SelectRule} and \refrule[case]{\CaseRule} deal with the
exchange of disjoint sums in the expected way.
%
%
Rules \refrule[rec]{\RecRule} and \refrule[corec]{\CorecRule} deal with fixed
point operators by unfolding the (co)recursive type of the channel $x$. As in
\muMALL, the two rules have exactly the same structure despite the fact that the
two fixed point operators being used are dual to each other. Clearly, the
behavior of least and greatest fixed points must be distinguished by some other
means, as we will see in \cref{sec:soundness} when discussing the validity of a
typing derivation.
Finally, \refrule{\ChoiceRule} deals with non-deterministic choices by requiring
that each branch of a choice must be quasi typed in exactly the same typing
context as the conclusion.
%

Besides the structural constraints imposed by the typing rules, we implicitly
require that the types in the range of all typing contexts have pairwise
disjoint addresses. This condition ensures that it is possible to uniquely trace
a communication protocol in a typing derivation: if we have two channels $x$ and
$y$ associated with two types $\FormulaF_\addressA$ and $\FormulaG_\addressB$
such that $\addressA \prefix \addressB$, then we know that $y$ is a continuation
resulting from a communication that started from $x$. In a sense, $x$ and $y$
represent different moments in the same conversation.


\begin{example}[buyer and seller]
    \label{ex:typing-buyer-seller}
    Let us show that the system described in \cref{ex:buyer-seller} is quasi
    typed. To this aim, let $\FormulaF \eqdef \tmu\X.X \choice \One$ and
    $\FormulaG \eqdef \tnu\X.X \branch \Bot$ respectively be the formulas
    describing the behavior of $\Buyer$ and $\Seller$ on the channel $x$. Note
    that $\FormulaG = \dual\FormulaF$ and let $a$ be an arbitrary atomic
    address. We derive
    \[
        \begin{prooftree}
            \[
                \[
                    \[
                        \mathstrut\smash\vdots
                        \justifies
                        \qtp{
                            x : \FormulaF_{ail}
                        }{
                            \Call\Buyer\x
                        }
                    \]
                    \justifies
                    \qtp{
                        x : (\FormulaF \choice \One)_{ai}                        
                    }{
                        \Select\AddTag\x.\Call\Buyer\x
                    }
                    \using\refrule[select]\SelectRule
                \]
                \qquad
                \[
                    \[
                        \justifies
                        \qtp{
                            x : \One
                        }{
                            \Close\x
                        }
                        \using\refrule[close]{\CloseRule}
                    \]
                    \justifies
                    \qtp{
                        x : (\FormulaF \choice \One)_{ai}
                    }{
                        \Select\PayTag\x.\Close\x
                    }
                    \using\refrule[select]{\SelectRule}
                \]
                \justifies
                \qtp{
                    x : (\FormulaF \choice \One)_{ai}
                }{
                    \Choice{
                        \Select\AddTag\x.\Call\Buyer\x
                    }{
                        \Select\PayTag\x.\Close\x
                    }
                }
                \using\refrule{\ChoiceRule}
            \]
            \justifies
            \qtp{
                x : \FormulaF_a
            }{
                \Call\Buyer\x
            }
            \using\refrule[rec]{\RecRule}
        \end{prooftree}
    \]
    and also
    \[
        \begin{prooftree}
            \[
                \[
                    \mathstrut\smash\vdots
                    \justifies
                    \qtp{
                        x : \FormulaG_{\dual{a}il},
                        y : \One
                    }{
                        \Call\Seller{x,y}
                    }
                \]
                \qquad
                \[
                    \[
                        \justifies
                        \qtp{
                            y : \One
                        }{
                            \Close\y
                        }
                        \using\refrule[close]{\CloseRule}
                    \]
                    \justifies
                    \qtp{
                        x : \Bot,
                        y : \One
                    }{
                        \Wait\x.\Close\y
                    }
                    \using\refrule[wait]{\WaitRule}
                \]
                \justifies
                \qtp{
                    x : (\FormulaG \branch \Bot)_{\dual{a}i},
                    y : \One
                }{
                    \CaseX\x{
                        \Call\Seller{x,y}
                    }{
                        \Wait\x.\Close\y
                    }            
                }
                \using\refrule[case]{\CaseRule}
            \]
            \justifies
            \qtp{
                x : \FormulaG_{\dual{a}},
                y : \One
            }{
                \Call\Seller{x,y}
            }
            \using\refrule[corec]{\CorecRule}
        \end{prooftree}
    \]
    showing that $\Buyer$ and $\Seller$ are quasi typed. Note that both
    derivations are infinite, but for dual reasons. In $\Buyer$ the infinite
    branch corresponds to the behavior in which $\Buyer$ chooses to add one more
    item to the shopping cart. This choice is made independently of the behavior
    of other processes in the system. In $\Seller$, the infinite branch
    corresponds to the behavior in which $\Seller$ receives one more $\AddTag$
    message from $\Buyer$.
    By combining these derivations we obtain
    \[
        \begin{prooftree}
            \[
                \mathstrut\smash\vdots
                \justifies
                \qtp{
                    x : \FormulaF_a
                }{
                    \Call\Buyer\x
                }
            \]
            \qquad
            \[
                \mathstrut\smash\vdots
                \justifies
                \qtp{
                    x : \FormulaG_{\dual{a}},
                    y : \One
                }{
                    \Call\Seller{x,y}
                }
            \]
            \justifies
            \qtp{
                y : \One
            }{
                \Cut\x{\Call\Buyer\x}{\Call\Seller{x,y}}
            }
            \using\refrule{\CutRule}
        \end{prooftree}
    \]
    showing that the system as a whole is quasi typed.
    \eoe
\end{example}

As we have anticipated, there exist infinite typing derivations that are unsound
from a logical standpoint, because they allow us to prove $\Zero$ or the empty
sequent. For example, if we consider the non-terminating process $\Omega(x) =
\Choice{\Call\Omega\x}{\Call\Omega\x}$ we obtain the infinite derivation
\begin{equation}
    \label{eq:omega}
    \begin{prooftree}
        \[
            \mathstrut\smash\vdots
            \justifies
            \qtp{x : \Zero}{\Call\Omega\x}
        \]
        \qquad
        \[
            \mathstrut\smash\vdots
            \justifies
            \qtp{x : \Zero}{\Call\Omega\x}
        \]
        \justifies
        \qtp{x : \Zero}{\Call\Omega\x}
        \using\refrule\ChoiceRule
    \end{prooftree}
\end{equation}
showing that $\Call\Omega\x$ is quasi typed. As illustrated by the next example,
there exist non-terminating processes that are quasi typed also in logically
sound contexts.

\begin{example}[compulsive buyer]
    \label{ex:typing-compulsive-buyer}
    Consider the following variant of the $\Buyer$ process
    \[
        \Buyer(x,z) = \Rec\x.\Select\AddTag\x.\Call\Buyer{x,z}
    \] 
    that models a ``compulsive buyer'', namely a buyer that adds infinitely many
    items to the shopping cart but never pays. Using $\FormulaF \eqdef \tmu\X.X
    \choice \One$ and an arbitrary atomic address $a$ we can build the following
    infinite derivation
    \[
        \begin{prooftree}
            \[
                \[
                    \mathstrut\smash\vdots
                    \justifies
                    \qtp{
                        x : \Formula_{ail}
                    }{
                        \Call\Buyer{x}
                    }
                \]
                \justifies
                \qtp{
                    x : (\Formula \choice \One)_{ai}
                }{
                    \Select\AddTag\x.\Call\Buyer{x}
                }
                \using\refrule[select]{\SelectRule}
            \]
            \justifies
            \qtp{
                x : \Formula_a
            }{
                \Call\Buyer{x}
            }
            \using\refrule[rec]{\RecRule}
        \end{prooftree}
    \]
    showing that this process is quasi typed. By combining this derivation with
    the one for $\Seller$ in \cref{ex:typing-buyer-seller} we obtain a
    derivation establishing that $\Cut\x{\Call\Buyer\x}{\Call\Seller{x,y}}$ is
    quasi typed in the context $y : \One$, although this composition cannot
    terminate.
    \eoe
\end{example}

\section{From Quasi-Typed to Well-Typed Processes}
\label{sec:soundness}

To rule out unsound derivations like those in
\cref{eq:omega,ex:typing-compulsive-buyer} it is necessary to impose a validity
condition on derivations~\cite{BaeldeDoumaneSaurin16,Doumane17}. Roughly
speaking, \muMALL's validity condition requires every infinite branch of a
derivation to be supported by the continuous unfolding of a greatest fixed
point. In order to formalize this condition, we start by defining
\emph{threads}, which are sequences of types describing sequential interactions
at the type level.

\begin{definition}[thread]
    \label{def:thread}
    A \emph{thread} of $S$ is a sequence of types $(S_i)_{i\in o}$ for some
    $o\in\omega + 1$ such that $S_0 = S$ and $S_i \tred S_{i+1}$ whenever
    $i+1\in o$.
\end{definition}

Hereafter we use $t$ to range over threads. For example, if we consider
$\Formula \eqdef \tmu\X.X \choice \One$ from \cref{ex:buyer-seller} we have that
$t \eqdef (\Formula_a,(\Formula\choice\One)_{ai},\Formula_{ail},\dots)$ is an
infinite thread of $\Formula_a$. A thread is \emph{stationary} if it has an
infinite suffix of equal types. The above thread $t$ is not stationary.

Among all threads, we are interested in finding those in which a $\tnu$-formula
is unfolded infinitely often. These threads, called $\tnu$-threads, are precisely
defined thus:

\begin{definition}[$\tnu$-thread]
    \label{def:nu-thread}
    Let $t = (S_i)_{i\in\omega}$ be an infinite thread, let $\strip{t}$ be the
    corresponding sequence $(\strip{S_i})_{i\in\omega}$ of formulas and let
    $\InfOften{t}$ be the set of elements of $\strip{t}$ that occur infinitely
    often in $\strip{t}$. We say that $t$ is a \emph{$\tnu$-thread} if
    $\minf\InfOften{t}$ is \REVISION{defined and is} a $\tnu$-formula.
\end{definition}

If we consider the infinite thread $t$ above, we have $\InfOften{t} =
\set{\Formula, \Formula \choice \One}$ and $\minf\InfOften{t} = \Formula$, so $t$
is \emph{not} a $\tnu$-thread because $\Formula$ is not a $\tnu$-formula.
Consider instead $\FormulaF \eqdef \tnu\X.\tmu\Y.X \choice Y$ and $\FormulaG
\eqdef \tmu\Y.\FormulaF \choice Y$ and observe that $\FormulaG$ is the
``unfolding'' of $\FormulaF$. Now $t_1 \eqdef (\FormulaF_a, \FormulaG_{ai},
(\FormulaF \choice \FormulaG)_{aii}, \FormulaF_{aiil}, \dots)$ is a thread of
$\FormulaF_a$ such that $\InfOften{t_1} = \set{\FormulaF, \FormulaG, \FormulaF
\choice \FormulaG}$ and we have $\minf\InfOften{t_1} = \FormulaF$ because
$\FormulaF \subf \FormulaG$, so $t_1$ is a $\tnu$-thread.
If, on the other hand, we consider the thread $t_2 \eqdef (\FormulaF_a,
\FormulaG_{ai}, (\FormulaF \choice \FormulaG)_{aii}, \FormulaG_{aiir},
(\FormulaF \choice \FormulaG)_{aiiri}, \dots)$ such that $\InfOften{t_2} =
\set{\FormulaG, \FormulaF \choice \FormulaG}$ we have $\minf\InfOften{t_2} =
\FormulaG$ because $\FormulaG \subf \FormulaF \choice \FormulaG$, so $t_2$ is
not a $\tnu$-thread.
Intuitively, the $\subf$-minimum formula among those that occur infinitely often
in a thread is the outermost fixed point operator that is being unfolded
infinitely often. \REVISION{It is possible to show that this minimum formula is
always well defined~\cite{Doumane17}.} If such minimum formula is a greatest
fixed point operator, then the thread is a $\tnu$-thread.

Now we proceed by identifying threads along branches of typing derivations. To
this aim, we provide a precise definition of \emph{branch}.

\begin{definition}[branch]
    \label{def:branch}
    A \emph{branch} of a typing derivation is a sequence
    $(\qtp{\Context_i}{P_i})_{i\in o}$ of judgments for some $o\in\omega+1$ such
    that $\qtp{\Context_0}{P_0}$ occurs somewhere in the derivation and
    $\qtp{\Context_{i+1}}{P_{i+1}}$ is a premise of the rule application that
    derives $\qtp{\Context_i}{P_i}$ whenever $i+1\in o$.
\end{definition}

An infinite branch is valid if supported by a $\tnu$-thread that originates
somewhere therein.

\begin{definition}[valid branch]
    \label{def:valid-branch}
    Let $\gamma = (\qtp{\Context_i}{P_i})_{i\in\omega}$ be an infinite branch in
    a derivation. We say that $\gamma$ is \emph{valid} if there exists
    $j\in\omega$ such that $(S_k)_{k\geq j}$ is a non-stationary $\tnu$-thread
    and $S_k$ is in the range of $\Context_k$ for every $k \geq j$.
\end{definition}

For example, the infinite branch in the typing derivation for $\Seller$ of
\cref{ex:buyer-seller} is valid since it is supported by the $\tnu$-thread
$(\FormulaG_{\dual{a}}, (\FormulaG \branch \Bot)_{\dual{a}i},
\FormulaG_{\dual{a}il},\dots)$ where $\FormulaG \eqdef \tnu\X.X \branch \Bot$
happens to be the $\subf$-minimum formula that is unfolded infinitely often.
On the other hand, the infinite branch in the typing derivation for $\Buyer$ of
\cref{ex:typing-compulsive-buyer} is invalid, because the only infinite thread
in it is $(\FormulaF_a, (\FormulaF \choice \One)_{ai}, \FormulaF_{ail}, \dots)$
which is not a $\tnu$-thread.

A \muMALL derivation is valid if so is every infinite branch in
it~\cite{BaeldeDoumaneSaurin16,Doumane17}. \REVISION{For the purpose of ensuring
fair termination}, this condition is too strong because some infinite branches
in a typing derivation may correspond to unfair executions that, by definition,
we neglect insofar its termination is concerned. For example, the infinite
branch in the derivation for $\Buyer$ of \cref{ex:typing-buyer-seller}
corresponds to an unfair run in which the buyer insists on adding items to the
shopping cart, despite it periodically has a chance of paying the seller and
terminate the interaction. That typing derivation for $\Buyer$ would be
considered an invalid proof in \muMALL because the infinite branch is not
supported by a $\tnu$-thread (in fact, there is a $\tmu$-formula that is
unfolded infinitely many times along that branch, as in
\cref{ex:typing-compulsive-buyer}).

It is generally difficult to understand if a branch corresponds to a fair or
unfair run because the branch describes the evolution of an incomplete process
whose behavior is affected by the interactions it has with processes found in
other branches of the derivation.
However, we can detect (some) unfair branches by looking at the
non-deterministic choices they traverse, since choices are made autonomously by
processes. To this aim, we introduce the notion of \emph{rank} to estimate the
least number of choices a process can possibly make during its lifetime.



\begin{definition}[rank]
    \label{def:rank}
    Let $\rankR$ and $\rankS$ range over the elements of $\RankSet \eqdef
    \Nat\cup\set\infty$ equipped with the expected total order $\leq$ and
    operation $+$ such that $\rank+\infty = \infty+\rank = \infty$.
    The \emph{rank} of a process $P$, written $\rankof{P}$, is the least element
    of $\RankSet$ such that
    \[
        \begin{array}{@{}r@{~}c@{~}l@{}}
            \rankof{\Link\x\y} & = & 0 \\
            \rankof{\Fail\x} & = & 0 \\
            \rankof{\Close\x} & = & 0 \\
            \rankof{\Wait\x.P} & = & \rankof{P} \\
        \end{array}
        \qquad
        \begin{array}{@{}r@{~}c@{~}l@{}}
            \rankof{\Join[z]\x\y.P} & = & \rankof{P} \\
            \rankof{\Select[y]{\InTag_i}\x.P} & = & \rankof{P} \\
            \rankof{\Rec[y]\x.P} & = & \rankof{P} \\
            \rankof{\Corec[y]\x.P} & = & \rankof{P} \\
        \end{array}
        \qquad
        \begin{array}{@{}r@{~}c@{~}l@{}}
            \rankof{\Case[y]\x{P}{Q}} & = & \max\set{\rankof{P},\rankof{Q}} \\
            \rankof{\Choice{P}{Q}} & = & 1 + \min\set{\rankof{P},\rankof{Q}} \\
            \rankof{\Cut\x{P}{Q}} & = & \rankof{P} + \rankof{Q} \\
            \rankof{\Fork[z]\x\y{P}{Q}} & = & \rankof{P} + \rankof{Q} \\
        \end{array}
    \]
\end{definition}

Roughly, the rank of terminated processes is $0$, that of processes with a
single continuation $P$ coincides with the rank of $P$, and that of processes
spawning two continuations $P$ and $Q$ is the sum of the ranks of $P$ and $Q$.
Then, the rank of a sum input with continuations $P$ and $Q$ is conservatively
estimated as the maximum of the ranks of $P$ and $Q$, since we do not know which
one will be taken, whereas the rank of a choice with continuations $P$ and $Q$
is 1 plus the minimum of the ranks of $P$ and $Q$.
If we take $\Buyer$ and $\Seller$ from \cref{ex:buyer-seller} we have
$\rankof{\Call\Buyer\x} = 1$ and $\rankof{\Call\Seller{x,y}} = 0$. We also have
$\rankof{\Call\Omega\x} = \infty$.
Note that $\rankof{P}$ only depends on the structure of $P$ but not on the
actual names occurring in $P$, so it is well and uniquely defined as the least
solution of a finite system of equations (see \cref{sec:extra-rank}).

\begin{definition}
    \label{def:fair-branch}
    A branch is \emph{fair} if it traverses finitely many, finitely-ranked
    choices.
\end{definition}

A finitely-ranked choice is at finite distance from a region of the process in
which there are no more choices. An \emph{unfair} branch gets close to such
region infinitely often, but systematically avoids entering it.
Note that every finite branch is also fair, but there are fair branches that are
infinite. For instance, all the infinite branches of the derivation in
\cref{eq:omega} and the only infinite branch in the derivation for
$\Call\Seller{x,y}$ of \cref{ex:typing-buyer-seller} are fair since they do not
traverse any finitely-ranked choice. On the contrary, the only infinite branch
in the derivation for $\Call\Buyer\x$ of the \cref{ex:typing-buyer-seller} is
unfair since it traverses infinitely many finitely-ranked choices. All fair
branches in the same derivation for $\Buyer$ are finite.

At last we can define our notion of well-typed process.

\begin{definition}[well-typed process]
    \label{def:wtp}
    We say that $P$ is \emph{well typed} in $\Context$, written
    $\wtp\Context{P}$, if the judgment $\qtp\Context{P}$ is derivable and each
    fair, infinite branch in its derivation is valid.
\end{definition}

Note that $\Omega$ is ill typed since the fair, infinite branches in
\cref{eq:omega} are all invalid. We can now formulate the key properties of
well-typed processes, starting from subject reduction.

\begin{theorem}[name=subject reduction,label=thm:subject-reduction,restate=thmsubjectreduction]
    If $\wtp\Context{P}$ and $P \red Q$ then $\wtp\Context{Q}$.
\end{theorem}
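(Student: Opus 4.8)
The plan is to prove subject reduction by the standard two-layer argument: first establish that structural precongruence $\pcong$ preserves well-typedness, then establish that each reduction rule $\red$ does. Because the typing rules in \cref{tab:typing-rules} are read coinductively and well-typedness (\cref{def:wtp}) is a derivability property plus a global validity side-condition on fair infinite branches, each case must discharge \emph{two} obligations: (i) a \emph{quasi-typing} obligation, showing the reduct $Q$ admits a derivation of $\qtp{\Context}{Q}$ built from the (sub)derivations of $P$; and (ii) a \emph{validity} obligation, showing that every fair, infinite branch in the derivation thus constructed for $Q$ is still supported by a non-stationary $\tnu$-thread. The first obligation is essentially the finitary linear-logic cut-reduction bookkeeping; the second is where the infinitary content lives and will be the main obstacle.

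\textbf{Structural precongruence.}
First I would prove that $P \pcong P'$ and $\wtp{\Context}{P}$ imply $\wtp{\Context}{P'}$, by induction on the derivation of $\pcong$. The rules \refrule{s-link}, \refrule{s-comm} and \refrule{s-assoc} each only permute or recombine the same subderivations (for \refrule{s-assoc} the side conditions $x\in\fn{Q}$, $y\notin\fn{P}$ guarantee the contexts split correctly, and the disjoint-address discipline on contexts is preserved because no type is altered). Since the set of judgments appearing along any branch is unchanged up to reindexing, the supporting threads and their $\InfOften$ sets are preserved verbatim, so validity transfers immediately.

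\textbf{Reduction.}
For the reduction relation I would proceed by induction on the derivation of $P \red Q$, with the base cases \refrule{r-link}, \refrule{r-unit}, \refrule{r-pair}, \refrule{r-sum}, \refrule{r-rec}, \refrule{r-choice} treated directly and the closure rules \refrule{r-cut}, \refrule{r-struct} handled by the inductive hypothesis (using the precongruence lemma for the latter). Each base case corresponds to a principal cut reduction in \muMALL: e.g.\ \refrule{r-pair} rewrites a cut on $S\tfork T$ against $S\tjoin T$ into nested cuts on $S$ and $T$, \refrule{r-sum} selects the matching branch of a \refrule[case]{\CaseRule} against a \refrule[select]{\SelectRule}, \refrule{r-rec} aligns a \refrule[rec]{\RecRule} with a \refrule[corec]{\CorecRule} by unfolding both to the same $S\subst{\sigma\X.S}\X$, and \refrule{r-link} substitutes along a \refrule{\LinkRule}. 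In each, the reduct's quasi-typing derivation is assembled from the immediate subderivations of $P$ (the $\tred$ relation on types makes the address/formula bookkeeping match), discharging obligation (i).

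\textbf{The main obstacle: preserving validity under cut reduction.}
The crux is obligation (ii): a fair infinite branch $\gamma'$ of the reduct's derivation must be traced back to a fair infinite branch $\gamma$ of the original derivation whose supporting $\tnu$-thread descends to one for $\gamma'$. This is exactly the delicate thread-preservation argument from the \muMALL cut-elimination proof of~\cite{BaeldeEtAl22}, adapted to the \emph{fair} validity condition of \cref{def:wtp} rather than full \muMALL validity. The subtle point is that a principal reduction may shorten a thread by one step (the unfolded/projected type is consumed), and a cut-commutation may graft a branch of $Q$ onto a branch of $P$; I must check that the $\subf$-minimal infinitely-often formula $\minf\InfOften{t}$ is unchanged by such a finite perturbation, so that $\tnu$-threadness is preserved, and simultaneously that the branch remains \emph{fair}, i.e.\ that reduction does not create a new fair infinite branch traversing infinitely many finitely-ranked choices out of an originally unfair one. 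Here I would exploit that rank is defined structurally (\cref{def:rank}) and that each reduction either leaves ranks unchanged or strictly decreases them by resolving a choice via \refrule{r-choice}, so the set of finitely-ranked choices traversed along any branch can only shrink; hence fair branches of the reduct pull back to fair branches of the redex, and their validity is inherited.
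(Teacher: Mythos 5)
Your scaffolding is essentially the paper's: quasi-typing preservation lemmas for $\pcong$ and $\red$ (proved by exactly the case analysis you sketch, with a substitution lemma for \refrule{r-link}), followed by a validity-transfer argument for fair infinite branches. One framing difference: the paper keeps validity out of the lemmas entirely --- both are stated for quasi-typing only --- and discharges it once, in the proof of the theorem, by observing that any fair infinite branch $\gamma$ of the derivation of $\qtp\Context{Q}$ decomposes as $\gamma_1\gamma_2$ with $\gamma_1$ finite and $\gamma_2$ a branch occurring verbatim in the derivation of $\qtp\Context{P}$, because the reduct's derivation is assembled from reused subderivations (up to the renaming in \refrule{r-link}, which changes neither types nor ranks). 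Since fairness and validity are tail properties, $\gamma_2$ is fair, hence valid by $\wtp\Context{P}$, and its non-stationary $\tnu$-thread also supports $\gamma$. In particular, none of the thread-preservation machinery of the \muMALL cut-elimination proof is needed for subject reduction; in this paper that machinery enters only in the proof of the weak-termination lemma (\cref{lem:weak-termination}), so your obligation (ii) is much lighter than you anticipate.

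The genuine flaw is your justification of fairness preservation: the claim that ``each reduction either leaves ranks unchanged or strictly decreases them by resolving a choice via \refrule{r-choice}, so the set of finitely-ranked choices traversed along any branch can only shrink'' is false. The paper explicitly notes that when $P \red Q$ the rank of $Q$ can be \emph{greater} than that of $P$: \refrule{r-choice} rewrites $\Choice{P_1}{P_2}$, whose rank is $1+\min\set{\rankof{P_1},\rankof{P_2}}$, into $P_i$, whose rank may be the larger of the two and even $\infty$ (take $\rankof{P_1}=0$ and $\rankof{P_2}=\infty$); this is precisely the ``moving away from termination'' phenomenon the paper discusses in connection with \cref{ex:parallel-programming}. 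So no rank monotonicity is available to you. Fortunately none is needed: on the suffix $\gamma_2$ the judgments --- hence the processes, hence their ranks, since $\rankof{\cdot}$ depends only on the structure of a process and not on its names --- coincide literally with those in the original derivation, and a finite prefix $\gamma_1$ can never change whether a branch traverses finitely or infinitely many finitely-ranked choices. Replacing your monotonicity argument with this verbatim-suffix observation closes the gap and makes your proof coincide with the paper's.
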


\REVISION{
All reductions in \cref{tab:semantics} except those for non-deterministic
choices correspond to cut-elimination steps in a quasi typing derivation. As
an illustration, below is a fragment of derivation tree for two processes
exchanging a pair of $y$ and $z$ on channel $x$.
\[
    \begin{prooftree}
        \[
            \[
                \mathstrut\smash\vdots
                \justifies
                \qtp{\ContextC, y : S}{P}
            \]
            \[
                \mathstrut\smash\vdots
                \justifies
                \qtp{\ContextD, z : T}{Q}
            \]
            \justifies
            \qtp{
                \ContextC, \ContextD, x : S \tfork T
            }{
                \Fork[z]\x\y{P}{Q}
            }
            \using\refrule[fork]\ForkRule
        \]
        \[
            \[
                \mathstrut\smash\vdots
                \justifies
                \qtp{\ContextC', y : \dual{S}, z : \dual{T}}{R}
            \]
            \justifies
            \qtp{\ContextC', x : \dual{S} \tjoin \dual{T}}{\Join[z]\x\y.R}
            \using\refrule[join]\JoinRule
        \]
        \justifies
        \qtp{
            \ContextC, \ContextD, \ContextC'
        }{
            \Cut\x{
                \Fork[z]\x\y{P}{Q}
            }{
                \Join[z]\x\y.R
            }
        }
        \using\refrule\CutRule
    \end{prooftree}
\]

As the process reduces, the quasi typing derivation is rearranged so that
the cut on $x$ is replaced by two cuts on $y$ and $z$. The resulting quasi
typing derivation is shown below.
\[
    \begin{prooftree}
        \[
            \mathstrut\smash\vdots
            \justifies
            \qtp{\ContextC, y : S}{P}
        \]
        \[
            \[
                \mathstrut\smash\vdots
                \justifies
                \qtp{\ContextD, z : T}{Q}
            \]
            \[
                \mathstrut\smash\vdots
                \justifies
                \qtp{\ContextC', y : \dual{S}, z : \dual{T}}{R}
            \]
            \justifies
            \qtp{
                \ContextD, \ContextC', y : \dual{S}
            }{
                \Cut\z{Q}{R}
            }
            \using\refrule\CutRule
        \]
        \justifies
        \qtp{
            \ContextC, \ContextD, \ContextC'
        }{
            \Cut\y{P}{\Cut\z{Q}{R}}
        }
        \using\refrule\CutRule            
    \end{prooftree}
\]
}

It is also interesting to observe that, when $P \red Q$, the reduct $Q$ is well
typed in the same context as $P$ but its rank may be different. In particular,
the rank of $Q$ can be \emph{greater} than the rank of $P$. Recalling that the
rank of a process estimates the number of choices that the process must perform
to terminate, the fact that the rank of $Q$ increases means that $Q$ \emph{moves
away} from termination instead of getting closer to it (we will see an instance
where this phenomenon occurs in \cref{ex:parallel-programming}). What really
matters is that a well-typed process is weakly terminating. This is the second
key property ensured by our type system.

\begin{lemma}[name=weak termination,label=lem:weak-termination,restate=lemweaktermination]
    If $\wtp{x : \One}{P}$ then $P \wred \Close\x$.
\end{lemma}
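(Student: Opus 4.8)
The plan is to exhibit a finite reduction sequence from $P$ to $\Close\x$ by following a deterministic-up-to-choices strategy and proving that it terminates. First I would record a \emph{canonical form} observation: if $\wtp{x:\One}{P}$ and $P$ is stuck, then $P = \Close\x$. Indeed, inspecting which rules of \cref{tab:typing-rules} can derive a judgment whose context is exactly $x:\One$, the top-level term is either $\Close\x$ (derived by \refrule[close]{\CloseRule}), a cut $\Cut\y{Q}{R}$, or a choice $\Choice{Q}{R}$; the last is reducible by \refrule{r-choice}, and a cut always exposes a redex, since a short progress-style descent along its left spine produces a principal synchronization via one of \refrule{r-unit}, \refrule{r-pair}, \refrule{r-sum}, \refrule{r-rec}, \refrule{r-link}, possibly after \refrule{r-cut} and \refrule{s-assoc} rearrangements. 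Hence the only stuck well-typed process in $x:\One$ is $\Close\x$, and it suffices to build \emph{any} terminating run.

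Next I would fix the reduction strategy: always reduce the bottom-most redex; when it is a choice $\Choice{P_1}{P_2}$ pick the branch $P_i$ realizing $\rankof{P_i} = \min\{\rankof{P_1},\rankof{P_2}\}$ (so the \refrule{r-choice} step lowers that choice's rank by one), and when it is a principal cut perform the corresponding synchronization. By \cref{thm:subject-reduction} every process along the resulting run stays well typed in $x:\One$, so if the run is finite its last process is stuck, hence $\Close\x$, and we are done. It therefore remains to prove that this strategy terminates, which I would do by ruling out the three shapes an infinite run could have.

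\emph{(i) An infinite choice-free tail.} Here the bottom-most cut persists, and tracing the single conversation descending from its cut channel yields, along the two premises, an infinite thread $t$ and its componentwise dual $\dual{t}$. Both branches are choice-free, hence fair, so by \cref{def:wtp} both are valid, i.e.\ both are $\tnu$-threads. But duality is an order-isomorphism for $\subf$ and exchanges $\tnu$ with $\tmu$, whence $\minf\InfOften{\dual{t}} = \dual{(\minf\InfOften{t})}$ is a $\tmu$-formula whenever $\minf\InfOften{t}$ is a $\tnu$-formula; dual threads cannot both be $\tnu$-threads, a contradiction. \emph{(ii) An infinite pure-choice tail.} Under the min-rank strategy a finitely-ranked choice strictly lowers the local rank, so finitely-ranked choices cannot recur indefinitely without intervening communication; an infinite pure-choice tail therefore consists, eventually, of infinitely-ranked choices and is thus fair. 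Since \refrule{\ChoiceRule} leaves the context unchanged, every thread along such a branch is stationary, so it carries no non-stationary $\tnu$-thread and is invalid, contradicting \cref{def:wtp}. \emph{(iii) Infinite alternation} of finite choice-blocks and finite communication-blocks is the residual case and the genuine obstacle.

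The main difficulty is precisely (iii): the min-rank \refrule{r-choice} steps decrease a local rank, but the intervening communications may \emph{increase} rank (as the paper already warns), so neither rank alone nor the dual-thread argument of (i) applies, because the two premise branches may now traverse infinitely many finitely-ranked choices and so fail to be fair. I would resolve this by adapting \muMALL's \emph{productive} cut elimination: along the min-rank run the reachable derivation is choice-free, its infinite branches are all fair and hence valid, so it is a genuinely valid \muMALL derivation to which productivity applies; since the unique cut-free proof of $\vdash x:\One$ is the \emph{finite} proof $\Close\x$, the productive procedure must reach it in finitely many rounds. Making this precise requires a well-founded, lexicographic measure pairing the validity witness of \cref{def:valid-branch} with the rank, threaded through the on-the-fly choice resolutions and aligned with the traced conversation of case (i); this alignment, which leans on the \muMALL bookkeeping for bottom-most cuts, is where the bulk of the technical work lies.
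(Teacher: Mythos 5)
Your ingredients are the right ones -- min-rank resolution of choices (the paper's ``fair choice'' $\fc{P_1}{P_2}$), the observation that an infinite tower of choices would give a fair branch with stationary contexts and hence an invalid derivation (this is exactly the paper's \cref{lem:depth}), and the final appeal to the fact that the only cut-free proof of $\vdash \One$ is $\Close\x$ -- but your case (iii), which you yourself flag as ``the genuine obstacle,'' is precisely the part a proof must close, and your sketch does not close it: you gesture at a lexicographic measure pairing validity witnesses with ranks without saying what it is or why it decreases, and such a measure is genuinely problematic (ranks can \emph{increase} across communication steps, as the paper warns, and validity witnesses are properties of infinite branches, not of finite stages of a run). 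The gap is created by your decision to resolve choices \emph{on-the-fly} during reduction. The paper dissolves the interleaving problem by resolving them \emph{statically}, before any reduction happens: it defines a function $\lfloor\cdot\rfloor$ replacing, throughout the (possibly infinite) term, every choice $\Choice{P_1}{P_2}$ by its min-rank branch $\fc{P_1}{P_2}$ (well defined thanks to \cref{lem:depth}); it proves that $\wtp\Context{P}$ implies $\wtp\Context{\lfloor P\rfloor}$ (\cref{lem:resolved-well-typed}); and it observes that $\lfloor P\rfloor$ has rank $0$, so its derivation contains no application of \refrule{\ChoiceRule}, every infinite branch in it is fair and hence valid, i.e.\ the derivation literally \emph{is} a valid \muMALL proof. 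At that point \muMALL cut elimination applies as a black box and gives $\lfloor P\rfloor \wred \Close\x$ (\cref{lem:zero-rank-weak-termination}, which also needs the proximity lemma, \cref{lem:proximity}, so that \refrule{s-comm}/\refrule{s-assoc} can mimic the multicut bookkeeping). Finally, a weak-simulation lemma (\cref{lem:resolved-red}) transfers the run back to $P$: each step of $\lfloor P\rfloor$ is matched by $P$ after finitely many extra \refrule{r-choice} steps, yielding $P \wred \Close\x$. Since all choices are resolved up front, no alternation of choice-blocks and communication-blocks can ever arise.

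A second, independent flaw: your case (i) argument is unsound as stated. Validity (\cref{def:valid-branch}) only requires each fair infinite branch to contain \emph{some} non-stationary $\tnu$-thread somewhere in its contexts; it does not say that the thread lives on the cut channel. So from validity of the two premise branches of the persistent cut you cannot conclude that the two \emph{dual} threads tracing the cut formula are both $\tnu$-threads: each branch's witness may sit on a type introduced by a different cut deeper in the derivation, and your duality contradiction never gets off the ground. Ruling out an infinite choice-free reduction is exactly the content of \muMALL cut elimination -- a hard theorem that the paper deliberately cites (Doumane, Proposition 3.5) rather than re-proves -- and this is why the paper's route of first reducing to a literal, valid \muMALL proof, instead of arguing thread-by-thread over the run of $P$ itself, is what makes the argument go through.
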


The proof of \cref{lem:weak-termination} is a refinement of the cut elimination
property of \muMALL. Essentially, the only new case we have to handle is when a
choice $\Choice{P_1}{P_2}$ ``emerges'' towards the bottom of the typing
derivation, meaning that it is no longer guarded by any action. In this case, we
reduce the choice to the $P_i$ with smaller rank, which is guaranteed to lay on
a fair branch of the derivation.
An auxiliary result used in the proof of \cref{lem:weak-termination} is that our type system is a conservative extension of \muMALL.

\begin{lemma}[name={},restate=lemconservativity]
    If $\wtp{x_1:S_1,\dots,x_n:S_n}{P}$ then $\vdash S_1,\dots,S_n$ is derivable
    in \muMALL.
\end{lemma}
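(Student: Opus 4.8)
The only rule in \cref{tab:typing-rules} that does not correspond to an inference rule of \muMALL is \refrule{\ChoiceRule}: erasing the process terms (and, where needed, renaming the address of each \refrule{\LinkRule} to the dual form demanded by \muMALL's axiom) turns every other rule into the corresponding \muMALL rule, and leaves the typing contexts intact as \muMALL sequents. The plan is therefore to coinductively transform the derivation $\mathcal{D}$ witnessing $\wtp{x_1:S_1,\dots,x_n:S_n}{P}$ into a \muMALL pre-proof $\mathcal{D}'$ of $\vdash S_1,\dots,S_n$ by \emph{resolving} every non-deterministic choice: at each occurrence of \refrule{\ChoiceRule} deriving $\qtp{\Context}{\Choice{Q_1}{Q_2}}$ we splice in the subderivation of the premise $\qtp{\Context}{Q_i}$ whose process $Q_i$ has the smaller rank, breaking ties to the left. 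Since no other rule is touched, $\mathcal{D}'$ is a \muMALL pre-proof; it remains to check that it is a \emph{valid} one, \ie that each of its infinite branches is supported by a $\tnu$-thread in the sense of \cref{def:valid-branch}.

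The key observation is that rank is non-increasing along every branch of $\mathcal{D}'$. By \cref{def:rank}, for every rule other than \refrule{\ChoiceRule} the rank of each premise is bounded above by the rank of the conclusion; and at a \emph{retained} choice the rank strictly decreases by one, because $\rankof{\Choice{Q_1}{Q_2}} = 1 + \min\set{\rankof{Q_1},\rankof{Q_2}}$ and we keep the minimum. Note moreover that a choice node is finitely-ranked exactly when its rank is finite, \ie exactly when at least one premise has finite rank.

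I then argue that every infinite branch $\gamma'$ of $\mathcal{D}'$ is \emph{fair} in the sense of \cref{def:fair-branch}. Along $\gamma'$ the ranks form a non-increasing sequence in $\RankSet$, so there are two cases. If the rank equals $\infty$ throughout, then every choice crossed by $\gamma'$ has infinite rank and hence is \emph{not} finitely-ranked, so $\gamma'$ crosses no finitely-ranked choice. Otherwise the rank eventually reaches a natural number and from that point stays finite and non-increasing, strictly decreasing at each finitely-ranked choice crossed thereafter; by well-foundedness of $\Nat$ only finitely many such choices can be crossed, and before that point the rank is $\infty$ so no finitely-ranked choice is crossed either. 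In both cases $\gamma'$ traverses finitely many finitely-ranked choices and is therefore fair.

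Finally I transfer validity. The branch $\gamma'$ corresponds to a unique infinite branch $\gamma$ of the original $\mathcal{D}$ (reinsert the resolved choice nodes with their selected premise), and $\gamma$ is fair by the argument above. Since $P$ is well typed, \cref{def:wtp} guarantees that $\gamma$ is valid, \ie supported from some index on by a non-stationary $\tnu$-thread. Because both premises of a \refrule{\ChoiceRule} carry the same context as its conclusion, removing the choice nodes only deletes stationary steps of that thread; this changes neither $\InfOften{t}$ nor its $\subf$-minimum (\cref{def:nu-thread}), so the thread survives as a non-stationary $\tnu$-thread along $\gamma'$. Hence every infinite branch of $\mathcal{D}'$ is valid, and $\mathcal{D}'$ is a valid \muMALL proof of $\vdash S_1,\dots,S_n$. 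The main obstacle is the rank bookkeeping of the third paragraph: establishing that the only way a branch can traverse infinitely many finitely-ranked choices is to select non-minimal branches cofinally, which our minimum-rank resolution never does.
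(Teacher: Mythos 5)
Your proposal is correct and is essentially the paper's own proof, just phrased as a derivation-to-derivation transformation rather than a process-level one: your min-rank splicing (ties to the left) is exactly the paper's fair-choice resolution $\fc{P}{Q}$ and $\lfloor\cdot\rfloor$ (\cref{def:fc,def:resolve}), your rank bookkeeping reproduces the fairness argument in the proof of \cref{lem:resolved-well-typed}, and your thread-transfer step matches the paper's as well (the paper then concludes by observing that the resolved derivation, having rank $0$ and no occurrences of \refrule{\ChoiceRule}, \emph{is} a valid \muMALL proof).

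The one point you gloss over is that your coinductive splicing is actually well defined, \ie productive: if the min-rank premise of a choice were itself a choice, and so on forever, your $\mathcal{D}'$ would never emit a node. The paper secures this before defining $\lfloor\cdot\rfloor$, via \cref{def:depth,lem:depth}: in a well-typed derivation there is no infinite chain of consecutive fair choices, because such a chain would be a fair infinite branch along which the typing context never changes, hence one admitting no non-stationary $\tnu$-thread, contradicting \cref{def:wtp}. This is the same style of argument as your fairness paragraph (and your case analysis even contains its ingredients), so it is an omission rather than an error, but it is a step the paper needs and proves -- note that your transformation, and the paper's $\lfloor P\rfloor$, are simply undefined on a quasi-typed process like $\Omega$, and only well-typedness rescues them.
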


The property that well-typed processes can always successfully terminate is a
simple consequence of \cref{thm:subject-reduction,lem:weak-termination}.

\begin{theorem}[name=soundness,label=thm:soundness,restate=thmsoundness]
    If $\wtp{x : \One}{P}$ and $P \wred Q$ then $Q \wred \Close\x$.
\end{theorem}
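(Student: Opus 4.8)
The plan is to derive \cref{thm:soundness} as an immediate corollary of \cref{thm:subject-reduction} (subject reduction) and \cref{lem:weak-termination} (weak termination). The only genuine bookkeeping is to lift subject reduction from a single step to the reflexive-transitive closure $\wred$; everything else follows by feeding the output of one result into the input of the other.

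First I would extend subject reduction to multi-step reductions, establishing that if $\wtp\Context{P}$ and $P \wred Q$ then $\wtp\Context{Q}$. This is a routine induction on the number of reduction steps witnessing $P \wred Q$. The base case is reflexivity ($P = Q$), where there is nothing to prove. For the inductive step, write $P \red P' \wred Q$: one application of \cref{thm:subject-reduction} gives $\wtp\Context{P'}$, and the induction hypothesis applied to $P'$ then yields $\wtp\Context{Q}$.

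With this extension in hand, the proof of \cref{thm:soundness} is direct. Given $\wtp{x : \One}{P}$ and $P \wred Q$, the multi-step version of subject reduction yields $\wtp{x : \One}{Q}$, so $Q$ is itself well typed in the context $x : \One$. Applying \cref{lem:weak-termination} to $Q$ then gives $Q \wred \Close\x$, which is exactly the desired conclusion.

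The main obstacle does not lie in this proof at all but in the two results it composes: the delicate work is already discharged by subject reduction, which must preserve not merely quasi typing but also the validity of every fair infinite branch, and by weak termination, whose proof is the refinement of \muMALL cut elimination that handles the emergence of non-deterministic choices toward the bottom of a derivation. Once those are available, \cref{thm:soundness} needs nothing more than their composition, and the statement confirms that every process reachable from a well-typed $P$ can still reach the successful terminal state $\Close\x$ -- that is, is weakly terminating -- which, combined with \cref{thm:fair-termination}, is precisely what secures fair termination of well-typed processes.
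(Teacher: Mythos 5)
Your proof is correct and matches the paper's own argument exactly: the paper also lifts \cref{thm:subject-reduction} to $\wred$ by induction on the number of reduction steps to obtain $\wtp{x : \One}{Q}$, and then concludes by applying \cref{lem:weak-termination} to $Q$.
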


\cref{thm:soundness} entails all the good properties we expect from well-typed
processes: \emph{failure freedom} (no unguarded sub-process $\Fail\y$ ever
appears), \emph{deadlock freedom} (if the process stops it is terminated),
\emph{lock freedom}~\cite{Kobayashi02,Padovani14} (every pending action can be
completed in finite time) and \emph{junk freedom} (every channel can be
depleted).
The combination of \cref{thm:fair-termination,thm:soundness} also guarantees the
termination of every fair run of the process.

\begin{corollary}[name=fair termination,label=cor:fair-termination,restate=corfairtermination]
    If $\wtp{x : \One}{P}$ then $P$ is fairly terminating.
\end{corollary}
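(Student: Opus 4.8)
The plan is to obtain this corollary with no further inductive work, by chaining the proof principle for fair termination (\cref{thm:fair-termination}) with soundness (\cref{thm:soundness}). Recall that \cref{thm:fair-termination} states the equivalence ``$P$ is fairly terminating iff $P \wred Q$ implies $Q$ is weakly terminating''; I will use its right-to-left direction, so the whole task reduces to establishing weak termination for every process reachable from $P$.

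Concretely, I would argue as follows. Assume $\wtp{x : \One}{P}$. By \cref{thm:fair-termination} it suffices to show that $P \wred Q$ implies $Q$ weakly terminating, so I fix an arbitrary $Q$ with $P \wred Q$. Applying \cref{thm:soundness} to the hypothesis $\wtp{x : \One}{P}$ and the reduction $P \wred Q$ yields $Q \wred \Close\x$. The final step is to read this as witnessing weak termination: since $\Close\x$ is stuck -- no reduction rule in \cref{tab:semantics} applies to an isolated unit output -- the reduction sequence realizing $Q \wred \Close\x$ is a finite, maximal run of $Q$ in the sense of \cref{def:run}, and hence $Q$ is weakly terminating.

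Having shown that every $Q$ with $P \wred Q$ is weakly terminating, the right-to-left direction of \cref{thm:fair-termination} delivers the conclusion that $P$ is fairly terminating.

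I do not expect any genuine obstacle at this stage: all the mathematical content has already been absorbed into \cref{thm:soundness} (which itself rests on subject reduction, \cref{thm:subject-reduction}, and weak termination, \cref{lem:weak-termination}) and into the characterization \cref{thm:fair-termination}. The only detail that needs explicit mention is the observation that $\Close\x$ is stuck, which is what upgrades the finite reduction $Q \wred \Close\x$ into a bona fide finite run and thereby into weak termination.
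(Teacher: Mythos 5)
Your proposal is correct and follows exactly the paper's own route: the paper proves this corollary as an ``immediate consequence'' of \cref{thm:soundness} and \cref{thm:fair-termination}, which is precisely the chaining you describe. Your explicit remark that $\Close\x$ is stuck (so that $Q \wred \Close\x$ indeed witnesses weak termination in the sense of \cref{def:run}) is a detail the paper leaves implicit, and it is a sound one.
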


Observe that zero-ranked process do not contain any non-deterministic choice. In
that case, every infinite branch in their typing derivation is fair and our
validity condition coincides with that of \muMALL. As a consequence, we obtain
the following strengthening of \cref{cor:fair-termination}:\LP{Controllare se
\`e gi\`a dimostrato con un altro nome, se no aggiungere la prova?}

\begin{proposition}
    If $\wtp{x : \One}{P}$ and $\rankof{P} = 0$ then $P$ is terminating.
\end{proposition}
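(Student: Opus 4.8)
The plan is to exploit the observation, stated just before the proposition, that a rank-$0$ process contains no non-deterministic choice, and then to upgrade the weak termination of \cref{lem:weak-termination} to strong termination using the fact that, in the absence of choices, \piLIN reduction is essentially confluent.

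First I would prove the structural fact that $\rankof{P} = 0$ implies that $P$ has no subterm of the form $\Choice{Q_1}{Q_2}$. Arguing by contradiction, suppose such a choice occurs and pick the one \emph{closest to the root} of $P$, reached by a finite path $\pi$. Along $\pi$ every constructor is unary (\refrule[wait]{\WaitRule}, \refrule[join]{\JoinRule}, \refrule[select]{\SelectRule}, \refrule[rec]{\RecRule}, \refrule[corec]{\CorecRule}), a sum input, a cut, or a fork, and for each of these the defining equations of \cref{def:rank} give that the rank of a node is at least the rank of the child followed by $\pi$ (equality for unary nodes, $\geq$ via $\max$ for sum inputs and via $+$ with non-negative summands for cuts and forks). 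Hence $\rankof{P}$ is at least the rank of the chosen choice, which is $1 + \min\set{\rankof{Q_1},\rankof{Q_2}} \geq 1$, contradicting $\rankof{P}=0$. I expect this step to be routine; the only care needed is that, because $P$ may be infinite, the argument must follow the \emph{finite} path to the topmost choice rather than induct on the whole tree.

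Next I would observe that choice-freeness is preserved by reduction: no rule in \cref{tab:semantics} introduces a $\Choice{\cdot}{\cdot}$ on its right-hand side, and every reduct is built from subterms of the redex (modulo substitution and merging of linked channels), so if $P$ is choice-free then so is every $Q$ with $P \wred Q$; in particular \refrule{r-choice} is never applicable along any run of $P$. Combined with \cref{thm:subject-reduction}, every process reachable from $P$ is well typed, choice-free and of rank $0$, and by \cref{lem:weak-termination} it satisfies $Q \wred \Close\x$; since $\Close\x$ is stuck, $P$ is in particular weakly terminating, exhibiting at least one finite maximal run.

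The crux, and the step I expect to be the main obstacle, is to turn weak termination into strong termination. The plan is to establish a \emph{diamond property} for choice-free reduction up to structural precongruence: if $P \red Q_1$ and $P \red Q_2$ with $Q_1 \not\pcong Q_2$, then there is $R$ with $Q_1 \red R$ and $Q_2 \red R$. This should hold because, once \refrule{r-choice} is excluded, the only non-determinism in $\red$ comes from the choice of which cut to reduce and from \refrule{r-struct}; linearity guarantees that distinct cut-redexes act on disjoint channels and therefore commute, so reducing one redex neither destroys nor duplicates another. The diamond property implies that all maximal reduction sequences issuing from $P$ have the same length, whence the existence of one finite maximal run forces every run of $P$ to be finite, i.e.\ $P$ is terminating. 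The delicate points to check are the commutation of \refrule{r-link} with the other reductions and the bookkeeping of \refrule{r-struct}, which is exactly where the one-step diamond can fail and may have to be repaired by closing the diagram in at most one step on each side with a correspondingly adapted length argument. Conceptually, this confluence argument is the process-level counterpart of the strong normalization of cut elimination for valid \muMALL proofs: the validity condition, which by \cref{def:wtp} every infinite branch of the derivation now satisfies (all such branches being fair in the choice-free case), is precisely what rules out the divergent behavior exhibited by rank-$0$ but \emph{ill}-typed processes such as the compulsive buyer of \cref{ex:typing-compulsive-buyer}.
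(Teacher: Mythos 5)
Your route is sound in outline but genuinely different from the paper's. The paper offers no standalone proof of this proposition: it obtains it from the observation that for zero-ranked processes every infinite branch of the typing derivation is fair, so \cref{def:wtp} collapses to the validity condition of \muMALL and the derivation \emph{is} a valid \muMALL proof; termination is then inherited from the \muMALL cut-elimination machinery (the same result invoked in the proof of \cref{lem:zero-rank-weak-termination}, namely that valid proofs admit no infinite sequence of principal reductions), not from a process-level confluence argument. Your decomposition --- (i) rank $0$ implies choice-freeness, (ii) choice-freeness is preserved by reduction, (iii) weak termination of every reachable process via \cref{thm:subject-reduction} and \cref{lem:weak-termination}, (iv) a diamond property for choice-free reduction upgrading weak to strong termination --- confines all use of \muMALL to step (iii) and replaces the proof-theoretic strong-normalization argument by an elementary rewriting one (uniform normalization by random descent). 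This is a legitimate, more self-contained alternative. Step (iv) is where all the real work lies, and it is plausible for exactly the reason you give: in \piLIN distinct unguarded redexes never overlap, since the two components of a reducible cut are prefixes (or links) and hence contain no unguarded redexes, while \refrule{r-link} merely renames a channel, so firing one redex neither erases nor duplicates another.

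One point needs repair. Your fallback --- closing the diagram ``in at most one step on each side'' --- is too weak: with $\leq 1$-step closure the implication from weak to strong termination fails in general (take $a \red a$ and $a \red b$ with $b$ stuck; the lax diamond holds, $a$ is weakly terminating, yet $a$ diverges). You must establish the strict diamond --- $P \red Q_1$ and $P \red Q_2$ implies that $Q_1$ and $Q_2$ are structurally equivalent or both reduce in exactly one step to structurally equivalent processes --- which is what actually holds here by the non-overlap observation, together with the fact that $\pcong$-related processes have the same one-step reducts (this uses \refrule{r-struct} and the invertibility of \refrule{s-link}, \refrule{s-comm}, \refrule{s-assoc}, the reverse of \refrule{s-assoc} being derivable as the paper remarks). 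Finally, your closing remark slightly misattributes the role of validity: the diamond property holds for \emph{every} choice-free process, including ill-typed divergent ones such as the compulsive buyer of \cref{ex:typing-compulsive-buyer} composed with the seller; validity enters your argument only through step (iii), the weak termination lemma, and not through the confluence argument itself.
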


For regular processes (those consisting of finitely many distinct sub-trees, up
to renaming of bound names) it is possible to easily adapt the algorithm that
decides the validity of a \muMALL proof so that it decides the validity of a
\piLIN typing derivation. The algorithm is sketched in more detail in
\cref{sec:proof-decidability}.

\begin{example}[parallel programming]
    \label{ex:parallel-programming}
    \newcommand{\Work}{\textit{Work}}
    \newcommand{\Gather}{\textit{Gather}}
    \newcommand{\ComplexTag}{\mktag{complex}}
    \newcommand{\SimpleTag}{\mktag{simple}}
    In this example we see a \piLIN modeling of a \emph{parallel programming
    pattern} whereby a $\Work$ process creates an unbounded number of workers
    each one dedicated to an independent task and a $\Gather$ process collects
    and combines the partial results from the workers. The processes $\Work$ and
    $\Gather$ are defined as follows:
    \begin{align*}
        \Work(x) & =
        \Rec\x.\parens{
            \Choice{
                \Select\ComplexTag\x.\Fork\x\y{\Close\y}{\Call\Work\x}
            }{
                \Select\SimpleTag\x.\Close\x
            }
        }
        \\
        \Gather(x,z) & =
        \Corec\x.
        \Case\x{
            \Join\x\y.\Wait\y.\Call\Gather{x,z}
        }{
            \Wait\x.\Close\z
        }
    \end{align*}

    At each iteration, the $\Work$ process non-deterministically decides whether
    the task is $\ComplexTag$ (left hand side of the choice) or $\SimpleTag$
    (right hand side of the choice). In the first case, it bifurcates into a new
    worker, which in the example simply sends a unit on $y$, and another
    instance of itself. In the second case it terminates.
    The $\Gather$ process joins the results from all the workers before
    signalling its own termination by sending a unit on $z$. Note that the
    number of actions $\Gather$ has to perform before terminating is unbounded,
    as it depends on the non-deterministic choices made by $\Work$.

    Below is a typing derivation for $\Work$ where $\FormulaF \eqdef
    \tmu\X.(\One \tfork X) \choice \One$ and $a$ is an arbitrary atomic address:
    \[
        \begin{prooftree}
            \[
                \[
                    \[
                        \[
                            \justifies
                            \qtp{
                                y : \One
                            }{
                                \Close\y
                            }
                            \using\refrule[close]\CloseRule
                        \]
                        \quad
                        \[
                            \mathstrut\smash\vdots
                            \justifies
                            \qtp{
                                x : \FormulaF_{ailr}
                            }{
                                \Call\Work\x
                            }
                        \]
                        \justifies
                        \qtp{
                            x : (\One \tfork \FormulaF)_{ail}
                        }{
                            \Fork\x\y{\Close\y}{\Call\Work\x}
                        }
                        \using\refrule[fork]\ForkRule
                    \]
                    \justifies
                    \qtp{
                        x : ((\One \tfork \FormulaF) \choice \One)_{ai}
                    }{
                        \Select\ComplexTag\x\dots
                    }
                    \using\refrule[select]\SelectRule
                \]
                \[
                    \justifies
                    \qtp{
                        x : ((\One \tfork \FormulaF) \choice \One)_{ai}
                    }{
                        \Select\SimpleTag\x.\Close\x
                    }
                    \using\refrule[close]\CloseRule,\refrule[select]\SelectRule
                \]
                \justifies
                \qtp{
                    x : ((\One \tfork \FormulaF) \choice \One)_{ai}
                }{
                    \Choice{
                        \Select\ComplexTag\x.\Fork\x\y{\Close\y}{\Call\Work\x}
                    }{
                        \Select\SimpleTag\x.\Close\x
                    }
                }
                \using\refrule\ChoiceRule
            \]
            \justifies
            \qtp{
                x : \FormulaF_\address
            }{
                \Call\Work\x
            }
            \using\refrule[rec]\RecRule
        \end{prooftree}
    \]

    Note that the only infinite branch in this derivation is unfair because it
    traverses infinitely many choices with rank $1 = \rankof{\Call\Work\x}$. So,
    $\Work$ is well typed.

    Concerning $\Gather$, we obtain the following typing derivation where
    $\FormulaG \eqdef \tnu\X.(\Bot \tjoin X) \branch \Bot$:
    \[
        \begin{prooftree}
            \[
                \[
                    \[
                        \[
                            \mathstrut\smash\vdots
                            \justifies
                            \qtp{
                                x : \FormulaG_{\dual ailr},
                                z : \One
                            }{
                                \Call\Gather{x,z}    
                            }
                        \]
                        \justifies
                        \qtp{
                            x : \FormulaG_{\dual ailr},
                            y : \Bot,
                            z : \One
                        }{
                            \Wait\y.\Call\Gather{x,z}
                        }
                        \using\refrule[wait]\WaitRule
                    \]
                    \justifies
                    \qtp{
                        x : (\Bot \tjoin \FormulaG)_{\dual ail},
                        z : \One
                    }{
                        \Join\x\y.\Wait\y.\Call\Gather{x,z}
                    }
                    \using\refrule[join]\JoinRule
                \]
                \[
                    \justifies
                    \qtp{
                        x : \Bot,
                        z : \One
                    }{
                        \Wait\x.\Close\z
                    }
                    \using\refrule[close]\CloseRule,\refrule[wait]\WaitRule
                \]
                \justifies
                \qtp{
                    x : ((\Bot \tjoin \FormulaG) \branch \Bot)_{\dual ai},
                    z : \One
                }{
                    \Case\x{
                        \Join\x\y.\Wait\y.\Call\Gather{x,z}
                    }{
                        \Wait\x.\Close\z
                    }            
                }
                \using\refrule[case]\CaseRule
            \]
            \justifies
            \qtp{
                x : \FormulaG_{\dual a},
                z : \One
            }{
                \Call\Gather{x,z}
            }
            \using\refrule[corec]\CorecRule
        \end{prooftree}
    \]

    Here too there is just one infinite branch, which is fair and supported by
    the $\tnu$-thread $t = (\FormulaG_{\dual a}, ((\Bot \tjoin \FormulaG) \branch
    \Bot)_{\dual ai}, (\Bot \tjoin \FormulaG)_{\dual ail}, \FormulaG_{\dual
    ailr}, \dots)$. Indeed, all the formulas in $\strip{t}$ occur infinitely
    often and $\minf\InfOften{t} = \FormulaG$ which is a $\tnu$-formula.
    Hence, $\Gather$ is well typed and so is the composition
    $\Cut\x{\Call\Work\x}{\Call\Gather{x,z}}$ in the context $z : \One$. We
    conclude that the program is fairly terminating, despite the fact that the
    composition of $\Work$ and $\Gather$ may grow arbitrarily large because
    $\Work$ may spawn an unbounded number of workers.
    \eoe
\end{example}

\begin{example}[forwarder]
    \label{ex:forwarder}
    \newcommand{\Forwarder}{\textit{Fwd}}
    In this example we illustrate a deterministic, well-typed process that
    unfolds a least fixed point infinitely many times. In particular, we
    consider once again the formulas $\FormulaF \eqdef \tmu\X.X \choice \One$ and
    $\FormulaG \eqdef \tnu\X.X \branch \Bot$ and the process $\Forwarder$ defined
    by the equation
    \[
        \Forwarder(x,y) =
        \Corec\x.
        \Rec\y.
        \Case\x{
            \Left\y.
            \Call\Forwarder{x,y}
        }{
            \Right\y.
            \Wait\x.
            \Close\y
        }
    \]
    which forwards the sequence of messages received from channel $x$ to channel
    $y$. We derive
    \[
        \begin{prooftree}
            \[
                \[
                    \[
                        \mathstrut\smash\vdots
                        \justifies
                        \qtp{
                            x : \FormulaG_{ail},
                            y : \FormulaF_{bil}
                        }{
                            \Call\Forwarder{x,y}
                        }
                    \]
                    \justifies
                    \qtp{
                        x : \FormulaG_{ail},
                        y : (\FormulaF \choice \One)_{bi}
                    }{
                        \Left\y.\Call\Forwarder{x,y}
                    }
                    \using\refrule[select]{\SelectRule}
                \]                
                \[
                    \[
                        \justifies
                        \qtp{
                            x : \Bot,
                            y : \One
                        }{
                            \Wait\x.\Close\y
                        }
                        \using\refrule[close]\CloseRule,\refrule[wait]\WaitRule
                    \]
                    \justifies
                    \qtp{
                        x : \Bot,
                        y : (\FormulaF \choice \One)_{bi}
                    }{
                        \Right\y.\Wait\x.\Close\y
                    }
                    \using\refrule[select]{\SelectRule}
                \]
                \justifies
                \qtp{
                    x : (\FormulaG \branch \Bot)_{ai},
                    y : (\FormulaF \choice \One)_{bi}
                }{
                    \Case\x{
                        \Left\y.
                        \Call\Forwarder{x,y}
                    }{
                        \Right\y.
                        \Wait\x.
                        \Close\y
                    }            
                }
                \using\refrule[case]{\CaseRule}
            \]
            \justifies
            \qtp{x : \FormulaG_a, y : \FormulaF_b}{\Call\Forwarder{x,y}}
            \using\refrule[corec]{\CorecRule}, \refrule[rec]{\RecRule}
        \end{prooftree}
    \]
    and observe that $\rankof{\Call\Forwarder{x,y}} = 0$. This typing derivation
    is valid because the only infinite branch is fair and supported by the
    $\tnu$-thread of $\FormulaG_a$.
    Note that $\FormulaF = \dual\FormulaG$ and that the derivation proves an
    instance of \refrule{\LinkRule}. In general, the axiom is admissible in
    \muMALL~\cite{BaeldeDoumaneSaurin16}.
    \eoe
\end{example}

\newcommand{\Player}{\textit{Player}}
\newcommand{\Machine}{\textit{Machine}}

\begin{example}[slot machine]
    \label{ex:slot-machine}
    Rank finiteness is not a necessary condition for well typedness. As an
    example, consider the system $\Cut\x{\Call\Player\x}{\Call\Machine{x,y}}$
    where
    \[
        \begin{array}{@{}r@{~}c@{~}l@{}}
            \Player(x) & = &
            \Rec\x.
            \parens{
                \Choice{
                    \Select\PlayTag\x.
                    \Case\x{
                        \Call\Player\x
                    }{
                        \Rec\x.\Select\QuitTag\x.\Close\x
                    }
                }{
                    \Select\QuitTag\x.\Close\x
                }
            }
            \\
            \Machine(x,y) & = &
            \Corec\x.
            \Case\x{
                \Choice{
                    \Select\WinTag\x.\Call\Machine{x,y}
                }{
                    \Select\LoseTag\x.\Call\Machine{x,y}
                }
            }{
                \Wait\x.\Close\y
            }
        \end{array}
    \]
    which models a game between a player and a slot machine. At each round, the
    player decides whether to $\PlayTag$ or to $\QuitTag$. In the first case,
    the slot machine answers with either $\WinTag$ or $\LoseTag$. If the player
    $\WinTag$s, it also $\QuitTag$s. Otherwise, it repeats the same behavior.
    It is possible to show that $\qtp{x : \FormulaF_a}{\Call\Player\x}$ and
    $\qtp{x : \FormulaG_{\dual a}, y : \One}{\Call\Machine{x,y}}$ are derivable
    where $\FormulaF \eqdef \tmu\X.(X \branch X) \choice \One$ and $\FormulaG
    \eqdef \tnu\X.(X \choice X) \branch \Bot$, see \cref{sec:extra-soundness} for
    the details.
    The only infinite branch in the derivation for $\Player$ is unfair since
    $\rankof{\Call\Player\x} = 1$, so $\Player$ is well typed.
    There are infinitely many branches in the derivation for $\Machine$
    accounting for all the sequences of $\WinTag$ and $\LoseTag$ choices that
    can be made. Since $\rankof{\Call\Machine{x,y}} = \infty$, all these
    branches are fair but also valid.
    So, the system as a whole is well typed.
    \eoe
\end{example}

\section{Related Work}
\label{sec:related}

On account of the known encodings of sessions into the linear
$\pi$-calculus~\cite{Kobayashi02b,DardhaGiachinoSangiorgi17,ScalasDardhaHuYoshida17},
\piLIN belongs to the family of process calculi providing logical foundations to
sessions and session types. Some representatives of this family are
\piDILL~\cite{CairesPfenningToninho16} and its variant equipped with a circular
proof theory \cite{DerakhshanPfenning19,Derakhshan21}, \CP~\cite{Wadler14} and
\muCP~\cite{LindleyMorris16}, among others.
There are two main aspects distinguishing \piLIN from these calculi. The first
one is that these calculi take sessions as a native feature. This fact can be
appreciated both at the level of processes, where session endpoints are
\emph{linearized} resources that can be used \emph{multiple times} albeit in a
sequential way, and also at the level of types, where the interpretation of the
$\FormulaF \tfork \FormulaG$ and $\FormulaF \tjoin \FormulaG$ formulas is skewed
so as to distinguish the type $\FormulaF$ of the message being sent/received on
a channel from the type $\FormulaG$ of the channel after the exchange has taken
place.
In contrast, \piLIN adopts a more fundamental communication model based on
\emph{linear} channels, and is thus closer to the spirit of the encoding of
linear logic proofs into the $\pi$-calculus proposed by Bellin and
Scott~\cite{BellinScott94} while retaining the same expressiveness of the
aforementioned calculi. To some extent, \piLIN is also more general than those
calculi, since a formula $\FormulaF \tfork \FormulaG$ may be interpreted as a
protocol that bifurcates into two independent sub-protocols $\FormulaF$ and
$\FormulaG$ (we have seen an instance of this pattern in
\cref{ex:parallel-programming}). So, \piLIN is natively equipped with the
capability of modeling some multiparty interactions, in addition to all of the
binary ones.
A session-based communication model identical to \piLIN, but whose type system
is based on intuitionistic rather than classical linear logic, has been
presented by DeYoung et al.~\cite{DeYoungCairesPfenningToninho12}. In that work,
the authors advocate the use of explicit continuations with the purpose of
modeling an asynchronous communication semantics and they prove the equivalence
between such model and a buffered session semantics. However, they do not draw a
connection between the proposed calculus and the linear
$\pi$-calculus~\cite{KobayashiPierceTurner99} through the encoding of binary
sessions~\cite{Kobayashi02b,DardhaGiachinoSangiorgi17} and, in the type system,
the multiplicative connectives are still interpreted asymmetrically.
The second aspect that distinguishes \piLIN from the other calculi is its type
system, which is still deeply rooted into linear logic and yet it ensures fair
termination instead of
progress~\cite{CairesPfenningToninho16,Wadler14,QianKavvosBirkedal21},
termination~\cite{LindleyMorris16} or strong
progress~\cite{DerakhshanPfenning19,Derakhshan21}. Fair termination entails
progress, strong progress and lock freedom~\cite{Kobayashi02,Padovani14}, but at
the same time it does not always rule out processes admitting infinite
executions. Simply, infinite executions are deemed unrealistic because they are
unfair.

Another difference between \piLIN and other calculi based on linear logic is
that its operational semantics is completely ordinary, in the sense that it does
not include commuting conversions, reductions under prefixes, or the swapping of
communication prefixes. The cut elimination result of \muMALL, on which the
proof of \cref{thm:soundness} is based, works by reducing cuts from the bottom
of the derivation instead of from the
top~\cite{BaeldeDoumaneSaurin16,Doumane17,BaeldeEtAl22}. As a consequence, it is
not necessary to reduce cuts guarded by prefixes or to push cuts deep into the
derivation tree to enable key reductions in \piLIN processes.

Some previous works \cite{CicconePadovani22,CicconeDagninoPadovani22} have
studied type systems ensuring the fair termination of binary and multiparty
sessions. Although the enforced property is the same and the communication
models of these works are closely related to the one we consider here, the used
techniques are quite different and the families of well-typed processes induced
by the two type systems are not comparable.
One aspect that is shared among all of these works, including the present one,
is the use of a \emph{rank} annotation or function that estimates how far a
process is from termination. In previous works for session-based
communications~\cite{CicconePadovani22,CicconeDagninoPadovani22}, ranks account
for the number of sessions that processes create and the number of times they
use subtyping before they terminate. Since ranks are required to be finite, this
means that deterministic processes can only create a finite number of sessions
and can only use subtyping a finite number of times. As a consequence, the
forwarder in \cref{ex:forwarder} is ill typed according to those typing
disciplines because it applies subtyping (in an implicit way, whenever it sends
a tag) an unbounded number of times.
At the same time, a ``compulsive'' variant of the player in
\cref{ex:slot-machine} that keeps playing until it wins is well typed in
previous type systems~\cite{CicconePadovani22,CicconeDagninoPadovani22} but ill
typed in the present one. This difference stems, at least in part, from the fact
that the previous type systems support processes defined using general
recursion, whereas \piLIN's type system relies on the duality between recursion
and corecursion. A deeper understanding of these differences requires further
investigation.

The extension of calculi based on linear logic with non-deterministic features
has recently received quite a lot of attention. Rocha and
Caires~\cite{RochaCaires21} have proposed a session calculus with shared cells
and non-deterministic choices that can model mutable state. Their typing rule
for non-deterministic choices is the same as our own, but in their calculus
choices do not reduce. Rather, they keep track of every possible evolution of a
process to be able to prove a confluence result.
Qian et al.~\cite{QianKavvosBirkedal21} introduce \emph{coexponentials}, a new
pair of modalities that enable the modeling of concurrent clients that compete
in order to interact with a shared server that processes requests sequentially.
In this setting, non-determinism arises from the unpredictable order in which
different clients are served. Interestingly, the coexponentials are derived by
resorting to their semantics in terms of least and greatest fixed points. For
this reason, the cut elimination result of \muMALL might be useful to attack the
termination proof in their setting.

\section{Concluding Remarks}
\label{sec:conclusion}

We have studied a conservative extension of
\muMALL~\cite{BaeldeDoumaneSaurin16,Doumane17,BaeldeEtAl22}, an infinitary proof
system of multiplicative additive linear logic with fixed points, that serves as
type system for \piLIN, a linear $\pi$-calculus with (co)recursive types.
Well-typed processes fairly terminate.

One drawback of the proposed type system is that establishing whether a
quasi-typed process is also well typed requires a global check on the whole
typing derivation. \REVISION{The need for a global check seems to arise commonly
in infinitary proof
systems~\cite{DaxHofmannLange06,Doumane17,BaeldeDoumaneSaurin16,BaeldeEtAl22},
so} an obvious aspect to investigate is whether the analysis can be localized. A
possible source of inspiration for devising a local type system for \piLIN might
come from the work of Derakhshan and Pfenning~\cite{DerakhshanPfenning19}. They
propose a compositional technique for dealing with infinitary typing derivations
in a session calculus, although their type system is limited to the additive
fragment of linear logic.

Fair subtyping~\cite{Padovani13,Padovani16} is a refinement of the standard
subtyping relation for session types~\cite{GayHole05} that preserves fair
termination and that plays a key role in our previous type system ensuring fair
termination for binary sessions~\cite{CicconePadovani22}. Given the rich
literature exploring the connections between linear logic and session types,
\piLIN and its type system might provide the right framework for investigating
the logical foundations of fair subtyping.

\bibliography{main}

\appendix

\section{Properties of fair termination}
\label{sec:proof-language}

A key property of any notion of fair run is that of
\emph{feasibility}~\cite{AptFrancezKatz87} also known as \emph{machine
closure}~\cite{Lamport00}, stating that every finite reduction may be extended
to a fair run. It is not difficult to prove that this property is enjoyed by our
notion of fair run.

\begin{lemma}[name=feasibility,label=lem:feasibility]
    Let $P \wred Q$ and $(P_i)_{0\leq i\leq n}$ be the sequence of processes
    visited by these reductions. Then there exists a run $\Run$ of $Q$ such that
    $(P_i)_{0\leq i<n}\Run$ is a fair run of $P$.
\end{lemma}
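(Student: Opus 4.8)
The plan is to argue by cases according to whether $Q$ is weakly terminating, after first isolating the one structural fact that makes everything work.

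That fact is a backward-preservation property of weak termination: whenever $Q \red Q'$ and $Q'$ is weakly terminating, $Q$ is weakly terminating as well, since a finite run of $Q'$ ending in a stuck process becomes a finite run of $Q$ once the step $Q \red Q'$ is prepended. Read contrapositively, if $Q$ is not weakly terminating then no reduct of it is either, and hence no process reachable from $Q$ is. I would combine this with the observation that a stuck process is vacuously weakly terminating (the one-element sequence is a finite run), so a non-weakly-terminating process always admits at least one reduction step.

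In the first case, where $Q$ is weakly terminating, I would simply take $\Run$ to be a finite run $Q = R_0 \red \cdots \red R_m$ witnessing this, with $R_m$ stuck. Because $P_{n-1} \red P_n = Q = R_0$, the concatenation $(P_i)_{0\leq i<n}\Run$ is a finite run of $P$, and finite runs are fair outright, as they mention only finitely many processes and therefore finitely many weakly terminating ones.

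In the second case, where $Q$ is not weakly terminating, I would construct $\Run$ greedily so that it never leaves the set of non-weakly-terminating processes: setting $Q_0 = Q$, each $Q_i$ is non-weakly-terminating, hence not stuck, so I may choose a step $Q_i \red Q_{i+1}$, and by the backward-preservation property $Q_{i+1}$ is again non-weakly-terminating. This yields an infinite run of $Q$ that contains \emph{no} weakly terminating process at all, so it is fair; prepending the finite prefix $(P_i)_{0\leq i<n}$ contributes only finitely many additional processes, so the concatenation remains fair and is the desired fair run of $P$. The whole argument is short; the only step that really needs to be pinned down is the backward-preservation property, and once that is in hand the two cases are immediate.
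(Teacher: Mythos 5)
Your proof is correct and follows essentially the same route as the paper's: a case split on whether $Q$ is weakly terminating, taking a finite run in the first case and an infinite run containing no weakly terminating process in the second. The only difference is that you explicitly justify the existence of that infinite run (via the backward-preservation of weak termination and the observation that stuck processes are weakly terminating, hence non-weakly-terminating processes can always step), a point the paper's proof asserts without further argument.
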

\begin{proof}
    Let $(P_i)_{0\leq i\leq n}$ be the sequence of processes visited by the
    reductions $P \wred Q$, where $P_0 = P$ and $P_n = Q$. We distinguish two
    possibilities. If $Q$ is weakly terminating, then there exists a finite run
    $\Run$ of $Q$. Then, $(P_i)_{0\leq i<n}\Run$ is a finite fair run of $P$. If
    $Q$ is not weakly terminating, then $Q$ has at least one infinite run $\Run$
    in which no process is weakly terminating. Then $(P_i)_{0\leq i<n}\Run$ is
    an infinite fair run of $P$.
\end{proof}

\thmfairtermination*
\begin{proof}
    ($\Rightarrow$)
    Let $P_0\cdots P_n$ be the sequence of processes visited by the reductions
    $P \wred Q$. By \cref{lem:feasibility} we deduce that there exists a run
    $\Run$ of $Q$ such that $P_0\cdots P_n\Run$ is a fair run $\Run$ of $P$.
    From the hypothesis that $P$ is fairly terminating we deduce that $\Run$ is
    finite. That is, $Q$ is weakly terminating.

    ($\Leftarrow$)
    Let $\Run$ be a fair run of $P$ and suppose that it is infinite. Then there
    exists some $Q$ in this run such that $P \wred Q$ and $Q$ is not weakly
    terminating, which contradicts the hypothesis.
\end{proof}

\section{Proof of Theorem~\ref{thm:subject-reduction}}
\label{sec:subject-reduction}

\begin{lemma}[substitution]
\label{lem:substitution}
	If $\qtp{\Context, x : S}{P}$ and $y \not\in \dom\Context$, then
	$\qtp{\Context, y : S}{P\subst\y\x}$.
\end{lemma}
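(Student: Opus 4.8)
The plan is to prove the statement by coinduction on the coinductively defined quasi-typing judgment, since derivations may be infinite and so ordinary induction on derivation height is unavailable. Concretely, I would exhibit the set of judgments
\[
    \mathcal{R} \eqdef \set{ \qtp{\ContextC, y : S}{P\subst\y\x} \mid \qtp{\ContextC, x : S}{P} \text{ is derivable and } y \not\in \dom\ContextC }
\]
and show that every judgment in $\mathcal{R}$ is the conclusion of an instance of one of the rules in \cref{tab:typing-rules} whose premises lie in $\mathcal{R}$ (or are themselves already derivable). By the coinduction principle this establishes that each judgment in $\mathcal{R}$ is derivable, which is exactly the claim.

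The core idea is that $\subst\y\x$ is merely the renaming of a free channel and that the typing rules are insensitive to the identity of channel names: they constrain only the types associated with names and the syntactic shape of the process. Hence I would argue by case analysis on the last rule applied in the derivation of $\qtp{\ContextC, x : S}{P}$, which is determined by the top-level shape of $P$. For the axiomatic rules \refrule{\LinkRule}, \refrule[fail]{\FailRule} and \refrule[close]{\CloseRule} there are no premises, and one checks directly that renaming $x$ to $y$ yields another valid instance of the same rule with context $\ContextC, y : S$. For the unary rules (\refrule[wait]{\WaitRule}, \refrule[join]{\JoinRule}, \refrule[select]{\SelectRule}, \refrule[rec]{\RecRule}, \refrule[corec]{\CorecRule}) and for \refrule{\ChoiceRule}, I would apply the same rule to $P\subst\y\x$ and appeal to the coinductive hypothesis on the premise(s), observing that the free name $x$ either is the subject of the action---in which case it is renamed to $y$ in the conclusion---or occurs inside the continuation's context and is carried along by the substitution.

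The two rules that split the context, \refrule{\CutRule} and \refrule[fork]{\ForkRule}, require a little bookkeeping: since $x \in \dom{(\ContextC, x : S)}$ and the two premises have disjoint domains, $x$ occurs free in exactly one of the two subprocesses, so the substitution propagates into that subprocess only; I would then invoke the coinductive hypothesis on the corresponding premise while leaving the other untouched, noting that $y \not\in \dom\ContextD$ follows from the freshness of $y$. The main point of care, and the only real obstacle, is capture avoidance in the rules that introduce a fresh bound name (\refrule[join]{\JoinRule}, \refrule[fork]{\ForkRule}, \refrule{\CutRule}, \refrule[rec]{\RecRule}, \refrule[corec]{\CorecRule}, \refrule[select]{\SelectRule}): exploiting that processes are identified up to renaming of bound names, I would assume the introduced name is distinct from $y$ so that $\subst\y\x$ commutes with peeling off the prefix. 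Throughout, since substitution alters channel names but never types or their addresses, the implicit invariant that the types in a context have pairwise disjoint addresses is trivially preserved, so no extra reasoning about addresses is needed.
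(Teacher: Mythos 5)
Your proof is correct and follows essentially the same route as the paper, whose entire proof of this lemma reads ``A simple application of the coinduction principle'': you have simply spelled out that application, with the relation $\mathcal{R}$, the case analysis on the last rule, and the capture-avoidance and address-disjointness checks. The details you supply (substitution propagating into exactly the premise whose context contains $x$, bound names chosen distinct from $y$) are exactly what the paper leaves implicit.
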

\begin{proof}
	A simple application of the coinduction principle.
\end{proof}


\begin{lemma}[quasi subject congruence]
\label{lem:quasi-subject-congruence}
	If\/ $\qtp\Context{P}$ and $P \pcong Q$, then $\qtp\Context{Q}$.
\end{lemma}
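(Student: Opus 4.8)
The relation $\pcong$ is the least precongruence containing the three axioms \refrule{s-link}, \refrule{s-comm} and \refrule{s-assoc}, so the plan is to prove the statement by rule induction on the derivation of $P \pcong Q$. Since such derivations are finite, the coinductive nature of quasi-typing causes no difficulty: at each step I only rearrange a finite portion near the root of the (possibly infinite) quasi-typing derivation of $P$, leaving every untouched subderivation as a black box. The reflexive case is immediate and the transitive case follows by composing the two induction hypotheses. For the congruence cases, where the rewriting happens under a constructor --- say $P = \Wait\x.P'$ and $Q = \Wait\x.Q'$ with $P' \pcong Q'$ --- I would first appeal to inversion of quasi-typing: because each syntactic form is the subject of exactly one typing rule, a derivable judgment whose subject is $\Wait\x.P'$ must end with \refrule[wait]{\WaitRule}, exposing $\qtp{\Context'}{P'}$ with $\Context = \Context', x : \Bot$. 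The induction hypothesis gives $\qtp{\Context'}{Q'}$, and re-applying \refrule[wait]{\WaitRule} yields $\qtp\Context{Q}$. All the other congruence cases are identical in spirit, so the real content lies in the three axioms.

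The axioms \refrule{s-link} and \refrule{s-comm} are easy. For $\Link\x\y \pcong \Link\y\x$, I invert \refrule{\LinkRule} to obtain $\Context = x : \Formula_\addressA, y : \dual\Formula_\addressB$, and then re-apply \refrule{\LinkRule} to $\Link\y\x$ reading $y$ as carrying $\dual\Formula$ and $x$ as carrying $\dual{\dual\Formula} = \Formula$; this is licensed precisely because the rule constrains only the two types to be dual and places no constraint on the addresses. For $\Cut\x{P_1}{P_2} \pcong \Cut\x{P_2}{P_1}$, I invert \refrule{\CutRule} to get $\qtp{\ContextC, x : S}{P_1}$ and $\qtp{\ContextD, x : \dual S}{P_2}$ with $\Context = \ContextC, \ContextD$, and then re-apply \refrule{\CutRule} with the premises swapped, using involutivity of $\dual{(\cdot)}$ (so $\dual{\dual S} = S$) and commutativity of the disjoint context union ($\ContextC, \ContextD = \ContextD, \ContextC$).

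The axiom \refrule{s-assoc} is the crux and the step I expect to cost the most bookkeeping. Starting from $\qtp\Context{\Cut\x{P_1}{\Cut\y{P_2}{P_3}}}$ I would invert the outer \refrule{\CutRule} to get $\qtp{\ContextC, x : S}{P_1}$ and $\qtp{\ContextD, x : \dual S}{\Cut\y{P_2}{P_3}}$ with $\Context = \ContextC, \ContextD$, and then invert the inner \refrule{\CutRule} to get $\qtp{\ContextD_1, y : T}{P_2}$ and $\qtp{\ContextD_2, y : \dual T}{P_3}$ with $\ContextD_1, \ContextD_2 = \ContextD, x : \dual S$. The key observation is that the side condition $x \in \fn{P_2}$ forces the association $x : \dual S$ to sit in the context of $P_2$, i.e. $\ContextD_1 = \ContextD_3, x : \dual S$ with $\ContextD_3, \ContextD_2 = \ContextD$. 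I can then rebuild the derivation the other way: one \refrule{\CutRule} on $x$ combines $P_1$ and $P_2$ into $\qtp{\ContextC, \ContextD_3, y : T}{\Cut\x{P_1}{P_2}}$, and a second \refrule{\CutRule} on $y$ combines this with $P_3$ into $\qtp{\ContextC, \ContextD_3, \ContextD_2}{\Cut\y{\Cut\x{P_1}{P_2}}{P_3}}$, whose context equals $\ContextC, \ContextD = \Context$.

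To discharge the side conditions of the two new \refrule{\CutRule} applications I would argue disjointness of domains from freshness: $y$ is the bound name of the inner cut, hence $y \notin \dom\Context$ and in particular $y \notin \dom\ContextC$, which is exactly what the first cut needs (the side condition $y \notin \fn{P_1}$ of \refrule{s-assoc} records the same fact); the remaining domains are pairwise disjoint because they are subcontexts of the disjoint union $\Context$. Finally, the implicit global requirement that the types in a context have pairwise disjoint addresses is preserved throughout, since all of these manipulations only permute and re-associate the very same typed channels without ever introducing a new type. Note that nowhere do I need to reason about $\tnu$-threads, fairness or validity, because the claim concerns quasi-typing ($\qtp{}{}$) rather than well-typing; this is what keeps every case a purely structural rearrangement of cuts.
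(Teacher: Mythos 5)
Your proof is correct and follows essentially the same route as the paper: rule induction on the derivation of $P \pcong Q$ with inversion of \refrule{\LinkRule} and \refrule{\CutRule} in the three base cases, concluding by re-applying the rules (twice, in the \refrule{s-assoc} case). Your treatment is somewhat more detailed than the paper's — you spell out the closure cases and the use of $x \in \fn{P_2}$ to force $x : \dual{S}$ into $P_2$'s context, facts the paper leaves implicit — but the underlying argument is the same.
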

\begin{proof}
	By induction on the derivation of $P \pcong Q$ and by cases on the last rule
	applied. We only discuss the base cases.

	\begin{description}
		\item[Case \refrule{s-link}.]
		Then $P = \Link\x\y \pcong \Link\y\x = Q$.
		From \refrule{\LinkRule} we deduce that there exist $\Formula$,
		$\addressA$ and $\addressB$ such that $\Context = x :
		\Formula_\addressA, y : \dual{\Formula_\addressB}$. We conclude with an
		application of \refrule{\LinkRule}.

		\item[Case \refrule{s-comm}.]
		Then $P = \Cut\x{P_1}{P_2} \pcong \Cut\x{P_2}{P_1} = Q$. From
		\refrule{\CutRule} we deduce that there exist $\Context_1$,
		$\Context_2$, and $S$ such that
		\begin{itemize}
		\item $\Context = \Context_1, \Context_2$
		\item $\qtp{\Context_1, x : S}{P_1}$
		\item $\qtp{\Context_2, x : \dual{S}}{P_2}$
		\end{itemize}
		We conclude with an application of \refrule{\CutRule}.

		\item[Case \refrule{s-assoc}.]
		Then $P = \Cut\x{P_1}{\Cut\y{P_2}{P_3}} \pcong
		\Cut\y{\Cut\x{P_1}{P_2}}{P_3} = Q$ and $\x \in \fn{P_2}$.
		From \refrule{\CutRule} we deduce that there exist $\Context_1$,
		$\Context_2$, $\Context_3$, $S$, $T$ such that
		\begin{itemize}
		\item $\Context = \Context_1, \Context_2, \Context_3$
		\item $\qtp{\Context_1, x : S}{P_1}$
		\item $\qtp{\Context_2, x : \dual{S}, y : T}{P_2}$
		\item $\qtp{\Context_3, y : \dual{T}}{P_3}$
		\end{itemize}
		We conclude with two applications of \refrule{\CutRule}.
		\qedhere
	\end{description}
\end{proof}


\begin{lemma}[quasi subject reduction]
	\label{lem:quasi-subject-reduction}
	If $P \red Q$ then $\qtp\Context{P}$ implies $\qtp\Context{Q}$.
\end{lemma}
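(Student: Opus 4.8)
The plan is to proceed by induction on the derivation of $P \red Q$, with a case analysis on the last reduction rule applied. The structure mirrors the proof of \cref{lem:quasi-subject-congruence}: for each reduction rule I invert the typing derivation of $P$ to recover the shape of the typing contexts and types involved, and then reassemble a typing derivation for $Q$ using the same rules, possibly rearranged. Since quasi typing is defined coinductively, inversion on a single rule application at the root is still sound, because each typing rule is syntax-directed on the outermost process constructor; the coinductive nature only concerns the infinite depth of derivations, not the local inversion step.

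For the communication rules \refrule{r-unit}, \refrule{r-pair}, \refrule{r-sum}, and \refrule{r-rec}, the shared pattern is: $P = \Cut\x{P_1}{P_2}$ where $P_1$ and $P_2$ are dual prefixed processes. First I apply inversion on \refrule{\CutRule} to split the context as $\ContextC, \ContextD$ with $\qtp{\ContextC, x : S}{P_1}$ and $\qtp{\ContextD, x : \dual S}{P_2}$; then I invert the typing rules governing $P_1$ and $P_2$ to expose their premises. For \refrule{r-pair}, for instance, $S = S' \tfork T'$ and $\dual S = \dual{S'} \tjoin \dual{T'}$, and inverting \refrule{\ForkRule} and \refrule{\JoinRule} yields typing derivations for the continuations that I recombine via two nested cuts on $y$ and $z$, exactly as shown in the rearrangement following \cref{thm:subject-reduction} in the main text. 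The \refrule{r-rec} case uses that \refrule{\RecRule} and \refrule{\CorecRule} unfold dual fixed points to dual unfoldings $S\subst{\tmu\X.S}\X$ and its dual $\dual S\subst{\tnu\X.\dual S}\X$, so the continuation cut typechecks directly. The \refrule{r-link} case relies on \cref{lem:substitution} to justify the substitution $\subst\y\x$, after inverting \refrule{\LinkRule} to learn that $x$ and $y$ carry dual formulas. The rule \refrule{r-choice} is immediate: inverting \refrule{\ChoiceRule} gives $\qtp\Context{P_i}$ for both $i$, so either reduct is typed in the same context.

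The two closure rules require care. For \refrule{r-cut}, where $P = \Cut\x{P'}{R} \red \Cut\x{Q'}{R} = Q$ because $P' \red Q'$, I invert \refrule{\CutRule}, apply the induction hypothesis to the subderivation for $P'$ to obtain $\qtp{\ContextC, x : S}{Q'}$, and re-apply \refrule{\CutRule}. For \refrule{r-struct}, where $P \pcong R \red Q$, I first invoke \cref{lem:quasi-subject-congruence} to transport typing across the structural precongruence, obtaining $\qtp\Context{R}$, and then apply the induction hypothesis to $R \red Q$. The main obstacle I anticipate is purely bookkeeping rather than conceptual: I must check that all the implicit side conditions on typing contexts are preserved, in particular that the range types retain pairwise disjoint addresses after the reduction and that the domains stay disjoint where \refrule{\CutRule} is reapplied. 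Because the reduction rules only redistribute the existing continuation channels (introduced by \refrule{\ForkRule}, \refrule{\JoinRule}, \refrule{\SelectRule}, etc.) whose addresses are already fixed as sub-addresses of $x$'s address, these invariants follow from the address-composition conventions of \cref{sec:types}, but verifying this in each case is the step that demands the most attention. Crucially, this lemma speaks only about quasi typing and therefore ignores the validity condition entirely; the harder argument that fairness and validity are preserved is deferred to the full proof of \cref{thm:subject-reduction}.
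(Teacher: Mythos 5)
Your proposal is correct and follows essentially the same route as the paper's proof: induction on the derivation of $P \red Q$, inversion of \refrule{\CutRule} and the relevant left/right rules in each communication case, \cref{lem:substitution} for \refrule{r-link}, \cref{lem:quasi-subject-congruence} for \refrule{r-struct}, and the induction hypothesis plus re-application of \refrule{\CutRule} for \refrule{r-cut}. Your additional remarks on the soundness of inversion for coinductive derivations and on preservation of the address-disjointness side conditions are sensible bookkeeping that the paper leaves implicit, but they do not change the argument.
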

\begin{proof}
	By induction on $P \red Q$ and by cases on the last rule applied.

	\begin{description}
		\item[Case \refrule{r-link}.]
		Then $P = \Cut\x{\Link\x\y}{R} \red R\subst\y\x = Q$.
		From \refrule{\CutRule} and \refrule{\LinkRule} we deduce that there
		exist $\Formula_\addressA$, $\Formula_\addressB$ and $\ContextD$ such
		that $\Context = y : \dual\Formula_{\dual\addressB}, \ContextD$ and
		$\qtp{\ContextD, x : \dual\Formula_{\dual\addressB}}{R}$
		We conclude by applying \cref{lem:substitution}.

		\item[Case \refrule{r-unit}.]
		Then $P = \Cut\x{\Close\x}{\Wait{x}.Q} \red Q$.
		From \refrule\CutRule, \refrule[close]{\CloseRule} and
		\refrule[wait]{\WaitRule} we conclude $\qtp\Context{Q}$.

		\item[Case \refrule{r-pair}.]
		Then $P = \Cut\x{\Fork[z]\x\y{P_1}{P_2}}{\Join[z]\x\y.P_3} \red
		\Cut\y{P_1}{\Cut\z{P_2}{P_3}} = Q$.
		From \refrule{\CutRule}, \refrule[fork]{\ForkRule} and
		\refrule[join]{\JoinRule} we deduce that there exist $\Context_1$,
		$\Context_2$, $\Context_3$, $S$ and $T$ such that
		\begin{itemize}
		\item $\Context = \Context_1,\Context_2,\Context_3$
		\item $\qtp{\Context_1, y : S}{P_1}$
		\item $\qtp{\Context_2, z : T}{P_2}$
		\item $\qtp{\Context_3, y : \dual{S}, z : \dual{T}}{P_3}$
		\end{itemize}
		We conclude with two applications of \refrule{\CutRule}.
		
		\item[Case \refrule{r-sum}.]
		Then $P = \Cut\x{\Select[y]{\InTag_i}\x.R}{\Case[y]\x{P_1}{P_2}} \red
		\Cut\y{R}{P_i} = Q$ for some $i\in\set{1,2}$.
		From \refrule{\CutRule}, \refrule[select]{\SelectRule} and
		\refrule[case]{\CaseRule} we deduce that there exist $\Context_1$,
		$\Context_2$, $S_1$ and $S_2$ such that
		\begin{itemize}
		\item $\Context = \Context_1, \ContextD$
		\item $\qtp{\Context_1, y : S_i}{R}$
		\item $\qtp{\Context_2, y : \dual{S_1}}{P_1}$
		\item $\qtp{\Context_2, y : \dual{S_2}}{P_2}$
		\end{itemize}
		We conclude with an application of \refrule{\CutRule}.

		\item[Case \refrule{r-rec}.]
		Then $P = \Cut\x{\Rec[y]\x.P_1}{\Corec[y]\x.P_2} \red \Cut\y{P_1}{P_2} = Q$.
		From \refrule{\CutRule}, \refrule[rec]{\RecRule} and
		\refrule[corec]{\CorecRule} we deduce that there exist $\Context_1$,
		$\Context_2$ and $S$ such that
		\begin{itemize}
		\item $\Context = \Context_1, \Context_2$
		\item $\qtp{\Context_1, y : S\subst{\tmu\X.S}\X}{P_1}$
		\item $\qtp{\Context_2, y : \dual{S}\subst{\tnu\X.\dual{S}}{\X}}{P_2}$
		\end{itemize}
		We conclude with an application of \refrule{\CutRule}.

		\item[Case \refrule{r-choice}.]
		Then $P = \Choice{P_1}{P_2} \red P_i = Q$ for some $i \in \set{1,2}$.
		From \refrule{\ChoiceRule} we conclude that $\qtp\Context{P_i}$.

		\item[Case \refrule{r-cut}.]
		Then $P = \Cut\x{P'}{R} \red \Cut\x{Q'}{R} = Q$ and $P' \red Q'$.
		From \refrule{\CutRule} we deduce that there exist $\Context_1$,
		$\Context_2$ and $S$ such that
		\begin{itemize}
		\item $\Context = \Context_1, \Context_2$
		\item $\qtp{\Context_1, x : S}{P'}$
		\item $\qtp{\Context_2, x : \dual{S}}{R}$
		\end{itemize}
		Using the induction hypothesis on $P' \red Q'$ we deduce
		$\qtp{\Context_1, x : S}{Q'}$.
		We conclude with an application of \refrule{\CutRule}.

		\item[Case \refrule{r-struct}.]
		Then $P \pcong R \red Q$ for some $R$.
		From \cref{lem:quasi-subject-congruence} we deduce $\qtp\Context{R}$.
		Using the induction hypothesis on $R \red Q$ we conclude
		$\qtp\Context{Q}$.
		\qedhere
	\end{description}
\end{proof}

\thmsubjectreduction*
\begin{proof}
	From the hypothesis $\wtp\Context{P}$ we deduce $\qtp\Context{P}$. Using
	\cref{lem:quasi-subject-reduction} we deduce $\qtp\Context{Q}$.
	Now, let $\gamma$ be an infinite fair branch in the typing derivation for
	$\qtp\Context{Q}$. By inspecting the proof of
	\cref{lem:quasi-subject-reduction} we observe that $\gamma$ can be
	decomposed as $\gamma = \gamma_1\gamma_2$ where $\gamma_2$ is a branch in
	the typing derivation for $\qtp\Context{Q}$. From the fact that $\gamma$ is
	fair we deduce that so is $\gamma_2$.
	From the hypothesis $\wtp\Context{P}$ we deduce that $\gamma_2$ is valid,
	namely there is a $\tnu$-thread $t$ in it. Then $t$ is a $\tnu$-thread also of
	$\gamma$, that is $\gamma$ is also valid.
	We conclude $\wtp\Context{Q}$.
\end{proof}

\newcommand{\ProcessContext}{\ProcessContextC}
\newcommand{\ProcessContextC}{\mathcal{C}}
\newcommand{\ProcessContextD}{\mathcal{D}}
\newcommand{\Hole}{\bracks~}

\newcommand{\Resolve}[1]{\lfloor#1\rfloor}

\section{Soundness of the Type System}
\label{sec:proof-soundness}

\subsection{Proximity Lemma and Multicuts}

The cut elimination result of \muMALL is proved by introducing a \emph{multicut}
rule that collapses several unguarded cuts in a single rule having a variable
number of premises. The usefulness of working with multicuts is that they
prevent infinite sequences of reductions where two cuts are continuously
permuted in a non-productive way and no real progress is made in the cut
elimination process. In the type system of \piLIN we do not have a typing rule
corresponding to the multicut. At the same time, the troublesome permutations of
cut rules correspond to applications of the associativity law of parallel
composition, namely of the \refrule{s-assoc} pre-congruence rule. That is, the
introduction of the multicut rule in the cut elimination proof of \muMALL
corresponds to working with \piLIN terms considered equal up to structural
pre-congruence. Nonetheless, since the \piLIN reduction rules require two
processes willing to interact on some channel $x$ to be the children of the same
application of \refrule\CutRule, it is not entirely obvious that \emph{not}
introducing an explicit multicut rule allows us to perform in \piLIN all the
principal reductions that are performed during the cut elimination process in a
\muMALL proof.

Here we show that this is actually the case by proving a so-called
\emph{proximity lemma}, showing that every well-typed process can be rewritten
in a structurally pre-congruent one where any two processes in which the same
channel $x$ occurs free are the children of a cut on $x$.
To this aim, we introduce \emph{process contexts} to refer to unguarded
sub-terms of a process. Process contexts are processes with a single unguarded
\emph{hole} $\Hole$ and are generated by the following grammar:
\[
    \textbf{Process context} \qquad
    \ProcessContextC, \ProcessContextD ~~::=~~
    \Hole ~~\mid~~
    \Cut\x\ProcessContext{P} ~~\mid~~
    \Cut\x{P}\ProcessContext
\]

As usual we write $\ProcessContext[P]$ for the process obtained by replacing the
hole in $\ProcessContext$ with $P$. Note that this substitution is not
capture-avoiding in general and some free names occurring in $P$ may be captured
by binders in $\ProcessContext$. We write $\bn\cdot$ for the function
inductively defined by
\[
  \bn\Hole = \emptyset
  \text{\qquad and \qquad}
  \bn{\Cut\x\ProcessContext{P}} = \bn{\Cut\x{P}\ProcessContext} = \set\x \cup \bn\ProcessContext
\]

The next lemma shows that a cut on $x$ can always be pushed down towards any
process in which $x$ occurs free.

\begin{lemma}
    \label{lem:push-down}
    If $x\in\fn{P} \setminus \bn\ProcessContextC$, then $\Cut\x{\ProcessContextC[P]}{Q} \pcong
    \ProcessContextD[\Cut\x{P}{Q}]$ for some $\ProcessContextD$.
  \end{lemma}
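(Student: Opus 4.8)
The plan is to proceed by structural induction on the process context $\ProcessContextC$, treating $\pcong$ throughout as a reflexive, transitive congruence (in particular closed under the cut constructor) and working modulo renaming of bound names as usual. In the base case $\ProcessContextC = \Hole$ we have $\ProcessContextC[P] = P$, so $\Cut\x{\ProcessContextC[P]}{Q}$ is literally $\Cut\x{P}{Q}$ and we may take $\ProcessContextD = \Hole$; here the hypothesis $x \in \fn{P}$ is exactly $x \in \fn{P} \setminus \bn\Hole$, so nothing further is required.

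For the inductive step, a non-hole context has, up to \refrule{s-comm}, the shape $\ProcessContextC = \Cut\y{\ProcessContextC'}{R}$ for some channel $y$, smaller context $\ProcessContextC'$ and process $R$ (the symmetric case $\Cut\y{R}{\ProcessContextC'}$ is handled by an additional application of \refrule{s-comm} on the inner cut). From $x \in \fn{P} \setminus \bn\ProcessContextC$ and $\bn\ProcessContextC = \set\y \cup \bn{\ProcessContextC'}$ we get $x \neq y$ and $x \in \fn{P}\setminus\bn{\ProcessContextC'}$; in particular $x$ still occurs free in $\ProcessContextC'[P]$, since the free occurrence of $x$ in $P$ is not captured by $\ProcessContextC'$. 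First I would use \refrule{s-comm} to turn $\Cut\x{\Cut\y{\ProcessContextC'[P]}{R}}{Q}$ into $\Cut\x{Q}{\Cut\y{\ProcessContextC'[P]}{R}}$, then apply \refrule{s-assoc} — whose side conditions $x \in \fn{\ProcessContextC'[P]}$ and $y \notin \fn{Q}$ now hold — to obtain $\Cut\y{\Cut\x{Q}{\ProcessContextC'[P]}}{R}$, and finally \refrule{s-comm} again on the inner cut to reach $\Cut\y{\Cut\x{\ProcessContextC'[P]}{Q}}{R}$. The induction hypothesis applied to $\ProcessContextC'$ gives $\Cut\x{\ProcessContextC'[P]}{Q} \pcong \ProcessContextD'[\Cut\x{P}{Q}]$ for some $\ProcessContextD'$, and by congruence we conclude with $\ProcessContextD = \Cut\y{\ProcessContextD'}{R}$, since then $\ProcessContextD[\Cut\x{P}{Q}] = \Cut\y{\ProcessContextD'[\Cut\x{P}{Q}]}{R}$.

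The one delicate point, and the main obstacle, is discharging the side condition $y \notin \fn{Q}$ of \refrule{s-assoc}: this is not given a priori, but since $y$ is a name bound by the cut inside $\ProcessContextC$, we may always rename it apart from $\fn{Q}$ before the manipulation, invoking the convention that processes are identified up to renaming of bound names. The companion side condition $x \in \fn{\ProcessContextC'[P]}$ follows from the already noted fact that the free occurrence of $x$ in $P$ survives in $\ProcessContextC'[P]$ because $x \notin \bn{\ProcessContextC'}$. Everything else is routine bookkeeping of which instances of \refrule{s-comm} and \refrule{s-assoc} to apply, relying only on $\pcong$ being a congruence so that the rewriting produced by the induction hypothesis can be carried out underneath the outer cut on $y$.
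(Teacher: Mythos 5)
Your proof is correct and follows essentially the same route as the paper's: induction on the structure of $\ProcessContextC$, with the hole case by reflexivity and the cut case handled by the same \refrule{s-comm}/\refrule{s-assoc}/\refrule{s-comm} chain followed by the induction hypothesis under the outer cut, taking $\ProcessContextD = \Cut\y{\ProcessContextD'}{R}$. Your explicit discharge of the side condition $y \notin \fn{Q}$ by renaming the cut-bound name apart is a point the paper leaves implicit, so that remark is a welcome (but not divergent) addition.
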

\begin{proof}
    By induction on the structure of $\ProcessContextC$ and by cases on its
    shape.

    \begin{description}
        \item[Case $\ProcessContextC = \Hole$.]
        We conclude by taking $\ProcessContextD \eqdef \Hole$ using the
        reflexivity of $\pcong$. 

        \item[Case $\ProcessContextC = \Cut\y{\ProcessContextC'}{R}$.]
        From the hypothesis $x \in \fn{P} \setminus \bn\ProcessContextC$ we
        deduce $x \ne y$ and $x \in \fn{P} \setminus \bn{\ProcessContextC'}$.
        Using the induction hypothesis we deduce that there exists
        $\ProcessContextD'$ such that $\Cut\x{\ProcessContextC'[P]}{Q} \pcong
        \ProcessContextD'[\Cut\x{P}{Q}]$. Take $\ProcessContextD \eqdef
        \Cut\y{\ProcessContextD'}{R}$. We conclude
        \[
            \begin{array}{rcll}
                \Cut\x{\ProcessContextC[P]}{Q}
                & = & \Cut\x{\Cut\y{\ProcessContextC'[P]}{R}}{Q} & \text{by definition of $\ProcessContextC$}
                \\
                & \pcong & \Cut\x{Q}{\Cut\y{\ProcessContextC'[P]}{R}} & \text{by \refrule{s-comm}}
                \\
                & \pcong & \Cut\y{\Cut\x{Q}{\ProcessContextC'[P]}}{R} & \text{from $x \in \fn{P} \setminus \bn\ProcessContextC$ and \refrule{s-assoc}}
                \\
                & \pcong & \Cut\y{\Cut\x{\ProcessContextC'[P]}{Q}}{R} & \text{by \refrule{s-comm}}
                \\
                & \pcong & \Cut\y{\ProcessContextD'[\Cut\x{P}{Q}]}{R} & \text{using the induction hypothesis}
                \\
                & = & \ProcessContextD[\Cut\x{P}{Q}] & \text{by definition of $\ProcessContextD$}
            \end{array}
        \]

        \item[Case $\ProcessContextC = \Cut\y{R}{\ProcessContextC'}$.]
        Analogous to the previous case.
        \qedhere
    \end{description}
\end{proof}

The proximity lemma is then proved by applying \cref{lem:push-down} twice.

\begin{lemma}[proximity]
    \label{lem:proximity}
    If $x\in\fn{P_i}\setminus\bn{\ProcessContextC_i}$ for $i=1,2$, then
    $\ProcessContextC[\Cut\x{\ProcessContextC_1[P_1]}{\ProcessContextC_2[P_2]} \pcong
    \ProcessContextD[\Cut\x{P_1}{P_2}]$ for some $\ProcessContextD$.
\end{lemma}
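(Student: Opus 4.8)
The plan is to apply the push-down lemma (\cref{lem:push-down}) twice, once for each of the two process contexts $\ProcessContextC_1$ and $\ProcessContextC_2$, interleaving the applications with uses of \refrule{s-comm} so that the context being eliminated always sits on the \emph{left} operand of the cut, which is the shape required by \cref{lem:push-down}.

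First I would eliminate $\ProcessContextC_1$. Since $x\in\fn{P_1}\setminus\bn{\ProcessContextC_1}$ by hypothesis, \cref{lem:push-down} applied with $P \eqdef P_1$, $\ProcessContextC \eqdef \ProcessContextC_1$ and $Q \eqdef \ProcessContextC_2[P_2]$ yields a context $\ProcessContextD_1$ such that $\Cut\x{\ProcessContextC_1[P_1]}{\ProcessContextC_2[P_2]} \pcong \ProcessContextD_1[\Cut\x{P_1}{\ProcessContextC_2[P_2]}]$. Next I would work on the residual cut $\Cut\x{P_1}{\ProcessContextC_2[P_2]}$: here $\ProcessContextC_2$ sits on the right, so I first commute by \refrule{s-comm} to get $\Cut\x{\ProcessContextC_2[P_2]}{P_1}$, then apply \cref{lem:push-down} again, now with $P \eqdef P_2$, $\ProcessContextC \eqdef \ProcessContextC_2$, $Q \eqdef P_1$, using $x\in\fn{P_2}\setminus\bn{\ProcessContextC_2}$, to obtain a context $\ProcessContextD_2$ with $\Cut\x{\ProcessContextC_2[P_2]}{P_1} \pcong \ProcessContextD_2[\Cut\x{P_2}{P_1}]$, and finally commute the inner cut back with \refrule{s-comm} to recover $\Cut\x{P_1}{P_2}$. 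Composing the two contexts, I would take $\ProcessContextD$ to be the context obtained by plugging $\ProcessContextD_2$ into the hole of $\ProcessContextD_1$, which is again a process context, so that the two chains assemble into $\Cut\x{\ProcessContextC_1[P_1]}{\ProcessContextC_2[P_2]} \pcong \ProcessContextD[\Cut\x{P_1}{P_2}]$.

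Both steps rely on rewriting subterms that sit underneath the already-produced context $\ProcessContextD_1$ (and underneath $\ProcessContextD_2$ for the last commutation). This is justified by the fact that $\pcong$ is a \emph{pre-congruence}, hence closed under the cut contexts out of which process contexts are built: from $R \pcong R'$ one derives $\ProcessContextD_1[R] \pcong \ProcessContextD_1[R']$ by the closure of $\pcong$ under \refrule{\CutRule} on either side, and transitivity then links the successive approximations. The point requiring care — and the step I expect to be the chief obstacle — is checking that these manipulations never capture the channel $x$: the residual cut on $x$ must stay un-captured by the binders of $\ProcessContextD_1$ and $\ProcessContextD_2$. This follows because each binder introduced by $\ProcessContextD_i$ is already a binder of $\ProcessContextC_i$, and the side conditions $x\notin\bn{\ProcessContextC_i}$ guarantee that every such binder is distinct from $x$; working up to renaming of bound names, no further capture can arise. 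Once this bookkeeping is settled, transitivity of $\pcong$ delivers the claim.
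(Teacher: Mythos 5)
Your proof is correct and follows essentially the same route as the paper's: two applications of \cref{lem:push-down} interleaved with \refrule{s-comm} swaps (which you label correctly, where the paper's proof cites \refrule{s-assoc} for those steps), with the resulting contexts composed at the end. The only detail to add is that the lemma's statement wraps the initial cut in an outer context $\ProcessContextC$, so the final witness should be $\ProcessContextD \eqdef \ProcessContextC[\ProcessContextD_1[\ProcessContextD_2]]$ rather than $\ProcessContextD_1[\ProcessContextD_2]$ --- the extra wrapping being justified by exactly the closure of $\pcong$ under process contexts that you already invoke for rewriting beneath $\ProcessContextD_1$.
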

\begin{proof}
    It suffices to apply \cref{lem:push-down} twice, thus:
    \[
        \begin{array}{@{}r@{~}c@{~}ll@{}}
            \ProcessContextC[\Cut\x{\ProcessContextC_1[P_1]}{\ProcessContextC_2[P_2]}
            & \pcong & \ProcessContextC[\ProcessContextD_1[\Cut\x{P_1}{\ProcessContextC_2[P_2]}]]
            & \text{using the hypothesis and \cref{lem:push-down}}
            \\
            & \pcong & \ProcessContextC[\ProcessContextD_1[\Cut\x{\ProcessContextC_2[P_2]}{P_1}]]
            & \text{by \refrule{s-assoc}}
            \\
            & \pcong & \ProcessContextC[\ProcessContextD_1[\ProcessContextD_2[\Cut\x{P_2}{P_1}]]]
            & \text{using the hypothesis and \cref{lem:push-down}}
            \\
            & \pcong & \ProcessContextC[\ProcessContextD_1[\ProcessContextD_2[\Cut\x{P_1}{P_2}]]]
            & \text{by \refrule{s-assoc}}
        \end{array}
    \]
    and we conclude by taking $\ProcessContextD \eqdef \ProcessContextC[\ProcessContextD_1[\ProcessContextD_2]]$.
\end{proof}

\subsection{Auxiliary Properties of Well-Typed Processes}

Most of the soundness proof of the type system relies on the cut elimination
result of \muMALL. Here we gather some auxiliary definitions and properties of
well-typed processes.
First of all, we define a function that makes a ``fair choice'' among two
processes $P$ and $Q$, by selecting the one with smaller rank. If $P$ and $Q$
happen to have the same rank, we choose $P$ by convention.

\begin{definition}
    \label{def:fc}
    The \emph{fair choice} among $P$ and $Q$, written $\fc{P}{Q}$, is defined by
    \[
        \fc{P}{Q} \eqdef
        \begin{cases}
            P & \text{if $\rankof{P} \leq \rankof{Q}$} \\
            Q & \text{otherwise}
        \end{cases}
    \]
\end{definition}

From its definition we have $\rankof{\fc{P}{Q}} = \min\set{ \rankof{P},
\rankof{Q} }$.
Next, we show that in a well-typed process there cannot be an infinite sequence
of choices if we follow the fair ones. To this aim, we define a total function
on processes that computes the length of the longest chain of subsequent fair
choices.

\begin{definition}
    \label{def:depth}
    Let $\depth\cdot$ be the function from processes to $\RankSet$ such that
    \[
        \depth{P} =
        \begin{cases}
            1 + \depth{\fc{P_1}{P_2}} & \text{if $P = \Choice{P_1}{P_2}$} \\
            0 & \text{otherwise}
        \end{cases}
    \]
\end{definition}

Note that $\depth{\fc{P}{Q}} < 1 + \depth{\fc{P}{Q}} = \depth{\Choice{P}{Q}}$.
We prove that, for well-typed processes, $\depth\cdot$ always yields a natural
number. That is, in a well-typed process there is no infinite chain of fair
choices.

\begin{lemma}
    \label{lem:depth}
    If $\wtp\Context{P}$ then $\depth{P} \in \Nat$.
\end{lemma}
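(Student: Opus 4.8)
The plan is to argue by contradiction, exploiting the validity condition enjoyed by well-typed processes. Suppose $\depth P = \infty$. Unfolding \cref{def:depth}, this means that starting from $P$ the chain of fair choices never stops: there is an infinite sequence $P = Q_0, Q_1, Q_2, \dots$ in which every $Q_i$ has the form $\Choice{Q_i^1}{Q_i^2}$ and $Q_{i+1} = \fc{Q_i^1}{Q_i^2}$. Since every application of \refrule\ChoiceRule has the fair choice $\fc{Q_i^1}{Q_i^2}$ among its premises and leaves the context untouched, this sequence is an infinite branch $\gamma$ (in the sense of \cref{def:branch}) of the typing derivation of $\qtp\Context P$ along which the context is constantly equal to $\Context$. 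The idea is to show that $\gamma$ is fair and therefore, by the hypothesis $\wtp\Context P$ together with \cref{def:wtp}, valid, and then to contradict its validity using the fact that the context never changes.

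First I would check that $\gamma$ is fair. Recalling \cref{def:rank,def:fc}, each choice along $\gamma$ satisfies $\rankof{Q_i} = 1 + \min\set{\rankof{Q_i^1},\rankof{Q_i^2}} = 1 + \rankof{\fc{Q_i^1}{Q_i^2}} = 1 + \rankof{Q_{i+1}}$. If some $Q_i$ had finite rank $n$, this identity would force $\rankof{Q_{i+k}} = n-k$ for every $k \le n$, so that $Q_{i+n}$ would be a choice of rank $0$; this is impossible since the rank of a choice is at least $1$. Hence every $Q_i$ has infinite rank, so $\gamma$ traverses no finitely-ranked choice and is fair by \cref{def:fair-branch}. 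By \cref{def:wtp} the branch $\gamma$ must then be valid, so by \cref{def:valid-branch} some suffix of $\gamma$ carries a non-stationary $\tnu$-thread $(S_k)_{k \ge j}$ with each $S_k$ in the range of the context at position $k$.

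The contradiction comes from the constancy of the context. Every $S_k$ lies in the range of $\Context$, a fixed finite set of types whose addresses have bounded length. A genuine step $S_k \tred S_{k+1}$ with $S_k \ne S_{k+1}$ can only be one of $(\FormulaF \star \FormulaG)_\addressA \tred \FormulaF_{\addressA l}$, $(\FormulaF \star \FormulaG)_\addressA \tred \FormulaG_{\addressA r}$, or $(\sigma\X.\Formula)_\addressA \tred (\Formula\subst{\sigma\X.\Formula}\X)_{\addressA i}$, each of which strictly extends the address. Therefore the address length is non-decreasing along the thread and strictly increases at every non-stationary step; being bounded, it is eventually constant, after which all steps are stationary. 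Thus any thread whose types are confined to the range of a single context is eventually stationary, contradicting the existence of a non-stationary $\tnu$-thread supporting $\gamma$. This contradiction shows $\depth P \ne \infty$, that is $\depth P \in \Nat$.

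I expect the main obstacle to be this last step: making precise that a constant context cannot support a non-stationary thread. The crux is the observation that $\tred$ strictly lengthens addresses on every real step, so within the finite, address-bounded range of a single context no thread can keep unfolding forever. The fairness check, by contrast, is a direct computation resting on the rank identity $\rankof{Q_i} = 1 + \rankof{Q_{i+1}}$.
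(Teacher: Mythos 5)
Your proof is correct and takes essentially the same route as the paper's: both assume $\depth{P} = \infty$, extract an infinite branch of choices with constant context $\Context$, use the rank identity $\rankof{Q_i} = 1 + \rankof{Q_{i+1}}$ to conclude that every traversed choice has infinite rank (hence the branch is fair), and then contradict the validity required by well-typedness because a constant context cannot support a non-stationary $\tnu$-thread. The only difference is presentational: the paper asserts this last contradiction in one line, whereas you spell out the underlying reason (genuine $\tred$ steps strictly lengthen addresses, which are bounded in the fixed range of $\Context$), which is a faithful elaboration rather than a different argument.
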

\begin{proof}
    Suppose that $\depth{P} = \infty$. Then the derivation for $\qtp\Context{P}$
    has an infinite branch $\gamma = (\qtp\Context{P_i})_{i\in\omega}$ solely
    consisting of choices such that $\rankof{P_{i+1}} < \rankof{P_i}$ for every
    $i\in\omega$. Therefore, $\rankof{P_i} = \infty$ for every $i\in\omega$ and
    $\gamma$ is fair.
    From the hypothesis that $P$ is well typed we deduce that $\gamma$ is also
    valid (\cref{def:valid-branch}), namely it has a non-stationary
    $\tnu$-thread. This contradicts the fact that the contexts in $\gamma$ are
    all equal to $\Context$.
    We conclude that $\depth{P} \in \Nat$.
\end{proof}

Now we define a function $\Resolve\cdot$ on well-typed processes to statically
and fairly resolve all the choices of a process.

\begin{definition}
    \label{def:resolve}
    Let $\Resolve\cdot$ be the function on well-typed processes such that
    $\Resolve{\Choice{P}{Q}} = \Resolve{\fc{P}{Q}}$ and extended
    homomorphically to all the other process forms.
\end{definition}

The fact that $\Resolve{P}$ is uniquely defined when $P$ is well typed is a
consequence of \cref{lem:depth}. Indeed, any branch of $P$ that contains
infinitely many choices also contains infinitely many forms other than choices.
Note that the range of $\Resolve\cdot$ only contains zero-ranked processes.

The next two results prove that $\Resolve{P}$ is well typed if so is $P$.

\begin{lemma}
    \label{lem:resolved-quasi-typed}
    If $\wtp\Context{P}$ then $\qtp\Context{\Resolve{P}}$.
\end{lemma}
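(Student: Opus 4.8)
The plan is to prove the statement by coinduction on the quasi-typing rules, using as candidate the set of judgments
\[
    \mathcal{R} \eqdef \set{ \qtp\Context{\Resolve{P}} \mid \wtp\Context{P} }.
\]
Since quasi typing is coinductively derivable, by the coinduction principle it suffices to show that $\mathcal{R}$ is closed backwards under the rules of \cref{tab:typing-rules}: every judgment in $\mathcal{R}$ must be the conclusion of a rule instance all of whose premises are again in $\mathcal{R}$. So I would fix $\wtp\Context{P}$ and aim to exhibit such a rule instance deriving $\qtp\Context{\Resolve{P}}$.

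The key preliminary fact I would establish is that well-typedness is inherited by sub-derivations: if $\wtp\Context{P}$ and $\qtp{\Context'}{P'}$ is a premise of the last rule applied, then $\wtp{\Context'}{P'}$. The reason is that every fair infinite branch of the sub-derivation rooted at $\qtp{\Context'}{P'}$ extends, by prepending the single judgment $\qtp\Context{P}$, to a fair infinite branch of the derivation of $P$; both fairness (\cref{def:fair-branch}) and validity (\cref{def:valid-branch}) depend only on the tail of a branch, hence are preserved in both directions. Since the branches of $P$ are valid by hypothesis, those of the sub-derivation are valid too. In particular, each branch of a well-typed $\Choice{P_1}{P_2}$ is well typed in the same context, so by \cref{def:fc} the fair choice $\fc{P_1}{P_2}$ is well typed in $\Context$ as well.

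To build the required rule instance I would first strip the leading block of choices. By \cref{lem:depth} we have $\depth{P}\in\Nat$, so I would proceed by induction on $\depth{P}$ to reduce to the case where $P$ is not a choice: if $P=\Choice{P_1}{P_2}$ then $\Resolve{P}=\Resolve{\fc{P_1}{P_2}}$ by \cref{def:resolve}, the process $\fc{P_1}{P_2}$ is well typed in $\Context$ by the previous paragraph, and $\depth{\fc{P_1}{P_2}}<\depth{P}$ by the remark after \cref{def:depth}, so the induction hypothesis applies. This shows $\Resolve{P}=\Resolve{R}$ for some $R$ well typed in $\Context$ whose outermost constructor is not a choice. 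I would then analyze the shape of $R$. When $R$ is an axiom ($\Link\x\y$, $\Close\x$, or $\Fail\x$) the matching rule (\refrule\LinkRule, \refrule[close]\CloseRule, \refrule[fail]\FailRule) has no premises and applies directly. In every other case $\Resolve$ acts homomorphically, so $\Resolve{R}$ has the same outermost constructor as $R$ with $\Resolve$ pushed onto its immediate subterms; by inversion of the corresponding typing rule these subterms are quasi typed in suitable contexts and, by the inheritance fact, actually well typed there. Re-applying the same rule (\refrule\CutRule for $\Cut\x{R_1}{R_2}$, and likewise \refrule[fork]\ForkRule, \refrule[join]\JoinRule, \refrule[case]\CaseRule, \refrule[select]\SelectRule, \refrule[wait]\WaitRule, \refrule[rec]\RecRule, \refrule[corec]\CorecRule for the remaining forms) yields $\qtp\Context{\Resolve{P}}$ with premises of the form $\qtp{\Context_j}{\Resolve{R_j}}$ where $\wtp{\Context_j}{R_j}$, hence all in $\mathcal{R}$.

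The delicate point, and the main obstacle, is the interplay between the finitary induction on $\depth{\cdot}$ and the coinduction: \cref{lem:depth} guarantees that only finitely many choices are traversed before a genuine constructor is reached, which is what makes both $\Resolve$ and the inner induction well behaved, while the coinduction carries the argument through the (possibly infinite) structure below that constructor. The crux that legitimizes the whole descent — both into the fairly-chosen branch and into the subterms — is precisely the inheritance observation that taking a sub-derivation preserves fairness and validity of all infinite branches; once that is in place, closure of $\mathcal{R}$ follows as above and the coinduction principle delivers $\qtp\Context{\Resolve{P}}$ for every $\wtp\Context{P}$.
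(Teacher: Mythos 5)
Your proposal is correct and follows essentially the same route as the paper's proof: coinduction on the candidate set $\set{\qtp\Context{\Resolve{P}} \mid \wtp\Context{P}}$, combined with an inner induction on $\depth{P}$ (justified by \cref{lem:depth}) to strip non-deterministic choices, and case analysis on the outermost constructor with inversion of the typing rules. The only difference is presentational: you spell out the inheritance of well-typedness by sub-derivations as an explicit preliminary fact, which the paper uses implicitly when it inverts \refrule\CutRule and \refrule\ChoiceRule to obtain $\wtp{\Context_i, x:S_i}{P_i}$ and $\wtp\Context{\fc{P_1}{P_2}}$ — a harmless (indeed clarifying) addition, since by \cref{def:branch} branches may start anywhere in a derivation, so fairness and validity transfer immediately.
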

\begin{proof}
    \newcommand\rrel{\mathcal{R}}
    We apply the coinduction principle to show that every judgment in the set
    \[
        \rrel \eqdef \set{\qtp\Context{\Resolve{P}} \mid
        \wtp\Context{P}}
    \]
    is the conclusion of a rule in \cref{tab:typing-rules} whose premises are
    also in $\rrel$.
    Let $\qtp\Context{Q} \in \rrel$. Then $Q = \Resolve{P}$ for some $P$ such
    that $\wtp\Context{P}$.
    From \cref{lem:depth} we deduce that $\depth{P} \in \Nat$.
    We reason by induction on $\depth{P}$ and by cases on the shape of $P$ to
    show that $\qtp\Context{Q}$ is the conclusion of a rule in
    \cref{tab:typing-rules} whose premises are also in $\rrel$. We only discuss
    a few cases, the others being similar or simpler.
    
    \begin{description}
        \item[Case $P = \Cut\x{P_1}{P_2}$.]
        Then $Q = \Resolve{P} = \Cut\x{\Resolve{P_1}}{\Resolve{P_2}}$.
        From \refrule{\CutRule} we deduce that there exists $\Context_1$,
        $\Context_2$, $S_1$ and $S_2$ such that $\wtp{\Context_i, x : S_i}{P_i}$
        for $i=1,2$ and $\Context = \Context_1, \Context_2$ and $S_1 =
        \dual{S_2}$.
        Then $\qtp{\Context_i, x : S_i}{\Resolve{P_i}} \in \rrel$ by definition
        of $\rrel$ and we conclude observing that $\qtp\Context{Q} \in \rrel$ is
        the conclusion of \refrule{\CutRule}.
    
        \item[Case $P = \Choice{P_1}{P_2}$.]
        Then $Q = \Resolve{P} = \Resolve{\fc{P_1}{P_2}}$. From
        \refrule{\ChoiceRule} we deduce $\wtp\Context{\fc{P_1}{P_2}}$. Since
        $\depth{\fc{P_1}{P_2}} < \depth{P} \in \Nat$, we conclude using the
        induction hypothesis.
        \qedhere
    \end{description}
\end{proof}

\begin{lemma}
    \label{lem:resolved-well-typed}
    If $\wtp\Context{P}$ then $\wtp\Context{\Resolve{P}}$.
\end{lemma}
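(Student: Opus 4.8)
The plan is to build on \cref{lem:resolved-quasi-typed}, which already yields $\qtp\Context{\Resolve{P}}$, and to upgrade this to well-typedness by analysing the infinite branches of that derivation. The crucial preliminary observation is that $\Resolve{P}$ contains no non-deterministic choices, so $\rankof{\Resolve{P}} = 0$ and, by \cref{def:fair-branch}, \emph{every} branch of the derivation of $\qtp\Context{\Resolve{P}}$ is fair. By \cref{def:wtp} it therefore suffices to show that every \emph{infinite} branch $\gamma'$ of that derivation is valid in the sense of \cref{def:valid-branch}.

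First I would set up a correspondence between the branches of the two derivations. By the defining equation $\Resolve{\Choice{P_1}{P_2}} = \Resolve{\fc{P_1}{P_2}}$ together with the homomorphic clauses of \cref{def:resolve}, the derivation of $\qtp\Context{\Resolve{P}}$ is obtained from that of $\qtp\Context{P}$ by collapsing every application of \refrule\ChoiceRule and keeping only the premise selected by $\fc{\cdot}{\cdot}$. Each infinite branch $\gamma'$ thus lifts to an infinite branch $\gamma$ of the derivation of $\qtp\Context{P}$ that agrees with $\gamma'$ on all non-choice nodes and follows the fair choice at every choice node; $\gamma$ is infinite because, by \cref{lem:depth}, only finitely many choices separate consecutive non-choice nodes.

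The key step is to argue that $\gamma$ is \emph{fair}, so that the well-typedness of $P$ forces $\gamma$ to be valid. For this I would track the rank along $\gamma$. Inspecting \cref{def:rank}, at every rule other than \refrule\ChoiceRule each premise has rank no larger than the conclusion, and at a choice node the fair premise $\fc{P_1}{P_2}$ — the one followed by $\gamma$ — has rank $\min\set{\rankof{P_1},\rankof{P_2}}$, which is one less than $\rankof{\Choice{P_1}{P_2}}$ when the latter is finite. Hence the rank is non-increasing along $\gamma$ and strictly decreases at every \emph{finitely-ranked} choice. Consequently, once the rank becomes finite (say equal to $r \in \Nat$) it can sustain at most $r$ further finitely-ranked choices, while before that point every choice has rank $\infty$ and so is not finitely-ranked; either way $\gamma$ traverses finitely many finitely-ranked choices and is fair. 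Since $P$ is well typed and $\gamma$ is fair and infinite, $\gamma$ is valid: it carries a non-stationary $\tnu$-thread $t = (S_k)_{k \geq j}$ with $S_k$ in the range of the $k$-th context.

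Finally I would transport $t$ back along the correspondence to a valid thread of $\gamma'$. The point is that an application of \refrule\ChoiceRule leaves the context unchanged, so at a choice node the contexts of consecutive judgments coincide; because the types in a context have pairwise disjoint addresses whereas every non-reflexive $\tred$ step strictly extends the address, the thread step across a choice node must be the reflexive one, i.e.\ $S_k = S_{k+1}$. Deleting these stationary steps yields a thread $t'$ of $\gamma'$ with the same stutter-free reduction as $t$; hence $\InfOften{t'} = \InfOften{t}$, the thread $t'$ is again non-stationary, and $\minf\InfOften{t'}$ is the same $\tnu$-formula, so $t'$ is a $\tnu$-thread. Thus $\gamma'$ is valid, and since this holds for every infinite branch we conclude $\wtp\Context{\Resolve{P}}$. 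I expect the main obstacle to be the rank bookkeeping of the third step — in particular, verifying that following the fair choices really makes the lifted branch fair despite the presence of $\infty$ ranks — rather than the essentially address-theoretic thread transfer of the last step.
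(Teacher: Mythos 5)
Your proof is correct and follows essentially the same route as the paper's: obtain quasi-typedness from \cref{lem:resolved-quasi-typed}, note that every branch of the derivation for $\Resolve{P}$ is fair since it traverses no choices, lift each infinite branch to a fair-choice-following branch of the derivation for $P$, argue that lifted branch is fair by a rank-decrease argument (the paper's terser version: infinitely many choices with strictly decreasing ranks force all of them to be $\infty$-ranked), and transport the witnessing $\tnu$-thread back using the fact that \refrule\ChoiceRule leaves contexts unchanged. Your additional observation that the thread steps across choice nodes must be reflexive (by address disjointness) is a detail the paper leaves implicit, but it does not change the structure of the argument.
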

\begin{proof}
    Using \cref{lem:resolved-quasi-typed} we deduce $\qtp\Context{\Resolve{P}}$.
    Now, consider an infinite branch $\gamma$ in the derivation for
    $\qtp\Context{\Resolve{P}}$ and observe that $\gamma$ is necessarily fair,
    since it does not traverse any choice.
    The branch $\gamma$ corresponds to another infinite branch $\gamma'$ in the
    derivation for $\qtp\Context{P}$ which always makes fair choices whenever it
    traverses a process of the form $\Choice{P_1}{P_2}$. The only differences
    between $\gamma$ and $\gamma'$ are the judgments for the choices in $P$ that
    have been resolved. Nonetheless, all of the contexts that occur in $\gamma'$
    also occur in $\gamma$ because \refrule{\ChoiceRule} does not affect typing
    contexts.
    If $\gamma'$ traverses infinitely many choices, then $\gamma'$ traverses
    infinitely many processes with strictly decreasing ranks, which must all be
    $\infty$. Therefore, $\gamma'$ is fair.
    Since $P$ is well typed we know that $\gamma'$ is valid. Then, the
    $\tnu$-thread that witnesses the validity of $\gamma'$ corresponds to a
    $\tnu$-thread that witnesses the validity of $\gamma$.
    We conclude that $\Resolve{P}$ is well typed.
\end{proof}

\lemconservativity*
\begin{proof}
    By \cref{lem:resolved-well-typed} we deduce
    $\wtp{x_1:S_1,\dots,x_n:S_n}{\Resolve{P}}$. Since $\rankof{\Resolve{P}} =
    0$, there are no applications of \refrule{\ChoiceRule} in the derivation for
    $\qtp{x_1:S_1,\dots,x_n:S_n}{\Resolve{P}}$. That is, this derivation
    corresponds to a \muMALL derivation and every infinite branch in it is fair.
    Since $\Resolve{P}$ is well typed, we deduce that every infinite branch in
    this derivation is valid. That is, $\vdash S_1, \dots, S_n$ is derivable in
    \muMALL.
\end{proof}

The key property of zero-ranked processes that are well typed in a context $x :
\One$ is that they are weakly terminating and they eventually reduce to
$\Close\x$.

\begin{lemma}
    \label{lem:zero-rank-weak-termination}
    If $\wtp{x:\One}{P}$ and $\rankof{P} = 0$ then $P \wred \Close\x$.
\end{lemma}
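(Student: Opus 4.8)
The plan is to reduce the statement to the cut-elimination theorem of \muMALL. First I would observe that, since $\rankof{P} = 0$, the process $P$ contains no non-deterministic choice at all: any choice subterm would force $\rankof{P}\geq 1$ through the clause $\rankof{\Choice{P_1}{P_2}} = 1 + \min\set{\rankof{P_1},\rankof{P_2}}$, and the remaining clauses (sum, maximum, addition, or plain preservation) never bring the rank of a subterm back below that of its components. Consequently \refrule{r-choice} can never fire and, by the conservativity lemma proved just above (applied with the single formula $\One$), the typing derivation of $\wtp{x:\One}{P}$ is directly a valid \muMALL proof of $\vdash \One$. The goal $P \wred \Close\x$ then amounts to replaying, at the level of \piLIN processes, the bottom-up cut-elimination procedure for this proof until its lowermost rule stabilises.

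Next I would set up the correspondence between the two calculi. The cut-elimination procedure of \muMALL distinguishes \emph{principal} reductions, which consume the topmost connectives of two dual cut formulas, from \emph{commutative} steps, which merely permute a cut upwards past a logical rule. In \piLIN the principal reductions are exactly \refrule{r-link}, \refrule{r-unit}, \refrule{r-pair}, \refrule{r-sum} and \refrule{r-rec}, whereas the commutative steps carry no computational content and are absorbed into structural pre-congruence, specifically into re-associations of parallel composition via \refrule{s-assoc}. The bridge between the two is the proximity lemma (\cref{lem:proximity}): whenever the bottom-most cut in the derivation is on a channel $x'$ and the two sub-processes performing dual actions on $x'$ are not yet siblings of that cut, I would rearrange $P$ up to $\pcong$ so that they become the immediate children of a cut on $x'$, at which point the matching principal rule fires and yields $P \red P'$. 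Subject reduction (\cref{thm:subject-reduction}) guarantees $\wtp{x:\One}{P'}$, so the argument can be iterated, with each commutative step folded (through \refrule{r-struct}) into the $\pcong$ preceding the next principal $\red$.

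The hard part will be termination: showing that only finitely many principal reductions are needed before the lowermost rule becomes the $\One$-rule. This is precisely the content of the cut-elimination theorem of \muMALL, and it is the step where the validity condition is indispensable---without it a rank-$0$ but ill-typed process such as the compulsive buyer of \cref{ex:typing-compulsive-buyer} would loop forever through \refrule{r-rec} and \refrule{r-sum}. Because $\rankof{P} = 0$ makes every infinite branch fair, our validity condition coincides with that of \muMALL, so its termination argument (productivity of bottom-up cut elimination, supported by $\tnu$-threads) applies unchanged; each of its principal steps I would realise by a single $\red$ as above, and each of its commutative steps by a finite $\pcong$ rearrangement that performs no reduction. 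I would therefore structure the proof as a well-founded induction on the same measure used in the \muMALL cut-elimination proof, discharging each case by translating the corresponding proof-rewriting step into \piLIN.

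Finally I would close the loop. Once the reduction has exposed a cut-free lowermost rule, the judgment it derives is still of the form $\wtp{x:\One}{\cdot}$ with no choices available, and since $\One$ is positive and the context holds the single channel $x$, the only cut-free, choice-free process typable there is $\Close\x$ itself; as $\Close\x$ is $\pcong$-related only to itself, this yields exactly $P \wred \Close\x$.
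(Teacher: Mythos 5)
Your proposal is correct and takes essentially the same route as the paper's proof: $\rankof{P}=0$ excludes non-deterministic choices, so every infinite branch is fair and the typing derivation is a valid \muMALL proof of $\vdash \One$; the proximity lemma (\cref{lem:proximity}) then plays the role of \muMALL's multicut, each principal cut-elimination step is mirrored by a \piLIN reduction, \muMALL cut elimination yields termination, and uniqueness of the cut-free proof of $\vdash \One$ forces the final stuck process to be $\Close\x$. The paper differs only on two minor points: it first expands links away (using admissibility of the axiom in \muMALL) rather than counting \refrule{r-link} as a principal step, thereby sidestepping the fact that \refrule{\LinkRule} relates types with unrelated addresses, and it invokes cut elimination as a black-box convergence (productivity) result rather than as a ``well-founded induction on a measure'', which the infinitary \muMALL argument does not actually provide.
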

\begin{proof}
    Without loss of generality we may assume that $P$ does not contain links.
    Indeed, the axiom \refrule{\LinkRule} is admissible in
    \muMALL~\cite[Proposition 10]{BaeldeDoumaneSaurin16}, so we may assume that
    links have been expanded into equivalent (choice-free) processes. We just
    note that, if the statement holds for link-free processes, then it holds
    also in the case $P$ does contain links, the only difference being that the
    sequence of reductions that are needed to fully eliminate all the cuts
    resulting from an expanded link are replaced by a single occurrence of
    \refrule{r-link}.

    From the hypothesis that $P$ has rank 0 we deduce that $P$ does not contain
    non-determimnistic choices and the derivation of $\qtp{x:\One}{P}$ does not
    contain applications of the \refrule{\ChoiceRule} rule. So, every infinite
    branch of this derivation is fair. From the hypothesis that every infinite
    fair branch of this derivation is valid we deduce that every infinite branch
    of this derivation is valid. Then this derivation corresponds to a \muMALL
    proof of $\vdash \One$.
    Observe that every principal reduction rule of \muMALL \cite[Figure
    3.2]{Doumane17} corresponds to a reduction rule of \piLIN
    (\cref{tab:semantics}).
    Also, \cref{lem:proximity} guarantees that, whenever there are two processes
    in which the same channel name $x$ occurs free we are always able to rewrite
    the process using structural precongruence so that the two sub-processes are
    the children of a cut on $x$.
    Finally, the cut elimination result for \muMALL is proved by reducing
    bottom-most cuts, meaning that principal reductions (also called
    \emph{internal reductions}~\cite{Doumane17}) are only applied at the bottom
    of a \muMALL proof.
    Therefore, each step in which a principal reduction is applied in the cut
    elimination result of \muMALL can be mimicked by a reduction of \piLIN.
    From the cut elimination result for \muMALL \cite[Proposition
    3.5]{Doumane17} we deduce that there exists no (fair) infinite sequence of
    principal reductions. That is, there exists a stuck $Q$ such that $P \wred
    Q$. From \cref{lem:quasi-subject-reduction} we deduce $\qtp{x:\One}{Q}$.
    Since the only cut-free \muMALL proof of $\vdash \One$ consists of a single
    application of \refrule[close]{\CloseRule}, and since this proof corresponds
    to the proof that $Q$ is quasi typed, we conclude that $Q = \Close\x$.
\end{proof}

\subsection{Proof of Theorem \ref{thm:soundness}}

The last auxiliary result for proving \cref{lem:weak-termination} is to show
that $P$ weakly simulates $\Resolve{P}$, namely that every reduction of a
process $\Resolve{P}$ can be mimicked by one or more reductions of $P$.

\begin{lemma}
    \label{lem:resolved-red}
    If $\wtp\Context{P}$ and $\Resolve{P} \red Q$ then $P \wred R$ for some $R$
    such that $Q = \Resolve{R}$.
\end{lemma}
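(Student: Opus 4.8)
The plan is to proceed by induction on the derivation of the reduction $\Resolve{P} \red Q$, exploiting the fact that $\Resolve{P}$ is choice-free (the range of $\Resolve{\cdot}$ consists of zero-ranked processes). Consequently the step $\Resolve{P} \red Q$ is never an instance of \refrule{r-choice}: it is a principal reduction (\refrule{r-link}, \refrule{r-unit}, \refrule{r-pair}, \refrule{r-sum}, or \refrule{r-rec}) possibly wrapped by \refrule{r-cut} and \refrule{r-struct}. Before casing on the last rule I would reduce to the situation in which $P$ is not a top-level choice: since $\depth{P}\in\Nat$ by \cref{lem:depth}, finitely many applications of \refrule{r-choice}, each firing a leading $\Choice{P_1}{P_2}$ towards $\fc{P_1}{P_2}$, yield $P \wred \hat{P}$ with $\hat{P}$ not a choice and $\Resolve{\hat{P}} = \Resolve{P}$; solving the statement for $\hat{P}$ then solves it for $P$. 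With $\hat{P}$ not a top-level choice, $\Resolve{\hat{P}}$ has the same head construct as $\hat{P}$ and $\Resolve{\cdot}$ acts homomorphically on every non-choice form, which is exactly what lets me trace the redex back into $\hat{P}$.

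For the principal cases, $\Resolve{\hat{P}} = \Cut\x{A}{B}$ with $A,B$ the two complementary prefixes of the rule, so $\hat{P} = \Cut\x{P_1}{P_2}$ with $\Resolve{P_1}=A$ and $\Resolve{P_2}=B$. Each interacting $P_i$ is again a finite chain of leading choices ending in a non-choice term whose head, by homomorphy of $\Resolve{\cdot}$, must be precisely the prefix $A$ (resp. $B$); resolving those choices with \refrule{r-choice} (finitely many by \cref{lem:depth}, applied inside the cut via \refrule{r-cut} and \refrule{r-struct}) brings $\hat{P}$ to $\Cut\x{A_0}{B_0}$ with $A_0,B_0$ prefix-headed and $\Resolve{A_0}=A$, $\Resolve{B_0}=B$. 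Firing the same principal rule then produces the desired $R$; that $\Resolve{R}=Q$ follows because $\Resolve{\cdot}$ commutes with cut formation and with the capture-avoiding substitution $\subst\y\x$ used by \refrule{r-link} (rank, hence $\fc{\cdot}{\cdot}$ and $\Resolve{\cdot}$, is invariant under renaming). The \refrule{r-cut} case is routine: with $\hat{P} = \Cut\x{P_1}{P_2}$ the premise is a strictly smaller reduction $\Resolve{P_1}\red C'$, the induction hypothesis gives $P_1 \wred R_1$ with $\Resolve{R_1}=C'$, and lifting these reductions through the cut yields $R = \Cut\x{R_1}{P_2}$.

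The delicate case, which I expect to be the main obstacle, is \refrule{r-struct}, where $\Resolve{\hat{P}} \pcong R' \red Q$ and the principal redex becomes available only after rearranging cuts by \refrule{s-comm}, \refrule{s-assoc}, and \refrule{s-link}. The difficulty is that $\Resolve{\cdot}$ can expose cuts guarded by choices in $\hat{P}$: a sub-term $\Cut\y{C}{D}$ sitting at top level in $\Resolve{\hat{P}}$ may appear in $\hat{P}$ underneath a choice, so the congruence step available on the resolved side does not literally apply to $\hat{P}$. The plan is to first resolve, with \refrule{r-choice}, exactly the finitely many (by \cref{lem:depth}) leading choices guarding the cuts touched by the rearrangement, obtaining $\hat{P} \wred P^{\star}$ with $\Resolve{P^{\star}} = \Resolve{\hat{P}}$ and a cut structure matching that of $\Resolve{\hat{P}}$; then I replay the very same \refrule{s-comm}/\refrule{s-assoc}/\refrule{s-link} moves on $P^{\star}$ to get $P^{\star}\pcong P'$ with $\Resolve{P'}=R'$. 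The side conditions of \refrule{s-assoc} transfer because, for well-typed processes, $\Resolve{\cdot}$ preserves free names (both branches of a choice are typed in the same context, so $\fc{\cdot}{\cdot}$ cannot drop a name); this is essentially the proximity reasoning of \cref{lem:proximity}. Applying the induction hypothesis to the strictly smaller premise $R'\red Q$ at $P'$, which is well typed by \cref{thm:subject-reduction} and the invariance of branch validity under $\pcong$, yields $R$ with $P'\wred R$ and $\Resolve{R}=Q$; since $R'\red Q$ is a non-trivial reduction we have $\Resolve{R}=Q\neq R'=\Resolve{P'}$, hence $P'\neq R$ and the sequence $P'\wred R$ is nonempty, so folding the intermediate $\pcong$ via \refrule{r-struct} gives $P \wred \hat{P} \wred P^{\star} \pcong P' \wred R$, that is $P \wred R$, as required.
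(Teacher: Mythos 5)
Your proof takes essentially the same route as the paper's: the paper argues by double induction on $\depth{P}$ (via \cref{lem:depth}) and on the derivation of $\Resolve{P} \red Q$, discharges top-level choices exactly as you do---by firing \refrule{r-choice} towards $\fc{P_1}{P_2}$ and observing $\Resolve{P} = \Resolve{\fc{P_1}{P_2}}$---and dismisses every remaining case as straightforward by the homomorphy of $\Resolve{\cdot}$ (it cases on the last typing rule rather than on the reduction, a cosmetic difference). Your principal and \refrule{r-cut} cases coincide with this, and your \refrule{r-struct} analysis is more detailed than anything the paper writes down.

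Two steps in that \refrule{r-struct} case are, however, not justified as stated. First, $\Resolve{\cdot}$ does \emph{not} preserve free names of well-typed processes: \refrule[fail]{\FailRule} allows the typing context to strictly contain the free names, so both branches of $\Choice{\Fail\x}{\Fail\y}$ are typable in $x:\Top, y:\Top$, yet the resolution drops $y$. Only the inclusion $\fn{\Resolve{Q}} \subseteq \fn{Q}$ holds; it does yield the positive side condition $x\in\fn{Q}$ of \refrule{s-assoc}, while the negative condition $y\not\in\fn{P}$ must instead come from the disjointness of contexts in \refrule{\CutRule} or from alpha-renaming, so this point is repairable. Second, and more substantively, your conclusion needs the sequence $P' \wred R$ to be nonempty so that the trailing $P^\star \pcong P'$ can be absorbed by \refrule{r-struct}, and you obtain nonemptiness from the assertion that $R' \red Q$ forces $R' \neq Q$. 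For coinductively generated processes this is not automatic: one-step self-loops exist syntactically, for example $\Cut\x{A}{B}$ with $A = \Rec\x.A$ and $B = \Corec\x.B$ reduces to itself by \refrule{r-rec}. Such towers are ill typed (their fair infinite branches unfold only $\tmu$-formulas or carry only stationary threads), but excluding one-step self-loops for every choice-free resolution of a well-typed process is an extra lemma you would have to prove. Alternatively, you can sidestep the issue by weakening the induction statement to $P \wred\pcong R$ with $Q = \Resolve{R}$; this costs nothing downstream, since in \cref{lem:resolved-reds} and in the proof of \cref{lem:weak-termination} a trailing precongruence is vacuous at the final state ($R \pcong \Close\x$ implies $R = \Close\x$, as none of the precongruence rules applies to $\Close\x$).
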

\begin{proof}
    From \cref{lem:depth} we deduce $\depth{P} \in \Nat$. We proceed by double
    induction on $\depth{P}$ and on the derivation of $\Resolve{P} \red Q$ and
    by cases on the last rule applied in the derivation of $\qtp\Context{P}$.
    Most cases are straightforward since the structure of $\Resolve{P}$ and that
    of $P$ only differ for non-deterministic choices, so we only discuss the
    case \refrule{\ChoiceRule} in which $P = \Choice{P_1}{P_2}$. Then
    $\qtp\Context{\fc{P_1}{P_2}}$ and $\Resolve{P} = \Resolve{\fc{P_1}{P_2}}
    \red Q$. Recall that $\depth{\fc{P_1}{P_2}} < \depth{P} \in \Nat$. Using the
    induction hypothesis we deduce $\fc{P_1}{P_2} \wred R$ for some $R$ such
    that $Q = \Resolve{R}$. We conclude by observing that $P \red \fc{P_1}{P_2}
    \wred R$.
\end{proof}

\cref{lem:resolved-red} can be easily generalized to arbitrary sequences of
reductions.

\begin{lemma}
    \label{lem:resolved-reds}
    If $\wtp\Context{P}$ and $\Resolve{P} \wred Q$ then $P \wred R$ for some $R$
    such that $Q = \Resolve{R}$.
\end{lemma}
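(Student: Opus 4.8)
The plan is to derive this from the single-step version \cref{lem:resolved-red} by induction on the length of the reduction sequence $\Resolve{P} \wred Q$. Since $\wred$ is the reflexive, transitive closure of $\red$, I would first write this sequence as a finite chain $\Resolve{P} = Q_0 \red Q_1 \red \cdots \red Q_n = Q$ and argue by induction on $n$. The one point that requires care is that \cref{lem:resolved-red} may only be applied to a \emph{well-typed} process, so I must keep the well-typedness hypothesis available at each intermediate process reached from $P$; this is exactly what subject reduction (\cref{thm:subject-reduction}) provides.

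In the base case $n = 0$ we have $Q = \Resolve{P}$, so taking $R \eqdef P$ gives $P \wred R$ by reflexivity of $\wred$ and $Q = \Resolve{R}$ immediately. For the inductive step I would split the chain as $\Resolve{P} \wred Q' \red Q$, where the prefix $\Resolve{P} \wred Q'$ has length $n - 1$. The induction hypothesis then yields some $R'$ with $P \wred R'$ and $Q' = \Resolve{R'}$. To continue I need $\wtp\Context{R'}$, which follows from the hypothesis $\wtp\Context{P}$ and the reduction $P \wred R'$ by applying subject reduction along each individual $\red$ step of that sequence. With $\wtp\Context{R'}$ established and the single step $\Resolve{R'} = Q' \red Q$ at hand, I can invoke \cref{lem:resolved-red} to obtain an $R$ such that $R' \wred R$ and $Q = \Resolve{R}$. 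Composing the two reduction sequences gives $P \wred R' \wred R$, hence $P \wred R$, which together with $Q = \Resolve{R}$ discharges the goal.

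The main (and essentially the only) obstacle is the invariant that every process reachable from $P$ remains well typed, so that \cref{lem:resolved-red} keeps applying at every step; this is precisely the role of subject reduction, and once that preservation is threaded through the induction the remaining reasoning is routine. I therefore expect no further complications, since the statement is just the iterated form of \cref{lem:resolved-red}.
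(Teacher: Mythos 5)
Your proof is correct and follows essentially the same route as the paper, which proves this lemma by a straightforward induction on the number of reductions in $\Resolve{P} \wred Q$ using \cref{lem:resolved-red}. The one detail you make explicit---threading well-typedness through the induction via \cref{thm:subject-reduction} so that \cref{lem:resolved-red} remains applicable at each intermediate process---is left implicit in the paper's one-line proof but is indeed the right way to discharge that hypothesis.
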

\begin{proof}
    A simple induction on the number of reductions in $\Resolve{P} \wred Q$
    using \cref{lem:resolved-red}.
\end{proof}

\lemweaktermination*
\begin{proof}
    Using \cref{lem:resolved-well-typed} we deduce $\wtp{x:\One}{\Resolve{P}}$.
    Using \cref{lem:zero-rank-weak-termination} we deduce $\Resolve{P} \wred
    \Close\x$. Using \cref{lem:resolved-reds} and
    \cref{lem:quasi-subject-reduction} we conclude $P \wred \Close\x$.
\end{proof}

\thmsoundness*
\begin{proof}
    By induction on the number of reductions in $P \wred Q$, from the hypothesis
    $\wtp{x : \One}{P}$ and \cref{thm:subject-reduction} we deduce $\wtp{x :
    \One}{Q}$. We conclude using \cref{lem:weak-termination}.
\end{proof}

\corfairtermination*
\begin{proof}
    Immediate consequence of \cref{thm:soundness,thm:fair-termination}.
\end{proof}

\section{Decidability of Proof Validity}
\label{sec:proof-decidability}

In this section we sketch an algorithm for deciding our notion of proof validity
(\cref{def:wtp}). Clearly, this result holds assuming that the typing derivation
for a given process is ``sufficiently regular'', namely that it is representable
as a \emph{circular derivation} consisting of finitely many distinct sub-trees.
This requires the introduction of a \emph{renaming rule}~\cite{Doumane17} to
account for the fact that addresses in types increase monotonically as they are
unfolded along branches of a derivation. In this case, the decidability of the
standard validity condition for \muMALL has been established by
Doumane~\cite[Chapter 2]{Doumane17}. The algorithm relies on two word automata,
a B\"uchi automaton $M$ that accepts all the infinite branches in the graph
representing the circular proof, and a parity automaton $N$ accepting only the
valid ones. The validity of the proof is then reduced to checking that $L(M)
\subseteq L(N)$. The difference between valid \muMALL proofs and valid \piLIN
derivations is that we only want to check this inclusion considering the
\emph{fair branches} in a \piLIN derivation, so decidability boils down to the
ability of recognizing these branches.
Considering that the fair branches are those that traverse finitely-ranked
choices finitely many times, we build a B\"uchi automaton $U$ accepting all the
unfair branches, those that traverse finitely-ranked choices infinitely many
times. The construction of $U$ is analogous to that of $M$ \cite{Doumane17},
with the difference that the set of accepting states coincides with the set of
finitely-ranked choices. From $M$ and $U$ and the fact that $\omega$-regular
languages are closed by complementation and intersection it is now possible to
construct a B\"uchi automaton that accepts $L(M) \setminus L(U)$. This language
consists of all the infinite, fair branches. Now, the validity of the \piLIN
derivation boils down to checking the inclusion $(L(M)\setminus L(U)) \subseteq
L(N)$, which is a decidable problem.

\section{Supplement to Section \ref{sec:language}}
\label{sec:extra-language}

\subsection{Finite Representation of Regular Processes}
\label{sec:infinite-processes}

An alternative, possibly more conventional way of representing infinite but
regular processes consists of considering the form $\Call{A}{x_1,\dots,x_n}$ as
being part of the syntax of processes. Then, one adds a structural precongruence
rule $\defrule{s-call}$ such that
\[
    \Call{A}{x_1,\dots,x_n} \pcong P
    \text{\qquad if \qquad}
    A(x_1,\dots,x_n) = P
\]
so that occurrences of $\Call{A}{x_1,\dots,x_n}$ can be unfolded into their
definition.
The typing rule for regular processes is the following:
\[
    \inferrule{
        \qtp{x_1:S_1,\dots,x_n:S_n}P
    }{
        \qtp{x_1:S_1,\dots,x_n:S_n}{\Call{A}{x_1,\dots,x_n}}
    }
    ~\defrule{call}
\]

With these changes, the only difference in the presented results concerns the
formulation of \cref{thm:soundness}, since a process $\Call{A}{x}$ when $A(x) =
\Close\x$ does \emph{not} reduce and yet it is not \emph{syntactically equal} to
$\Close\x$. The solution is to slightly weaken the statement of the theorem
allowing the use of structural pre-congruence for rewriting the final state of
$Q$:

\begin{theorem}
    If $\wtp{x : \One}{P}$ and $P \wred Q$ then $Q \wred\pcong \Close\x$.
\end{theorem}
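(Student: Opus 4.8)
The plan is to follow the structure of the proof of \cref{thm:soundness}, re-establishing its two ingredients --- subject reduction and weak termination --- in the extended calculus equipped with the \refrule{call} typing rule and the \refrule{s-call} precongruence rule, and then combining them. The only place where the argument genuinely changes is the very last step of weak termination, where a stuck cut-free process need no longer be \emph{syntactically} equal to $\Close\x$ but only structurally precongruent to it; this is exactly what forces the weaker conclusion $Q \wred\pcong \Close\x$.

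First I would check that subject reduction still holds. The reduction rules of \cref{tab:semantics} are unchanged, so the only new obligation is to extend quasi subject congruence (\cref{lem:quasi-subject-congruence}) with a case for \refrule{s-call}. This case is immediate: if $\Call{A}{x_1,\dots,x_n} \pcong P$ because $A(x_1,\dots,x_n) = P$, then by the premise of the \refrule{call} rule the two sides are quasi typed in the same context. Since \refrule{s-call} introduces no non-deterministic choice and does not alter typing contexts, it leaves ranks and the thread/validity analysis of \cref{def:valid-branch,def:fair-branch} untouched, so the validity of every fair infinite branch is preserved exactly as in the proof of \cref{thm:subject-reduction}.

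Second I would adapt the weak termination lemma. Repeating the development of \cref{lem:resolved-well-typed,lem:zero-rank-weak-termination,lem:weak-termination}, I resolve all choices with $\Resolve\cdot$ and reduce the resulting zero-ranked process by the cut-elimination argument of \muMALL, mimicking each principal reduction by a \piLIN reduction (now possibly preceded by \refrule{s-call} unfoldings, which are absorbed by \refrule{r-struct} into $\wred$). Cut elimination terminates on a stuck process $Q$ with $\qtp{x:\One}{Q}$ whose cut-free quasi-typing derivation is, after erasing the \refrule{call} applications, the \muMALL proof of $\vdash\One$. In the original calculus this forces $Q = \Close\x$; here the derivation may instead conclude with \refrule{call} whose premise is derived by \refrule[close]{\CloseRule}, so $Q$ is a call that unfolds to $\Close\x$ via \refrule{s-call} and we obtain only $Q \pcong \Close\x$. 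Hence the revised statement reads: if $\wtp{x:\One}{P}$ then $P \wred\pcong \Close\x$. With these two results in hand, the theorem follows as in the proof of \cref{thm:soundness}: by induction on the number of reductions in $P \wred Q$, the extended subject reduction gives $\wtp{x:\One}{Q}$, and the revised weak termination then yields $Q \wred\pcong \Close\x$.

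The main obstacle is the weak-termination step: I must ensure that cut elimination still terminates when infinite processes are handled through the finite, call-based representation rather than as infinite trees. This is fine because a \refrule{s-call} unfolding produces precisely the process obtained by one step of unfolding the underlying infinite tree, and it is exactly this unfolding that exposes the prefixes hidden behind a call so that a principal reduction can fire. Consequently the principal reductions --- and hence the termination guarantee inherited from \muMALL's cut elimination --- are in exact correspondence with those of the original calculus, the \refrule{s-call} steps being pure bookkeeping that $\wred$ absorbs.
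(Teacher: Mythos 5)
Your proposal is correct and takes essentially the same approach the paper intends: the paper gives no explicit proof of this theorem, only the remark that the development of \cref{thm:soundness} carries over verbatim once the final state may be rewritten by structural precongruence (a call whose definition unfolds to $\Close\x$), and your proof fleshes out exactly that argument. The two points you identify --- the new \rulename{s-call} case in subject congruence (immediate by inversion on the \rulename{call} rule, which leaves contexts, ranks and branch validity untouched) and the weakened conclusion $Q \pcong \Close\x$ at the end of the cut-elimination argument --- are precisely the adaptations the paper's remark anticipates.
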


\subsection{On the Definition of Rank}
\label{sec:extra-rank}

Here we provide a more detailed justification of the fact that the notion of
rank (\cref{def:rank}) is well defined. As we have already observed, the value
of $\rankof{P}$ only depends on the structure of $P$, but not on the actual
names occurring in $P$. As a consequence, when $P$ is defined by means of a
\emph{finite} system of equations, the value of $\rankof{P}$ too can be
determined by a \emph{finite} system of equations, even though $P$ is not a
regular tree strictly speaking.

As an example, consider the definition of $\Buyer$ found in
\cref{ex:buyer-seller}. In order to compute $\rankof{\Call\Buyer\x}$ we consider
the system of equations
{
    \renewcommand{\x}{\bullet}
    \renewcommand{\y}{\bullet}
    \renewcommand{\z}{\bullet}
    \begin{align*}
        \rankof{\Call\Buyer\x} & = \rankof{
            \Choice{
                \Select[\z]\AddTag\y.\Call\Buyer\z
            }{
                \Select[\z]\PayTag\y.\Close\z
            }
        }
        \\
        \rankof{
            \Choice{
                \Select[\z]\AddTag\y.\Call\Buyer\z
            }{
                \Select[\z]\PayTag\y.\Close\z
            }
        } & = 1 + \min\set{
            \rankof{\Select[\z]\AddTag\y.\Call\Buyer\z},
            \rankof{\Select[\z]\PayTag\y.\Close\z}
        }
        \\
        \rankof{\Select[\z]\AddTag\y.\Call\Buyer\z} & = \rankof{\Call\Buyer\z}
        \\
        \rankof{\Select[\z]\PayTag\y.\Close\z} & = \rankof{\Close\z}
        \\
        \rankof{\Close\z} & = 0
    \end{align*}
}
where we have used a placeholder $\bullet$ in place of every channel name
occurring in these terms.

\newcommand{\fun}{\mathcal{F}}

Every such system of equations can be thought of as a function $\fun :
\parens\RankSet^n \to \parens\RankSet^n$ on the complete lattice
$\parens\RankSet^n$ ordered by the pointwise extension of $\leq$ in $\RankSet$.
Note that $\fun$ is monotone, because all the operators occurring in the
definition of rank (\cref{def:rank}) are monotone. So, $\fun$ has a least fixed
point by the Knaster-Tarski theorem and the rank of $P$ is the component of this
fixed point that corresponds to $\rankof{P}$.

For the above system of equations we have $n = 5$ and we have
\[
    \fun(x_1,x_2,x_3,x_4,x_5) = (x_2,1+\min\set{x_3,x_4},x_1,x_5,0)
\]
whose least solution is $(1,1,1,0,0)$. Now $\rankof{\Call\Buyer\x}$ corresponds
to the first component of this solution, that is $1$.

\section{Typing Derivations for Example \ref{ex:slot-machine}}
\label{sec:extra-soundness}

Below is the typing derivation showing that $\Player$ is quasi typed. For
convenience, we define $\Formula' \eqdef (\Formula \branch \Formula) \choice
\One$. Note that $\Formula'$ is the unfolding of $\Formula$ defined in
\cref{ex:slot-machine}.

\[
    \begin{prooftree}
        \[
            \[
                \[
                    \[
                        \mathstrut\smash\vdots
                        \justifies
                        \qtp{
                            x : \FormulaF_{aill}
                        }{
                            \Call\Player\x
                        }
                    \]
                    \qquad
                    \[
                        \[
                            \[
                                \justifies
                                \qtp{
                                    x : \One
                                }{
                                    \Close\x
                                }
                                \using\refrule[close]{\CloseRule}
                            \]
                            \justifies
                            \qtp{
                                x : \Formula'_{ailri}
                            }{
                                \Select\QuitTag\x.\Close\x
                            }
                            \using\refrule[select]{\SelectRule}
                        \]
                        \justifies
                        \qtp{
                            x : \FormulaF_{ailr}
                        }{
                            \Rec\x.\Select\QuitTag\x.\Close\x
                        }
                        \using\refrule[rec]{\RecRule}
                    \]
                    \justifies
                    \qtp{
                        x : (\FormulaF \branch \FormulaF)_{ail}
                    }{
                        \Case\x{
                            \Call\Player\x
                        }{
                            \Rec\x.\Select\QuitTag\x.\Close\x
                        }    
                    }
                    \using\refrule[case]{\CaseRule}
                \]
                \justifies
                \qtp{
                    x : \Formula'_{ai}
                }{
                    \Select\PlayTag\x.
                    \Case\x{
                        \Call\Player\x
                    }{
                        \Rec\x.\Select\QuitTag\x.\Close\x
                    }
                }
                \using\refrule[select]{\SelectRule}
            \]
            \[
                \[
                    \justifies
                    \qtp{
                        x : \One
                    }{
                        \Close\x
                    }
                    \using\refrule[close]{\CloseRule}
                \]
                \justifies
                \qtp{
                    x : \Formula'_{ai}
                }{
                    \Select\QuitTag\x.\Close\x
                }
                \using\refrule[select]{\SelectRule}
            \]
            \justifies
            \qtp{
                x : \Formula'_{ai}
            }{
                \Choice{
                    \Select\PlayTag\x.
                    \Case\x{
                        \Call\Player\x
                    }{
                        \Rec\x.\Select\QuitTag\x.\Close\x
                    }
                }{
                    \Select\QuitTag\x.\Close\x
                }
            }
            \using\refrule{\ChoiceRule}
        \]
        \justifies
        \qtp{
            x : \Formula_a
        }{
            \Call\Player\x
        }
        \using\refrule[rec]{\RecRule}
    \end{prooftree}
\]

Below is the typing derivation showing that $\Machine$ is quasi typed. For
convenience, we define $\FormulaG' \eqdef \FormulaG \choice \FormulaG$ and we
abbreviate $\Machine$ to $M$.

\[
    \renewcommand{\Machine}{M}
    \begin{prooftree}
        \[
            \[
                \[
                    \[
                        \mathstrut\smash\vdots
                        \justifies
                        \qtp{
                            x : \FormulaG_{bill},
                            y : \One
                        }{
                            \Call\Machine{x,y}
                        }
                    \]
                    \justifies
                    \qtp{
                        x : \FormulaG'_{bil},
                        y : \One        
                    }{
                        \Select\WinTag\x.\Call\Machine{x,y}
                    }
                    \using\refrule[select]\SelectRule
                \]
                \[
                    \[
                        \mathstrut\smash\vdots
                        \justifies
                        \qtp{
                            x : \FormulaG_{bilr},
                            y : \One            
                        }{
                            \Call\Machine{x,y}
                        }
                    \]
                    \justifies
                    \qtp{
                        x : \FormulaG'_{bil},
                        y : \One        
                    }{
                        \Select\LoseTag\x.\Call\Machine{x,y}
                    }
                    \using\refrule[select]\SelectRule
                \]
                \justifies
                \qtp{
                    x : \FormulaG'_{bil},
                    y : \One    
                }{
                    \Choice{
                        \Select\WinTag\x.\Call\Machine{x,y}
                    }{
                        \Select\LoseTag\x.\Call\Machine{x,y}
                    }
                }
            \]
            \hspace{-1em}
            \[
                \[
                    \justifies
                    \qtp{
                        y : \One
                    }{
                        \Close\y
                    }
                    \using\refrule[close]{\CloseRule}
                \]
                \justifies
                \qtp{
                    x : \Bot,
                    y : \One
                }{
                    \Wait\x.\Close\y    
                }
                \using\refrule[wait]{\WaitRule}
            \]
            \justifies
            \qtp{
                x : (\FormulaG' \branch \Bot)_{bi},
                y : \One
            }{
                \Case\x{
                    \Choice{
                        \Select\WinTag\x.\Call\Machine{x,y}
                    }{
                        \Select\LoseTag\x.\Call\Machine{x,y}
                    }
                }{
                    \Wait\x.\Close\y
                }            
            }
            \using\refrule[case]{\CaseRule}
        \]
        \justifies
        \qtp{
            x : \FormulaG_b,
            y : \One
        }{
            \Call\Machine{x,y}
        }
        \using\refrule[corec]{\CorecRule}
    \end{prooftree}
\]

\newcommand{\Sender}{\textit{Sender}}
\newcommand{\Receiver}{\textit{Receiver}}
\newcommand{\LeafTag}{\mktag{leaf}}
\newcommand{\BranchTag}{\mktag{branch}}

\section{Encoding of a Context-Free Protocol}
\label{sec:context-free-protocol}

In this appendix we illustrate the encoding of a \emph{context-free
communication protocol} in \piLIN. A context-free protocol (or session type)
cannot be described accurately by plain session types in which recursion is
allowed in tail position only. A paradigmatic example of such a protocol is that
describing the (de)serialization of a binary
tree~\cite{ThiemannVasconcelos16,Padovani19}. Below we model two processes that
exchange a binary tree.
\[
    \begin{array}{@{}r@{~}c@{~}l@{}}
        \Sender(x) & = &
        \Rec\x.(\Choice{
            \Select\LeafTag\x.\Close\x
        }{
            \Select\BranchTag\x.
            \Fork[z]\x\y{
                \Call\Sender\y
            }{
                \Call\Sender\z
            }
        })
        \\
        \Receiver(x,y) & = &
        \Corec\x.
        \Case\x{
            \Wait\x.\Close\y
        }{
            \Join[z]\x\y.
            \Cut{u}{
                \Call\Receiver{y,u}
            }{
                \Wait{u}.\Call\Receiver{z,y}
            }
        }
    \end{array}
\]

The $\Sender$ performs a non-deterministic choice corresponding to the structure
of the tree to be communicated: if the tree is empty (that is a $\LeafTag$), it
sends the unit indicating that no further data will be transmitted; if the tree
is not empty (that is a $\BranchTag$), it sends a pair of channels along which
the left and right subtrees are communicated in parallel. The $\Receiver$
behaves in a dual way. After receiving $\LeafTag$, it signals the completion of
the communication by sending the unit on $y$. After receiving $\BranchTag$, it
first receives the left subtree followed by the right subtree. The sequencing
between these two phases is enforced by the communication on the new channel
$u$.

It is easy to see that $\Sender$ uses $x$ according to the formula $\FormulaF
\eqdef \tmu\X.\One \choice (X \tfork X)$ and that $\Receiver$ uses $x$ and $y$
according to $\FormulaG \eqdef \tnu\X.\Bot \branch (X \tjoin X)$ and $\One$,
respectively. In particular, the $\tjoin$ connective in $\FormulaG$ plays the
same role of sequential composition in context-free session
types~\cite{ThiemannVasconcelos16}.

\end{document}